\documentclass[final,journal,10pt,doublecolumn]{IEEEtran}

\usepackage{amsmath,amsfonts}
\usepackage{algorithmic}
\usepackage{algorithm}
\usepackage{array}
\usepackage[caption=false,font=normalsize]{subfig}
\usepackage{textcomp}
\usepackage{stfloats}
\usepackage{url}
\usepackage{verbatim}
\usepackage{graphicx}
\usepackage{cite}

\usepackage{amssymb}
\usepackage{bm, bbm}
\usepackage{mathtools}
\usepackage{commath}
\usepackage{stmaryrd}
\usepackage{faktor}

\usepackage[shortlabels]{enumitem}

\usepackage{color}

\makeatletter
\newcommand{\vast}{\bBigg@{4}}
\newcommand{\Vast}{\bBigg@{5}}
\makeatother

\newcommand{\suchthat}{\; \colon \;}
\newcommand{\given}{\mid}
\newcommand{\mgiven}{\;\middle\vert\;}

\newcommand{\1}{\mathbbm{1}}
\newcommand{\comp}{\mathsf{c}} 
\newcommand{\error}{\mathrm{error}}

\renewcommand{\bar}{\overline}

\newcommand{\Ttilt}{\mathfrak{T}}

\renewcommand{\sp}{\mathrm{sp}}

\renewcommand{\emptyset}{\varnothing}
\renewcommand{\hat}{\widehat}
\renewcommand{\tilde}{\widetilde}

\renewcommand{\d}{\mathrm{d}}

\newcommand{\Bin}{\mathsf{Bin}}

\newcommand{\rCcal}{\mathsf{C}}

\newcommand{\Gsf}{\mathsf{G}}
\newcommand{\Msf}{\mathsf{M}}

\newcommand{\Ssf}{\mathsf{S}}

\newcommand{\E}{\mathbb{E}}

\newcommand{\N}{\mathbb{N}}

\newcommand{\R}{\mathbb{R}}

\newcommand{\Pbb}{\mathbb{P}}

\newcommand{\Acal}{\mathcal{A}}
\newcommand{\Bcal}{\mathcal{B}}
\newcommand{\Ccal}{\mathcal{C}}

\newcommand{\Ecal}{\mathcal{E}}

\newcommand{\Mcal}{\mathcal{M}}
\newcommand{\Pcal}{\mathcal{P}}
\newcommand{\Rcal}{\mathcal{R}}
\newcommand{\Scal}{\mathcal{S}}
\newcommand{\Ucal}{\mathcal{U}}
\newcommand{\Tcal}{\mathcal{T}}
\newcommand{\Vcal}{\mathcal{V}}

\newcommand{\xv}{\pmb{x}}
\newcommand{\yv}{\pmb{y}}
\newcommand{\zv}{\pmb{z}}

\newcommand{\Xv}{\pmb{X}}
\newcommand{\Yv}{\pmb{Y}}
\newcommand{\Zv}{\pmb{Z}}

\newcommand{\Var}{\mathsf{Var}}

\DeclareMathOperator{\dotle}{\,\dot{\le}\,}
\DeclareMathOperator{\dotge}{\,\dot{\ge}\,}
\DeclareMathOperator*{\argmax}{arg\,max}

\DeclareMathOperator{\supp}{supp}

\newtheorem{theorem}{Theorem}
\newtheorem{lemma}[theorem]{Lemma}
\newtheorem{proposition}[theorem]{Proposition}

\newtheorem{definition}{Definition}
\newtheorem{remark}{Remark}

\newtheorem{assumption}{Assumption}
\begin{document}

\title{Mismatched Exponents for Deterministic and Randomised Noise-Guessing Decoding}

\author{Henrique~K.~Miyamoto,~\IEEEmembership{Graduate Student Member,~IEEE},
		Richard~Combes~\IEEEmembership{Member,~IEEE},
		and
		Sheng~Yang,~\IEEEmembership{Member,~IEEE}
	\thanks{The authors are with Université Paris-Saclay, CNRS, CentraleSupélec, Laboratoire des Signaux et Systèmes~(L2S), 91190, Gif-sur-Yvette, France (e-mail: henrique.miyamoto@centralesupelec.fr; richard.combes@centralesupelec.fr; sheng.yang@centralesupelec.fr).}%
	\thanks{A preliminary version of this work has been presented at the 2026 IEEE International Symposium on Information Theory (ISIT 2026).}%
}

\maketitle

\begin{abstract}
	We study both the deterministic and randomised variants of noise-guessing decoding in additive memoryless channels.
	The error and complexity exponents of such decoding schemes are analysed under mismatched decoding metrics, and then specialised to matched, $\alpha$-tilted, and universal decoding metrics.
	The $\alpha$-tilted metric is proportional to the $\alpha$-th power ($\alpha>0$) of the true noise distribution.
	In deterministic decoding, the tilting operation does not affect the performance: all these metrics are equivalent to the matched one ($\alpha=1$), and are optimal for both average error and complexity.
	On the other hand, in randomised decoding, the matched metric is not optimal for complexity exponents; we show that the decoder needs to tune the parameter~$\alpha$ according to the code rate in order to simultaneously achieve both optimal exponents using a decoding metric in that family.
	Finally, a universal decoding metric based on the empirical entropy of the noise sequence achieves both optimal exponents, independently of the channel law and uniformly over code rates, for the deterministic and randomised variants.
\end{abstract}

\begin{IEEEkeywords}
	Additive channels, error exponents, guesswork, mismatched decoding, universal decoding.
\end{IEEEkeywords}

\section{Introduction} \label{sec:introduction}

A possible approach to decoding in discrete additive channels is to try to identify the noise sequence~$\Zv$ that the channel adds to the channel input~$\Xv$, so as to subtract it from the received sequence~$\Yv$ (given by $\Yv=\Xv+\Zv$) and recover the original codeword. Such a strategy was introduced in~\cite{duffy2019} with the guessing random additive noise decoding~(GRAND) algorithm, and has received attention especially in practical implementations over continuous channels (e.g.,~\cite{an2022,duffy2022-tsp,duffy2022-tcomm}).
The idea of this approach is to query noise sequences in decreasing order of their (true) probability, until finding one that, when subtracted from the received sequence, yields a valid codeword. This is indeed an implementation of maximum likelihood~(ML) decoding through \emph{deterministic guessing}~\cite{massey1994,arikan1996}.
If another, potentially sub-optimal, criterion is used to determine the querying order of noise sequences, we talk about mismatched guessing decoding. We refer to these schemes that query noise sequences in a fixed order as \emph{deterministic noise-guessing decoding}.

Another strategy for noise-guessing decoding consists in replacing deterministic guessing by \emph{randomised guessing}~\cite{hanawal2010,boztas2012,huleihel2017,salamatian2019,merhav2020}. In this case, noise sequences are drawn at random, according to a certain distribution (mismatched or not), and the decoder checks if, subtracted from the received sequence, they correspond to valid codeword.
This replaces the work of enumerating noise sequences in a given order by that of sampling according to a prescribed distribution. This strategy is referred to as \emph{randomised noise-guessing decoding}, and can be seen as the noise-guessing analogue of randomised or stochastic decoding~\cite{yassaee2013,scarlett2015,merhav2017,liu2017,bhatt2018}. Such a scheme was studied in~\cite{miyamoto-2025-itw}, in combination with a universal guessing strategy that does not depend on the noise distribution.

There are two important figures of merit in noise-guessing decoding schemes: in addition to the probability of decoding error, the decoding complexity---defined as the average number of queries needed to find a codeword (correct or not)---emerges as a relevant quantity that controls the computational complexity of the decoding scheme~\cite{duffy2019}. Accordingly, one can analyse the error exponent~\cite{gallager1968,csiszar2011}, which quantifies the exponential rate of decay of the error probability with the code block-length, and the complexity exponent, which assesses the exponential rate of increase of the average complexity with the block-length\footnote{
	An alternative way of assessing the complexity of noise-guessing schemes is to consider an abandonment option after a threshold on the number of queries is exceeded~\cite{duffy2019} and analyse the decoding performance with a given abandonment rate~\cite{jouhed2024,tan2025}. This provides an upper bound on the exponent of the average complexity, or a worst-case complexity exponent. Instead of that, we focus here on the exponent of the average complexity.
}. The latter is related to the guessing exponents~\cite{arikan1996} (see Section~\ref{subsec:variation-guessing}). Both exponents depend, in general, on the decoding metric used by the decoder.

The error and complexity exponents of deterministic noise-guessing decoding with matched decoding metric (which corresponds to the GRAND algorithm) were analysed in~\cite{duffy2019}. A universal version of deterministic noise-guessing decoding was proposed in~\cite{jouhed2024} and shown to achieve the same error exponent.
In~\cite{miyamoto-2025-itw}, universal deterministic and randomised noise-guessing schemes were proposed, shown to achieve the same error exponent as with matched decoding metric, and to have their complexity exponents bounded by the same exponent of~\cite{duffy2019}.
While this suggests that optimal error and complexity exponents can be achieved with universal decoding metrics, it has so far remained unclear if (and how) optimality of both exponents can be simultaneously achieved in randomised noise-guessing decoding with a decoding metric that explicitly uses the true noise distribution (as we will see, the perhaps obvious choice of employing the matched decoding metric has poor performance in this case).

The main result of this work is the derivation of the random-coding error and complexity exponents of deterministic and randomised noise-guessing decoding with general mismatched decoding metrics, and the subsequent specialisation of these results to the family of $\alpha$-tilted decoding metrics and to a universal decoding metric.

The $\alpha$-tilted decoding metrics are proportional to the $\alpha$-th power ($\alpha>0$) of the true noise probability, and thus are, in general, mismatched decoding metrics. The choice $\alpha=1$ corresponds to the matched decoding metric. Since the tilting operation is monotonic, it does not affect the performance of deterministic noise-guessing decoding---but this is not the case with randomised noise-guessing decoding.
Randomised decoding using this type of tilted distributions, known as $\alpha$-likelihood decoders, has been studied in~\cite{liu2017} in general discrete memoryless channels (not necessarily additive, and not with guessing decoding) in conjunction with the notion of $\alpha$-decodability.
Tilted distributions also naturally appear in problems related to guessing: randomised guessing with tilting of order $\alpha=1/(1+\rho)$ has been shown to optimise a quantity related to the $\rho$-th guessing moments~\cite{huleihel2017,salamatian2019}; and, in deterministic guessing, the family of $\alpha$-tilted distributions has been characterised as precisely those that share the same optimal guessing strategy~\cite{beirami2019}.

At first glance, one could think that there exists a trade-off between error and complexity exponents when using the $\alpha$-tilted decoding metrics. For instance, randomised decoding with the true noise distribution ($\alpha=1$) is known to have optimal error exponent~\cite{scarlett2015,merhav2017,liu2017}, but randomised guessing with the same distribution has a poor guessing exponent~\cite{hanawal2010}. On the other hand, the optimal randomised strategy for classical guessing is to use $\alpha=1/2$~\cite{hanawal2010,boztas2012,huleihel2017,salamatian2019}, which, in turn, corresponds to a mismatched decoder that may incur a loss in the error exponent, in general~\cite{scarlett2015}.
In this work, a closer look and careful analysis of the exponents obtained with $\alpha$-tilted decoding metrics reveal that in fact there is no trade-off, and that it is possible to achieve the optimal error and complexity exponents with a judiciously chosen value of the parameter~$\alpha$. Remarkably, this value is not constant and turns out to depend on the rate of the employed code.

We then consider a universal decoding metric that is based on empirical entropy of the noise sequence. The analysis of the exponents shows that, when used in either deterministic or randomised decoding, this metric yields optimal error and complexity exponents, without depending on the code rate nor on the actual channel distribution.
This is interesting because it reveals, in randomised noise-guessing decoding, some advantage of the universal decoder over the one that insists in using the noise distribution, insofar the former does not need to adjust to the code rate, let alone the channel law.
Previously, achievability of the optimal error exponent~\cite{jouhed2024,miyamoto-2025-itw} and an upper bound on the complexity exponent with this decoding metric have been reported~\cite{miyamoto-2025-itw}; here, in addition to those, we deduce the exact complexity exponent, which is shown to be optimal.

In the process of our analysis, we also: show that, despite being analysed separately, error and complexity exponents can be simultaneously achieved by a sequence of codes (Lemma~\ref{lem:simultaenous-exponents}); prove that deterministic noise-guessing decoding with matched metric is non-asymptotically the optimal strategy in terms of average complexity (Lemma~\ref{lem:optimal-guessing-strategy}); provide a non-asymptotic expression for the average complexity in deterministic noise-guessing, tight up to a constant factor (Lemma~\ref{lem:non-asymptotic-bounds-qd-n}); and present, in many cases, explicit expressions for the exponents in different regimes. Our analyses rely on the method of types and on the type class enumeration method~\cite{merhav2025}.

Error exponents of classical mismatched decoding has been studied in both deterministic~\cite{scarlett2020} and randomised~\cite{scarlett2015,merhav2017} decoding, not necessarily with the noise-guessing implementation. Universal decoding strategies in discrete point-to-point to channels have been considered in~\cite{goppa1975a,csiszar2011,ziv1985,feder1998,merhav2017}.
Classical guessing with mismatched criteria has been studied in~\cite{sundaresan2007b,salamatian2019b}, including universal strategies in~\cite{arikan1998,sundaresan2007,merhav2020}.

The remainder of this paper is organised as follows.
Section~\ref{sec:preliminaries} introduces notation, formalises the problem, and makes a connection with classical guessing.
The error and complexity exponents of deterministic noise guessing decoding are studied in Section~\ref{sec:deterministic-guessing}, and of the randomised counterpart in Section~\ref{sec:randomised-guessing}.
An illustrative numerical example is presented in Section~\ref{sec:numerical-results}.
The paper is concluded in Section~\ref{sec:conclusion}.
To improve readability, proofs are deferred to the appendices, and Appendix~\ref{app:properties-tilted-distributions} contains preliminary results used in those.

\section{Preliminaries} \label{sec:preliminaries}

\subsection{Notation}
Let $\Acal$ be a finite alphabet, identified with the set $\{0, 1 \dots, |\Acal|-1 \}$.
We denote $\zv \coloneqq z_1^n \coloneqq z_1 \cdots z_n \in \Acal^n$ a length-$n$ sequence over finite alphabet~$\Acal$.
Random variables are usually (unless otherwise indicated) denoted in upper case (e.g.,~$\Zv$), and the corresponding realisations in lower case (e.g.,~$\zv$).
Given a sequence $\zv\in\Acal^n$, its ($n$-)type is denoted $\hat{P}_{\zv}$. We let $\mathcal{P}(\Acal)$ denote the set of distributions over~$\Acal$, and $\mathcal{P}_n(\Acal)$ the set of $n$-types over~$\Acal$. For a type $\hat{P}_Z\in\Pcal_n(\Acal)$, $\Tcal_n(\hat{P}_Z)$ denotes its type class.

The entropy of a random variable $Z \sim P_Z$ is denoted $H(P_Z)$. Its Rényi entropy of order $\alpha$, $\alpha>0$, $\alpha\neq1$, is $H_{\alpha}(P_Z) \coloneqq (1-\alpha)^{-1} \log \sum_{z\in\Acal} P_{Z}(z)^{\alpha}$;
its extensions include $H_{0}(P_Z) \coloneqq \log | \supp (P_Z) |$, $H_1(P_Z) \coloneqq H(P_Z)$, and $H_{\infty}(P_Z) \coloneqq -\log \max_{z\in\Acal} P_Z(z)$.
The cross-entropy between distributions $P$ and $Q$ is $H(P\| Q) \coloneqq D(P \| Q) + H(P)$.
Denote $d(p\|q)$ the binary divergence between $p,q\in\left[0,1\right]$.

Given a distribution $P_Z \in \Pcal(\Acal)$, we denote $\tilde{P}_{\alpha} \in \Pcal(\Acal)$ the \emph{$\alpha$-tilted distribution}, that is, the one given by
\begin{equation} \label{eq:def-alpha-tilt}
	\tilde{P}_{\alpha}(z) = \frac{P_Z(z)^{\alpha}}{\sum_{z'\in\Acal} P_Z(z')^{\alpha}}, \quad \forall z \in \Acal.
\end{equation}
We denote $\Ttilt(P_Z) \coloneqq \big\{ \tilde{P}_{\alpha} \in\Pcal(\Acal) \suchthat \alpha\in\R \big\}$ the tilted family of~$P_Z$.

For two positive sequences $(a_n)_{n\in\N}$ and $(b_n)_{n\in\N}$, we denote $a_n \doteq b_n$, if $\lim_{n\to\infty} (1/n)\log (a_n/b_n) = 0$, and $a_n \dotle b_n$ (or $b_n \dotge a_n$), if $\limsup_{n\to\infty} (1/n)\log (a_n/b_n) \le 0$. We denote $\left[x\right]_+ \coloneqq \max\left\{x,\,0\right\}$, for $x\in\R$. We let $\R_+ \coloneqq \left[0,+\infty\right[$ and $\R_-\coloneqq \left]-\infty,0\right]$. The indicator function $\1_{A}(x)$ takes value $1$ if $x\in A$, and 0 otherwise. Logarithms are to the natural basis and information quantities are measured in nats, except in the numerical examples (Section~\ref{sec:numerical-results}), where base $2$ and bits are used.

\subsection{Problem Setup}

Consider an additive memoryless channel over~$\Acal$: when the transmitter inputs sequence $\Xv \coloneqq X_1^n \in \Acal^n$ to the channel, the receiver observes $\Yv \in \Acal^n$, given by
\begin{equation}
	\Yv = \Xv + \Zv,
\end{equation}
where $\Zv \in \Acal^n$ is a noise sequence independent of $\Xv$, and addition is modulo-$|\Acal|$ coordinate-wise. The channel being memoryless, the probability that the channel produces noise sequence $\zv \coloneqq z_1^n \in \Acal^n$ is
\begin{equation}
	P_{\Zv}(\zv) = \prod_{i=1}^{n} P_{Z}(z_i).
\end{equation}
The channel law is then $P_{\Yv \given \Xv}(\yv \given \xv) = P_{\Zv}(\yv - \xv)$; its capacity (per channel use) is
\begin{equation} \label{eq:channel-capacity}
	C \coloneqq C(P_Z) \coloneqq \log|\Acal| - H(P_Z),
\end{equation}
and its critical rate is
\begin{equation} \label{eq:critical-rate}
	R_c \coloneqq R_c(P_Z) \coloneqq \log|\Acal| - H(\tilde{P}_{1/2}).
\end{equation}

\begin{assumption} \label{ass:Pz-non-extreme}
	We suppose that $P_Z$ is not the uniform distribution nor deterministic\footnote{
		These two extreme cases are of limited interest, as the corresponding capacities are $0$ and $\log|\Acal|$, respectively.
	}.
\end{assumption}

The transmitter uses a code $\Ccal \coloneqq \left( \xv_1, \dots, \xv_M \right) \subseteq \Acal^n$ of block-length~$n$ and rate $R = (\log M)/n$. It chooses a message\footnote{
	The associated random variable is denoted $\Msf$, while $M$ represents the number of messages.
} $m \in \{ 1, \dots, M \}$ with uniform distribution, that is, $P_{\Msf}(m)=1/M$, and encodes it as codeword $\xv_{m} \in \Ccal$.

We consider decoders characterised by a decoding metric $u_n \colon \Acal^n \to \R_+$. Having observed $\yv\in\Acal^n$, a \emph{deterministic} decoder chooses
\begin{equation} \label{eq:deterministic-decoder}
	\hat{m}(\yv)
	=
	\argmax_{m\in\Mcal} u_n(\yv-\xv_m),
\end{equation}
with ties broken uniformly at random,
while a \emph{randomised} decoder draws a message $\hat{\Msf}$ at random, according to the distribution
\begin{equation} \label{eq:randomised-decoder}
	Q_{\hat{\Msf} \given \Yv}(m \given \yv)
	=
	\frac{u_n(\yv-\xv_m)}{\sum_{m'=1}^{M} u_n(\yv-\xv_{m'})}.
\end{equation}

\subsection{Decoding by Noise Guessing}
\label{subsec:decoding-noise-guessing}

We consider decoder implementations by noise guessing. In this paradigm~\cite{duffy2019}, the decoder tries to guess the noise sequence that has affected the original input sequence: having received $\yv\in\Acal^n$, it sequentially produces candidate noise sequences $\zv \in \Acal^n$, and queries whether $\yv-\zv$ belongs to the codebook, until a positive answer is found. When that happens, the corresponding message is declared the decoded one.

In \emph{deterministic noise guessing}, the decoder ranks the noise sequences~$\zv\in\Acal^n$ in decreasing order of the decoding metric~$u_n$, and sequentially queries them in that order. Ties are broken at random: if two or more sequences have the same decoding metric, they are ordered according to an ordering chosen uniformly at random. The decoding metric thus induces a (possibly random) ranking function
\begin{equation} \label{eq:ranking-function-u}
	\Gsf_u \colon \Acal^n \to \{ 1, \dots, |\Acal|^n \}
\end{equation}
such that $\Gsf_u(\zv) < \Gsf_u(\zv') \implies u_n(\zv) \ge u_n(\zv')$. We denote $\Gsf_u$ the random ranking function (with respect to random tie breaks) and $G_u$ a deterministic realisation.
This is indeed an implementation of~\eqref{eq:deterministic-decoder}: if the decoder chooses message $m$, then $G_u(\yv-\xv_{m}) < G_u(\yv-\xv_{m'})$ for any $m' \neq m$, implying that $u_n(\yv-\xv_m) \ge u_n(\yv-\xv_{m'})$.

In decoding by \emph{randomised noise guessing}, the decoder samples noise sequences according to the distribution
\begin{equation} \label{eq:sampling-distribution-u}
	Q_u(\zv)
	\coloneqq \frac{u_n(\zv)}{\sum_{\zv'\in\Acal^n} u_n(\zv')},
\end{equation}
and tests whether $\yv-\zv$ are valid codewords. This can be seen a rejection sampling method that samples input sequences $\xv' \coloneqq \yv - \zv$, and that rejects the samples if $\xv' \notin \Ccal$. For a fixed code $\Ccal = \left( \xv_1, \dots, \xv_M \right)$ having distinct codewords and fixed sequence~$\yv\in\Acal^n$, the probability of sampling input sequence $\xv'$ is proportional to $Q_u(\yv-\xv') \1_{\Ccal}(\xv')$. The probability of selecting message~$m$ is that of sampling codeword~$\xv_m$, that is,
\begin{align}
	\frac{Q_u(\yv-\xv_m) \1_{\Ccal}(\xv_m)}{\sum_{\xv'\in\Acal^n} Q_u(\yv-\xv') \1_{\Ccal}(\xv')}
	&= \frac{Q_u(\yv-\xv_m)}{\sum_{m'=1}^{M} Q_u(\yv-\xv_{m'})} \nonumber\\
	&= \frac{u_n(\yv-\xv_m)}{\sum_{m'=1}^{M} u_n(\yv-\xv_{m'})}, \label{eq:randomised-distribution-check}
\end{align}
which shows this scheme indeed implements~\eqref{eq:randomised-decoder}.
It is perhaps interesting to remark that, in decoding by randomised noise guessing, the decoder tries to imitate the channel, in trying to generate the same noise effect that has affected the original input sequence. It stops when it generates a noise sequence that could have distorted a codeword into the received sequence~$\yv$, hoping that it is the actual one.

\subsection{Decoding Metrics}

In this work, we are interested in decoding metrics of the form
\begin{equation} \label{eq:u-decoding-metric}
	u_n(\zv) = \frac{e^{ng(\hat{P}_{\zv})}}{\sum_{\zv'\in\Acal^n} e^{ng(\hat{P}_{\zv'})} },
\end{equation}
for some function $g \colon \Pcal(\Acal) \to \R_{-}$. In other words, the decoding metric only depends on the sequence~$\zv\in\Acal^n$ through its type~$\hat{P}_{\zv}$.
With some abuse, the function~$g$ will be also referred to as the decoding metric.
Note that, with the normalised choice~\eqref{eq:u-decoding-metric}, we have $Q_u(\zv) = u_n(\zv)$.

\begin{assumption} \label{ass:g-lim-log}
	We suppose that $g$ is such that
	\begin{align*}
		\lim_{n\to\infty} \frac{1}{n}\log \left( \sum_{\zv'\in\Acal^n} e^{ng(\hat{P}_{\zv'})} \right) = 0.
	\end{align*}
\end{assumption}

\begin{remark}
	By the method of types, we have
	\begin{align*}
		\sum_{\zv'\in\Acal^n} e^{ng(\hat{P}_{\zv'})}
		&= \sum_{\hat{P}_Z \in \Pcal_n(\Acal)} \left| \Tcal_n(\hat{P}_Z) \right| e^{ng(\hat{P}_Z)} \\
		&\doteq \max_{\hat{P}_Z \in \Pcal_n(\Acal)} e^{nH(\hat{P}_Z)} \cdot e^{ng(\hat{P}_Z)} \\
		&\doteq e^{n \left( \max_{\hat{P}_Z\in\Pcal(\Acal)} H(\hat{P}_Z) + g(\hat{P}_Z) \right) },
	\end{align*}
	so the condition in the assumption is equivalent to
	\begin{equation*}
		\max_{\hat{P}_Z\in\Pcal(\Acal)} H(\hat{P}_Z) + g(\hat{P}_Z) = 0.
	\end{equation*}
	Note that Assumption~\ref{ass:g-lim-log} is without loss of optimality, as one can always shift $g$ by a constant so that the condition holds.
\end{remark}

Some particular choices of decoding metrics are presented in the following.

\begin{enumerate}
	\item Mismatched decoding metric: general choices of functions $g \colon \Pcal(\Acal) \to \R_{-}$.
	
	\item $\alpha$-tilted decoding metric: choosing an $\alpha$-tilted distribution
	\begin{equation*}
		u_n(\zv)
		= \prod_{i=1}^{n} \tilde{P}_{\alpha}(z_i),
	\end{equation*}
	as decoding metric. 
	This is equivalent to
	\begin{align} \label{eq:decoding-metric-alpha-tilted}
		g_{\alpha}(\hat{P})
		&\coloneqq -H(\hat{P} \| \tilde{P}_{\alpha}) \nonumber\\
		&= -\alpha H(\hat{P} \| P_Z) - (1-\alpha)H_{\alpha}(P_Z),
	\end{align}
	and coincides with the $\alpha$-likelihood decoder from~\cite{liu2017} for additive channels.
	
	\item Matched decoding metric: special case $\alpha=1$ in~\eqref{eq:decoding-metric-alpha-tilted}, that is,
	\begin{equation} \label{eq:decoding-metric-matched}
		g_1(\hat{P}) = -H (\hat{P} \| P_Z).
	\end{equation}
	The matched decoding metric $u_n(\zv) = P_Z(\zv)$ corresponds to maximum likelihood decoding in deterministic decoding, and the stochastic likelihood decoding~\cite{yassaee2013} in randomised decoding.
	
	\item Universal decoding metric (empirical entropy): decoding metric given by
	\begin{equation} \label{eq:decoding-metric-universal}
		g_H(\hat{P}) = -H(\hat{P}).
	\end{equation}
	This metric is special in two aspects. First, it cannot be decomposed into a product of component-wise independent terms. Second, it does not depend on the actual channel distribution~$P_Z$; yet, as we shall see, it achieves the same performance as the matched one, in the sense of random-coding exponents.
	This metric has been proposed in~\cite{jouhed2024} for universal deterministic noise-guessing decoding and 	in~\cite{csiszar1982} for universal source coding.
	It is the additive-channel analogue of the minimum conditional entropy metric~\cite{ziv1985}, which coincides with the maximum mutual information metric~\cite{goppa1975a,csiszar2011} for constant-composition codewords. 
\end{enumerate}

\subsection{Error and Complexity Exponents}

Two figures of merit are considered in decoding by noise guessing: probability of error and guessing complexity. The former is the probability of the event $\hat{\Msf} \neq \Msf$, and the latter is the average number of queries needed to identify a codeword (correct or not).
For conciseness, we focus the exposition of this subsection on deterministic decoding strategies, but the same applies to randomised decoding (see Remark~\ref{rem:randomised-decoding} ahead).
We denote $p_{e,d}(\Ccal_n;g)$ and $q_d(\Ccal_n;g)$, respectively, the probability of error and the complexity (averaged over uniformly chosen messages and noise realisations) when using a codebook~$\Ccal_n$ of block-length~$n$ and deterministic noise guessing decoding with metric defined by~$g$.

We are interested in the asymptotic behaviour in the regime $n\to\infty$. Specifically, we focus on the random-coding exponents of these quantities. We consider uniform random coding, that is, the codewords are chosen independently with probability $P_{\Xv}(\xv) = 1/|\Acal|^n$.
Denote
\begin{equation} \label{eq:average-error-prob}
	\bar{p}_{e,d}(n,M;g)
	\coloneqq \E\left[ p_{e,d}(\rCcal_n;g) \right]
\end{equation}
and
\begin{equation} \label{eq:average-complexity}
	\bar{q}_d(n,M;g)
	\coloneqq \E\left[ q_d(\rCcal_n;g) \right],
\end{equation}
the probability of error and complexity averaged over random codes~$\rCcal_n$ of size $|\rCcal_n|=M$.
The \emph{random-coding error exponent} at rate~$R$ with deterministic decoding according to the decoding metric given by~$g$ is
\begin{equation} \label{eq:random-coding-error-exponent}
	E_d(R;g)
	\coloneqq \liminf_{n\to\infty} -\frac{1}{n} \log \bar{p}_{e,d} \big(n,\lfloor e^{nR} \rfloor; g \big).
\end{equation}
Similarly, the \emph{random-coding complexity exponent} at rate~$R$ is
\begin{equation} \label{eq:random-coding-complexity-exponent}
	F_d(R;g)
	\coloneqq \limsup_{n\to\infty} \frac{1}{n} \log \bar{q}_{d} \big(n,\lfloor e^{nR} \rfloor; g\big).
\end{equation}
Both exponents are said to be \emph{ensemble-tight} if the limit inferior and superior in \eqref{eq:random-coding-error-exponent} and \eqref{eq:random-coding-complexity-exponent}, respectively, are equal to the respective limits.

The interest of analysing random-coding exponents is to deduce, as a corollary, that there exists a sequence of codes capable of achieving that exponent, an idea that goes back to Shannon~\cite{shannon1948}. When doing the analysis of error and complexity exponents separated, as we shall do, one could ask if it is possible to achieve both exponents simultaneously. The next result formalises an affirmative answer to that question.

\begin{lemma} \label{lem:simultaenous-exponents}
	Fix a rate $R>0$. Let $E_d(R;g)$ and $F_d(R;g)$ be the random-coding error and complexity exponents for decoding metric~$g$. Then, there exists a sequence of codes $(\Ccal_n)_{n\in\N}$ satisfying $\frac{1}{n} \log|\Ccal_n| \le R$ for every $n\in\N$ that, when used with deterministic noise guessing decoding\footnote{
		We suppose here that the decoder generates candidate noise sequences $\zv$ based only on the metric~$u_n$, that only depends on the noise sequences via their types, as presented in the previous subsections. The code is only used in a second step to verify if $\yv-\zv$ is a codeword. This excludes the possibility that the guessing decoder, exploiting knowledge of the code in advance, skips noise sequences that do not correspond to codewords.
	}, simultaneously achieves both the error exponent $E_d(R;g)$ and the complexity exponent $F_d(R;g)$, in the sense that
	\begin{equation} \label{eq:ch2-error-exponent-sequence-Cn}
		\liminf_{n\to\infty} -\frac{1}{n}\log p_{e,d}(\Ccal_n;g) \ge E_d(R;g),
	\end{equation}
	and
	\begin{equation} \label{eq:ch2-complexity-exponent-sequence-Cn}
		\limsup_{n\to\infty} \frac{1}{n}\log q_d(\Ccal_n;g) \le F_d(R;g).
	\end{equation}
\end{lemma}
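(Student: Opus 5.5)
The plan is a standard Markov-inequality argument on the random code ensemble, combined with a union bound to control both figures of merit at once. Fix $\epsilon>0$ and let $M_n=\lfloor e^{nR}\rfloor$. By the definitions~\eqref{eq:random-coding-error-exponent} and~\eqref{eq:random-coding-complexity-exponent}, for all sufficiently large $n$ we have
\[
\bar{p}_{e,d}(n,M_n;g)\le e^{-n(E_d(R;g)-\epsilon)}, \qquad \bar{q}_d(n,M_n;g)\le e^{n(F_d(R;g)+\epsilon)}.
\]
(If $E_d=+\infty$, I would replace $E_d-\epsilon$ by an arbitrary large constant $K$; the argument is unchanged, and a diagonal choice over $K$ finishes.)

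Apply Markov's inequality to the two nonnegative random variables $p_{e,d}(\rCcal_n;g)$ and $q_d(\rCcal_n;g)$ over the random code $\rCcal_n$:
\[
\Pbb\bigl[p_{e,d}(\rCcal_n;g)>3\bar{p}_{e,d}\bigr]<1/3, \qquad \Pbb\bigl[q_d(\rCcal_n;g)>3\bar{q}_d\bigr]<1/3.
\]
By the union bound, with probability at least $1/3$ a random code satisfies both $p_{e,d}(\rCcal_n;g)\le 3\bar{p}_{e,d}$ and $q_d(\rCcal_n;g)\le 3\bar{q}_d$. In particular such a code $\Ccal_n$ exists. It has $M_n\le e^{nR}$ codewords, so the rate constraint holds.

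A point to check is that $p_{e,d}$ and $q_d$ are defined for every realisation of $\rCcal_n$, including codes with repeated codewords. Repeated codewords just make errors more likely and are already accounted for in the ensemble averages, so no expurgation is needed. This avoids the usual issue that removing codewords would alter the complexity, which is a concern here. Indeed, the footnote ensures that the guessing order depends only on $u_n$, not on the code. Hence $q_d(\Ccal_n;g)$ is simply a function of the code, and no codeword-removal step (which could change stopping times) is involved.

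Taking logarithms and dividing by $n$ gives
\[
-\tfrac1n\log p_{e,d}(\Ccal_n;g)\ge E_d(R;g)-\epsilon-\tfrac{\log3}{n}, \qquad \tfrac1n\log q_d(\Ccal_n;g)\le F_d(R;g)+\epsilon+\tfrac{\log 3}{n}.
\]
The main (minor) obstacle is removing $\epsilon$ while keeping a single sequence of codes. For this I take $\epsilon_k\downarrow0$ with thresholds $n_k$ increasing, and for $n_k\le n<n_{k+1}$ choose $\Ccal_n$ as above with $\epsilon_k$. Then the $\liminf$ and $\limsup$ bounds~\eqref{eq:ch2-error-exponent-sequence-Cn} and~\eqref{eq:ch2-complexity-exponent-sequence-Cn} follow. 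For small $n$, any code of size $M_n$ works.
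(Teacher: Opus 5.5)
Your proof is correct and follows essentially the same route as the paper. Both apply Markov's inequality separately to $p_{e,d}(\rCcal_n;g)$ and $q_d(\rCcal_n;g)$ over the random-code ensemble, then use a union bound to get, with positive probability, a code good for both (you use the factor $3$ and probability $>1/3$, the paper uses $4$ and probability $\ge 1/2$). The only cosmetic difference is in handling the slack: the paper absorbs it into vanishing sequences $\epsilon(n),\delta(n)$ with $\eta(n)=\max\{\epsilon(n),\delta(n)\}$, whereas you fix $\epsilon$ and then diagonalise over $\epsilon_k\downarrow 0$, which gives the same conclusion.
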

\begin{IEEEproof}
	See Appendix~\ref{app:proof-lemma-simultanous-exponents}.
\end{IEEEproof}

\begin{remark}\label{rem:randomised-decoding}
	We presented the definitions and results above for deterministic guessing decoding, but they also apply to randomised guessing decoding. In that case, replace the $d$ by $r$ in the subscript of the notations, e.g., $p_{e,r}(\Ccal_n;u_n)$, $\bar{q}_r(n,M;u_n)$, $E_r(R;g)$, $F_r(R;g)$. Lemma~\ref{lem:simultaenous-exponents} holds for randomised decoding strategies as well.
\end{remark}

Finally, we recall that the error exponent is upper bounded by the sphere-packing error exponent, which, in additive channels, assumes the form~\cite[Appendix]{csiszar1982}:
\begin{align} \label{eq:sphere-packing-exponent-additive}
	E_{\sp}(R)
	&\coloneqq E_{\sp}(R,P_Z) \nonumber\\
	&\coloneqq \min_{Q \in \Pcal(\Acal) \colon H(Q) \ge \log|\Acal|-R} D(Q \| P_Z).
\end{align}
Moreover, we have~(e.g., \cite[Corollary~10.4]{csiszar2011})
\begin{align} \label{eq:sphere-packing-exponent-property}
	&\hspace{-1.5em}E_{\sp}(R,P_Z) = \nonumber\\
	&\begin{cases}
		\min_{Q \colon H(Q)=\log|\Acal|-R} D(Q\|P_Z),
		&R< C,\\
		0, 
		&R \ge C.
	\end{cases}
\end{align}

\subsection{Interlude: A Variation on Guessing} \label{subsec:variation-guessing}

The results on the complexity exponent for noise guessing decoding can be seen as the guessing exponents of a variant of the classical Massey--Arikan guessing problem~\cite{massey1994,arikan1996}.
In the classical guessing problem, Alice only draws one sequence~$\Zv_1$, and Bob is interested in identifying it by asking questions of the type ``is $\Zv=\zv_1$?'', to which Alice answers `yes' or `no', until an affirmative answer is given.

In this variant, Alice independently draws additional $M-1$ sequences $\Zv_2, \dots, \Zv_M$ with uniform distribution on $\Acal^n$, and collect them in a set $\Scal = \left\{ \Zv_1, \dots, \Zv_M \right\}$. The cardinality of this set is $1 \le |\Scal| \le M$, for repeated sequences are allowed.
Bob is then interested in finding a sequence in $\Scal$ (without any preference for a particular one) by asking questions of the type ``is $\zv$ contained in $\Scal$?'' until an affirmative is obtained. We let $M = \lfloor e^{nR} \rfloor$, for a fixed $R>0$. While $R=0$ recovers the classical problem, we would like to understand how the value of $R$ impacts the guessing exponent.

This problem is conceptually close to the one studied in~\cite{christiansen2015}. In that setup, $V$ users independently draw one sequence each, and the guesser wants to identify a number~$U$ of pairs $(\text{user},\, \text{sequence})$.
However, this differs from the setup above (say, taking $V=M$ and $U=1$), as Bob only wants to identify a sequence in the set~$\Scal$, but is not concerned with its position.

Recasting the study of the complexity exponent as a variant guessing problem, our results can be rephrased as follows: first, we prove that the optimal strategy is the same as in the classical guessing problem ($R=0$), namely, to guess sequences in decreasing order of probability.
Then, we compute the guessing exponent of both deterministic and randomised matched strategies.
We show that the optimal strategy among $\alpha$-tilted metrics is not the matched choice $\alpha=1$, but varies with the rate~$R$. A universal guessing strategy has the optimal guessing exponent, independent of the original distribution and the value of~$R$.

\begin{remark}
	Deterministic noise guessing with matched metric was studied in~\cite{duffy2019}, for a large family of channels. However, that work did not claim that this strategy is optimal in terms of complexity, and the expression given in~\cite[Proposition~2]{duffy2019} for the complexity exponent is actually an upper bound, which is not tight in general. Here, we consider mismatched decoding metrics, and the case of randomised guessing as well.
\end{remark}

\section{Deterministic Decoding} \label{sec:deterministic-guessing}

In this section, we study the error and complexity exponents of deterministic noise guessing decoding. In this case, the performance with the $\alpha$-tilted decoding metric is the same as that of matched decoding metric ($\alpha=1$), for the function $x \mapsto x^{\alpha}$ ($\alpha>0$) is strictly increasing, and so the ordering of querying noise sequences is not impacted by the tilting operation. Still, we keep the notation dependency on $\alpha$ for later comparison with randomised decoding.
In the following, we sequentially present the error and complexity exponents for mismatched, $\alpha$-tilted and universal decoding metrics.

\subsection{Error Exponents}
\label{subsec:deterministic-error-exponent}

\begin{theorem}[Mismatched] \label{thm:error-exponent-deterministic-guessing-mismatched}
	The ensemble-tight random-coding error exponent of deterministic noise guessing decoding with decoding metric given by~$g$ at rate~$R$ is
	\begin{align} \label{eq:error-exponent-deterministic-guessing-mismatched}
		E_{d}(R;g)
		= &\min_{\hat{P}_Z \in \Pcal(\Acal)} \min_{\hat{P}'_Z \in \Pcal(\Acal) \colon g(\hat{P}'_Z) \ge g(\hat{P}_Z)} \nonumber\\
		&D(\hat{P}_Z \| P_Z)+  \left[ \log|\Acal| - H(\hat{P}'_Z) - R\right]_+.
	\end{align}%
\end{theorem}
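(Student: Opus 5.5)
By symmetry of uniform random coding and the additive channel, I would assume message $1$ is sent and condition on the noise $\Zv=\zv$ with type $\hat{P}_{\zv}=\hat{P}_Z$. The received word is $\yv=\xv_1+\zv$. For every $m\neq1$, the competing noise $\yv-\Xv_m$ is uniform on $\Acal^n$ and independent across $m$, because the codewords are i.i.d.\ uniform. Deterministic decoding with random tie-breaking errs (up to a factor at most $1/2$ on ties, which is irrelevant at exponential scale) if and only if some $m\ne1$ has $g(\hat{P}_{\yv-\Xv_m})\ge g(\hat{P}_Z)$. Define
\begin{equation*}
	p_n(\hat{P}_Z)\coloneqq \Pbb\big[g(\hat{P}_{\Zv'})\ge g(\hat{P}_Z)\big], \qquad \Zv' \text{ uniform on } \Acal^n .
\end{equation*}
Then, conditionally on the noise type,
\begin{equation*}
	\Pbb[\error\given \hat{P}_Z]\doteq 1-(1-p_n)^{M-1}\doteq \min\{1,\,(M-1)p_n\}.
\end{equation*}
This uses the standard two-sided bound $1-(1-p)^N\ge \tfrac12\min\{1,Np\}$ together with the union bound, so that step is tight up to constants.

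Next I evaluate $p_n$ by the method of types:
\begin{equation*}
	p_n=\sum_{\hat{P}'\in\Pcal_n(\Acal)\colon g(\hat{P}')\ge g(\hat{P}_Z)}|\Tcal_n(\hat{P}')|\,|\Acal|^{-n}\doteq \exp\Big(-n\min_{\hat{P}'\colon g(\hat{P}')\ge g(\hat{P}_Z)}\big(\log|\Acal|-H(\hat{P}')\big)\Big).
\end{equation*}
With $M=\lfloor e^{nR}\rfloor$ this gives
\begin{equation*}
	\min\{1,Mp_n\}\doteq \exp\Big(-n\min_{\hat{P}'}\big[\log|\Acal|-H(\hat{P}')-R\big]_+\Big).
\end{equation*}
Averaging over the noise type, which has probability $\doteq e^{-nD(\hat{P}_Z\|P_Z)}$, and using that there are only polynomially many types, the sum is dominated by its largest term. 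This yields the double minimum of the statement, first over $n$-types. Because $[\cdot]_+$ is monotone, the $[\cdot]_+$ and the inner minimum commute.

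The main obstacle is passing from minima over $n$-types to minima over $\Pcal(\Acal)$, both in the outer variable and in the constraint set $\{\hat{P}'\colon g(\hat{P}')\ge g(\hat{P}_Z)\}$, which couples the two variables through a general $g$. For the achievability direction (lower bound on the exponent), the type-restricted minimum is at least the minimum over all of $\Pcal(\Acal)$, so that direction is immediate. For the converse (ensemble tightness) I need the reverse: take a near-optimal pair $(\hat{P}_Z,\hat{P}'_Z)$ and approximate it by $n$-types $(P_n,P'_n)$ that still satisfy $g(P'_n)\ge g(P_n)$, with divergence and entropy converging. This requires some regularity of $g$, such as continuity or, more simply, continuity of the resulting objective. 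I would first try this under continuity of $g$. If that fails, I would perturb $\hat{P}'_Z$ slightly toward a strictly feasible point, or, where it is harmless, read the minimum as an infimum. Given the paper's usage, continuity of $g$ is likely an implicit assumption I would state explicitly. Ensemble tightness then follows because every step above is exponentially tight in both directions.
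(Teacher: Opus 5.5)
Your route is the one the paper intends; its proof only defers to the type-based analysis of Scarlett et al. Every step up to the $n$-type formula is sound: conditioning on the noise type, the i.i.d.\ uniform competitors $\yv-\Xv_m$, the bound $(1-e^{-1})\min\{1,Np\}\le 1-(1-p)^{N}\le\min\{1,Np\}$ applied uniformly in the type, the method-of-types evaluation, and the achievability inequality obtained by enlarging the feasible set from pairs of $n$-types to pairs in $\Pcal(\Acal)$.

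The gap is in the converse, where you propose to assume continuity of $g$. That is not enough, because of ties on the constraint boundary. Take $g$ continuous with $g(P)=g(P')$, where $P$ is a strict local minimum and $P'$ a strict local maximum of $g$, both with irrational entries. For $Q$ near $P$ and $Q'$ near $P'$ one has $g(Q')\le g(P')=g(P)\le g(Q)$, with equality throughout only at $(Q,Q')=(P,P')$. Hence $(P,P')$ is an isolated point of the feasible set $\{g(\hat{P}'_Z)\ge g(\hat{P}_Z)\}$, and no feasible pair of $n$-types ever comes near it. One can arrange, already for $|\Acal|=2$, the following:
\begin{itemize}
\item $P$ is close to $P_Z$ and $P'$ is close to uniform;
\item the superlevel sets $\{g\ge g(Q)\}$ for $Q\ne P$ near $P$ contain only low-entropy distributions.
\end{itemize}
Then the right-hand side of the theorem is attained at $(P,P')$ and is nearly $0$. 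The limit of the type-restricted minima, which is the true exponent, stays bounded away from $0$. Reading the minimum as an infimum does not help, since the infimum is still attained at the isolated pair.

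So this step cannot be closed for arbitrary continuous $g$. Your fallback of pushing $\hat{P}'_Z$ to a nearby strictly feasible point is the right mechanism, but its availability is exactly the hypothesis that must be added. For instance: $g$ is continuous, and some optimal pair is a limit of pairs with $g(\hat{P}'_Z)>g(\hat{P}_Z)$. Strict inequality survives quantisation to $n$-types, and continuity of $D(\cdot\|P_Z)$ and $H$ then yields the converse.

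This hypothesis holds for the metrics the paper actually uses. $g_\alpha$ is affine in the type, so a tied pair becomes strictly feasible after moving one of the two distributions slightly in a direction in which $g_\alpha$ increases or decreases. For $g_H$, the inner optimum is $\hat{P}'_Z=\hat{P}_Z$, which is trivially approximable by types. The paper's one-line proof leaves this regularity implicit, inheriting it from the linear-in-type metrics of the reference it follows. The theorem as stated for a completely general $g$ therefore needs such a qualification.
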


\begin{IEEEproof}
	The proof follows the same lines as that of~\cite[Theorem~1]{scarlett2015}, but in the particular case of additive channels; details are omitted.
\end{IEEEproof}

\begin{theorem}[$\alpha$-tilted] \label{thm:error-exponent-deterministic-guessing-alpha}
	The ensemble-tight random-coding error exponent of deterministic noise guessing decoding with $\alpha$-tilted decoding metrics~\eqref{eq:decoding-metric-alpha-tilted} at rate~$R$ is
	\begin{align} \label{eq:error-exponent-deterministic-guessing-alpha}
		&E_{d}(R;g_{\alpha})
		= \nonumber\\
		&\quad\min_{\hat{P}_Z \in \Pcal(\Acal)} D(\hat{P}_Z \| P_Z)
		+ \left[ \log|\Acal| - H(\hat{P}_Z) - R\right]_+.
	\end{align}

	Moreover, it can be written as
	\begin{equation} \label{eq:error-exponent-deterministic-guessing-alpha-bis}
		E_d(R;g_{\alpha})
		= \begin{cases}
			\log|\Acal| - R - H_{1/2}(P_Z), & 0 \le R \le R_c,\\
			D(\tilde{P}_{\alpha_R} \| P_Z), & R_c \le R \le C,\\
			0, & R \ge C,
		\end{cases}
	\end{equation}
	where $\alpha_R$ the unique $\alpha>0$ that satisfies $H(\tilde{P}_{\alpha}) = \log|\Acal| - R$, $C$ is the channel capacity~\eqref{eq:channel-capacity}, and $R_c$ the critical rate~\eqref{eq:critical-rate}.
\end{theorem}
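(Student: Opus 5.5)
Two steps: first reduce Theorem~\ref{thm:error-exponent-deterministic-guessing-mismatched} to~\eqref{eq:error-exponent-deterministic-guessing-alpha}, then solve the resulting convex program to get~\eqref{eq:error-exponent-deterministic-guessing-alpha-bis}.

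\emph{Reduction.} Write $g_\alpha(\hat P)=\alpha\sum_z \hat P(z)\log P_Z(z) - c_\alpha$, with $c_\alpha$ a constant; this is affine and increasing in the cross-likelihood. The constraint $g_\alpha(\hat P'_Z)\ge g_\alpha(\hat P_Z)$ is therefore equivalent to $H(\hat P'_Z\|P_Z)\le H(\hat P_Z\|P_Z)$, and it does not depend on $\alpha>0$.
\begin{itemize}
\item \emph{Upper bound.} Take $\hat P'_Z=\hat P_Z$, which is always feasible. This gives $E_d\le$ the right-hand side of~\eqref{eq:error-exponent-deterministic-guessing-alpha}.
\item \emph{Lower bound.} For fixed $\hat P_Z$, the inner problem maximises $H(\hat P'_Z)$ subject to $H(\hat P'_Z\|P_Z)\le H(\hat P_Z\|P_Z)$. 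Its maximiser is a tilted distribution $\tilde P_\beta$ with $\beta\ge0$ (Lagrange/maximum-entropy argument, Appendix~\ref{app:properties-tilted-distributions}). So the joint minimisation is over pairs $(\hat P_Z,\hat P'_Z)$ with $\hat P'_Z=\tilde P_\beta$.
\item I then switch the order. Fix $Q=\hat P'_Z$ and minimise $D(\hat P_Z\|P_Z)$ over $\hat P_Z$ with $H(\hat P_Z\|P_Z)\ge H(Q\|P_Z)$.
\item Suppose $Q=\tilde P_\beta$ with $\beta\le1$ (these are the only relevant $Q$). Then $H(Q\|P_Z)\ge H(P_Z)$, because the cross-likelihood is monotone along the tilted family. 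So this minimisation is an I-projection onto a half-space, and it is attained at a tilted $\tilde P_\gamma$ on the boundary. On the tilted family $H(\cdot\|P_Z)$ is monotone in the parameter, so $\gamma=\beta$.
\item Hence the optimum can be taken with $\hat P_Z=\hat P'_Z$, which gives the lower bound.
\item \emph{Main obstacle.} The delicate point is the case $\beta>1$. I would dispose of it directly: there $D(\tilde P_\beta\|P_Z)$ increases while $H(\tilde P_\beta)$ decreases, so these $Q$ are dominated by $\beta=1$, which gives exponent $[C-R]_+$ at zero divergence.
\end{itemize}

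\emph{Explicit form.} Minimise $f(Q)=D(Q\|P_Z)+[\log|\Acal|-H(Q)-R]_+$. Split into the two regimes of the $[\cdot]_+$.
\begin{itemize}
\item \emph{Regime $H(Q)\ge\log|\Acal|-R$.} The objective is $D(Q\|P_Z)$, whose minimum is $E_{\sp}(R)$ by~\eqref{eq:sphere-packing-exponent-additive}.
 \begin{itemize}
 \item If $R\ge C$, take $Q=P_Z$, giving $0$.
 \item Otherwise, by~\eqref{eq:sphere-packing-exponent-property} the constraint is active. The I-projection onto $\{H(Q)=\log|\Acal|-R\}$ is a tilted distribution (Lagrangian: $\log(Q/P_Z)=\lambda\log Q+$const), namely $\tilde P_{\alpha_R}$. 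Here $\alpha_R\in(0,1)$ is unique since $\alpha\mapsto H(\tilde P_\alpha)$ is strictly decreasing under Assumption~\ref{ass:Pz-non-extreme}.
 \end{itemize}
\item \emph{Regime $H(Q)\le\log|\Acal|-R$.} The objective is $D(Q\|P_Z)-H(Q)+\log|\Acal|-R=-\sum_z Q(z)\log P_Z(z)-2H(Q)+\log|\Acal|-R$.
 \begin{itemize}
 \item Without the constraint, this is minimised at $Q\propto P_Z^{1/2}$, i.e.\ $\tilde P_{1/2}$.
 \item The minimum value is $-2\log\sum_z P_Z(z)^{1/2}+\log|\Acal|-R=\log|\Acal|-R-H_{1/2}(P_Z)$.
 \item This point is feasible iff $H(\tilde P_{1/2})\le\log|\Acal|-R$, i.e.\ iff $R\le R_c$.
 \end{itemize}
\item \emph{Combining.}
 \begin{itemize}
 \item For $R\le R_c$, the value $\log|\Acal|-R-H_{1/2}(P_Z)$ is the global minimum: the function is convex, so its unconstrained minimiser is optimal over both regimes.
 \item For $R_c\le R\le C$, the second regime is minimised on the boundary, which coincides with the first regime's optimum. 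This gives $D(\tilde P_{\alpha_R}\|P_Z)$.
 \item The two formulas agree at $R=R_c$, where $\alpha_R=1/2$.
 \end{itemize}
\end{itemize}
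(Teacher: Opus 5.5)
Your proof is correct, but it takes a different route from the paper. For~\eqref{eq:error-exponent-deterministic-guessing-alpha} the paper omits the details and defers to Gallager's Lemma~5.

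\emph{First identity.} Your argument is valid: you use the max-entropy step for the inner problem, then the I-projection of $P_Z$ onto the half-space $\{H(\hat{P}_Z\|P_Z)\ge H(\tilde{P}_\beta\|P_Z)\}$. It is heavier than necessary, though. The two-case argument the paper uses in Appendix~\ref{app:proof-complexity-exponent-deterministic-alpha} (for the complexity exponent) already suffices here. Take any feasible pair.
\begin{itemize}
\item If $H(\hat{P}'_Z)\le H(\hat{P}_Z)$, replace $\hat{P}'_Z$ by $\hat{P}_Z$.
\item If $H(\hat{P}'_Z)> H(\hat{P}_Z)$, the constraint $H(\hat{P}'_Z\|P_Z)\le H(\hat{P}_Z\|P_Z)$ forces $D(\hat{P}'_Z\|P_Z)< D(\hat{P}_Z\|P_Z)$, so replace $\hat{P}_Z$ by $\hat{P}'_Z$.
\end{itemize}
This needs no tilted-family machinery.

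\emph{Explicit form.} For~\eqref{eq:error-exponent-deterministic-guessing-alpha-bis}, the paper first restricts to $\Ttilt(P_Z)$ via Lemma~\ref{lem:projection-reduction}. It then does one-variable calculus in $\beta$ on two branches written with $\psi_P$, including a separate check that $\beta<0$ never helps. You instead stay on the whole simplex. You use convexity of $D(Q\|P_Z)+[\log|\Acal|-H(Q)-R]_+$, KKT for the sphere-packing branch, and the identity $D(Q\|P_Z)-H(Q)=2D(Q\|\tilde{P}_{1/2})-H_{1/2}(P_Z)$, which settles the $R\le R_c$ branch in one line. Your version is more elementary and self-contained. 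The paper's version buys uniformity: the same $\psi_P$-calculus is reused for the complexity and randomised exponents.

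\emph{Points to tighten.}
\begin{enumerate}
\item \emph{The case $\beta>1$.} In the swapped problem $P_Z$ itself satisfies the constraint, so the divergence term is $0$, not $D(\tilde{P}_\beta\|P_Z)$; the fact that $D(\tilde{P}_\beta\|P_Z)$ increases plays no role. Domination by $\beta=1$ follows solely from $H(\tilde{P}_\beta)\le H(P_Z)$, which gives $[\log|\Acal|-H(\tilde{P}_\beta)-R]_+\ge[C-R]_+$.
\item \emph{``Minimised on the boundary''.} The set $\{H(Q)\le\log|\Acal|-R\}$ is not convex, so this claim needs one line. For $R>R_c$, the segment from any $Q$ in that set to $\tilde{P}_{1/2}$ crosses the level set $H=\log|\Acal|-R$. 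Along that segment, the convex second-regime objective is no larger than at $Q$, because $\tilde{P}_{1/2}$ is its global minimiser.
\item \emph{The case $R\le R_c$.} You can bypass convexity: the full objective equals $\max\{D(Q\|P_Z),\,D(Q\|P_Z)-H(Q)+\log|\Acal|-R\}$, which is at least the second term everywhere.
\item \emph{Choice of the tilted minimiser.} A Lagrangian with an equality constraint has two tilted stationary points, $\tilde{P}_{\alpha_R}$ and $\tilde{P}_{\bar{\alpha}_R}$ with $\bar{\alpha}_R<0$. To single out $\alpha_R$, either use the convex formulation $H(Q)\ge\log|\Acal|-R$, whose nonnegative multiplier forces an exponent in $(0,1]$, or cite Lemma~\ref{lem:minimiser-Esp}.
\end{enumerate}
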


\begin{IEEEproof}
	See Appendix~\ref{app:proof-error-exponent-deterministic-guessing-alpha}
\end{IEEEproof}

\begin{remark}
	A dual form of the exponent~\eqref{eq:error-exponent-deterministic-guessing-alpha} can be obtained, e.g., by directly specialising \cite[Equation~(7.5)]{scarlett2020} to additive channels with uniform input distribution (see also \cite{duffy2019,jouhed2024}), yielding
	\begin{equation} \label{eq:error-exponent-deterministic-dual}
		E_{d}(R;g_{\alpha})
		= \max_{\rho\in\left[0,1\right]} \rho \left( \log|\Acal| - R - H_{1/(1+\rho)}(P_Z) \right).
	\end{equation}
	This expression, in the form of Gallager's exponent~\cite{gallager1968}, allows one to recover the critical rate~$R_c$ of the channel~$P_Z$: denoting $E_0(\rho) \coloneqq \rho \big( \log|\Acal| - H_{1/(1+\rho)}(P_Z) \big)$, we find
	\begin{equation*}
		R_c(P_Z) = \left.\frac{\partial E_0}{\partial \rho}\right|_{\rho=1} = \log|\Acal| - H(\tilde{P}_{1/2}).
	\end{equation*}
\end{remark}

\begin{theorem}[Universal] \label{thm:error-exponent-deterministic-guessing-universal}
	The ensemble-tight random-coding error exponent of deterministic noise guessing decoding with universal decoding metric~\eqref{eq:decoding-metric-universal} at rate~$R$ is
	\begin{align} \label{eq:error-exponent-deterministic-guessing-universal}
		E_{d}(R;g_{H})
		= E_{d}(R;g_{1}),
	\end{align}
	as in~\eqref{eq:error-exponent-deterministic-guessing-alpha}.
\end{theorem}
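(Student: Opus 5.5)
We will obtain Theorem~\ref{thm:error-exponent-deterministic-guessing-universal} by specialising the mismatched formula of Theorem~\ref{thm:error-exponent-deterministic-guessing-mismatched} to $g=g_H=-H$, and comparing the result with the expression \eqref{eq:error-exponent-deterministic-guessing-alpha} for $g_1$. The universal metric satisfies Assumption~\ref{ass:g-lim-log}, since $H(\hat P)+g_H(\hat P)=0$ identically, so Theorem~\ref{thm:error-exponent-deterministic-guessing-mismatched} applies. With $g_H$, the constraint $g_H(\hat P'_Z)\ge g_H(\hat P_Z)$ reads $H(\hat P'_Z)\le H(\hat P_Z)$. This gives
\begin{equation*}
E_d(R;g_H)=\min_{\hat P_Z}\min_{\hat P'_Z \colon H(\hat P'_Z)\le H(\hat P_Z)} D(\hat P_Z\|P_Z)+\left[\log|\Acal|-H(\hat P'_Z)-R\right]_+ .
\end{equation*}

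The inner minimisation is explicit. The objective is nonincreasing in $H(\hat P'_Z)$, so the inner minimum is attained by taking $H(\hat P'_Z)$ as large as the constraint allows, namely $\hat P'_Z=\hat P_Z$. The inner minimum is therefore $D(\hat P_Z\|P_Z)+[\log|\Acal|-H(\hat P_Z)-R]_+$, and minimising over $\hat P_Z$ yields exactly \eqref{eq:error-exponent-deterministic-guessing-alpha}. That expression equals $E_d(R;g_1)$ by Theorem~\ref{thm:error-exponent-deterministic-guessing-alpha}, which proves \eqref{eq:error-exponent-deterministic-guessing-universal}.

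Ensemble tightness is inherited from Theorem~\ref{thm:error-exponent-deterministic-guessing-mismatched}. The only point needing care is that the closed-form reduction above concerns the limiting expression over $\Pcal(\Acal)$, whereas the finite-$n$ analysis works with $n$-types. Here this is harmless, because the reduction is an exact identity for every pair of types. Ties among noise sequences of equal empirical entropy are already covered by the constraint $g(\hat P'_Z)\ge g(\hat P_Z)$, which includes equality and so accounts for random tie-breaking in Theorem~\ref{thm:error-exponent-deterministic-guessing-mismatched}.

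I do not expect a genuine obstacle. The main check is that the mismatched theorem is valid for $g_H$, which is not of product form. This holds because the proof of Theorem~\ref{thm:error-exponent-deterministic-guessing-mismatched} uses only type-dependence of the metric, which $g_H$ has.
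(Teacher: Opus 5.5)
Your proposal is correct and follows the paper's proof, which is exactly the direct substitution of $g_H=-H$ into Theorem~\ref{thm:error-exponent-deterministic-guessing-mismatched}. You have also spelled out the one step the paper leaves implicit: under $H(\hat{P}'_Z)\le H(\hat{P}_Z)$ the inner minimum is attained at $\hat{P}'_Z=\hat{P}_Z$, which recovers \eqref{eq:error-exponent-deterministic-guessing-alpha}.
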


\begin{IEEEproof}
	Direct application of \eqref{eq:decoding-metric-universal} to Theorem~\ref{thm:error-exponent-deterministic-guessing-mismatched}.
\end{IEEEproof}

\begin{remark}
	The error exponent in additive channels with uniform random-coding distribution is intimately connected to the error exponent of (almost lossless) source coding, as noted, e.g., in~\cite{csiszar1982}. Specifically, they coincide when choosing the code rate in the source coding problem to be $\log|\Acal|-R$.
	As such, our Theorem~\ref{thm:error-exponent-deterministic-guessing-mismatched} can be seen as giving the source coding error exponent when decoding according to a mismatched metric. The dual form~\eqref{eq:error-exponent-deterministic-dual} for matched metric was obtained in~\cite{gallager1979}, where the case-by-case analysis of~\eqref{eq:error-exponent-deterministic-guessing-alpha-bis} was also reported. The universality of the empirical entropy decoding metric was studied in~\cite{csiszar1982} for the source coding problem, and the connection to the channel coding problem was made explicit. Our results in this subsection, derived in the context of error exponents in additive channels, recover these special cases.
\end{remark}

\subsection{Complexity Exponents}

Recall that the decoder orders the noise sequences according to the ranking function~$G_u$ induced by the decoding metric~$u_n$, as in~\eqref{eq:ranking-function-u}. Since the messages are equiprobable and the codewords are independent and identically distributed (i.i.d.), it will be without loss of generality to consider in the following that message $m=1$ is sent by the transmitter.
The decoder stops querying noise sequences once it finds a valid codeword (be it correct or not). Its average complexity can thus be written as~\cite{duffy2019}
\begin{align}
	\bar{q}_d(n,M)
	&= \E\left[ \min\left\{ G_u(\Yv-\Xv_1),\ \min_{m'\neq1} G_u(\Yv-\Xv_{m'}) \right\} \right] \nonumber\\
	&= \E\left[ \min\left\{ G_u(\Zv),\ \min_{m'\neq1} G_u(\Xv_1+\Zv-\Xv_{m'}) \right\} \right],
\end{align}
where the expectation is computed over $(\Xv_1, \dots, \Xv_M, \Yv) \sim P_{\Xv}(\xv_1) \cdots P_{\Xv}(\xv_M) P_{\Zv}(\yv-\xv_1)$.
For each fixed pair $\Xv_1$ and $\Zv$, the random variables $\bar{\Zv}_{m'} \coloneqq \Xv_1 + \Zv - \Xv_{m'}$, for $2 \le m' \le M$, are independent and uniformly distributed in $\Acal^n$, because that is the case for the random variables $\Xv_{m'}$. Thus, the average complexity can be equivalently written as
\begin{align} \label{eq:ch3-deterministic-complexity}
	\bar{q}_d(n,M)
	&= \E\left[ \min\left\{ G_u(\Zv_1),\ \min_{m'\neq1} G_u({\Zv}_{m'}) \right\} \right],
\end{align}
where $\Zv_1 \sim P_{\Zv}$ and ${\Zv}_{2}, \dots, {\Zv}_M$ are independently and uniformly distributed in $\Acal^n$.

This is equivalent to the guesswork in the guessing problem variant in which Alice picks a sequence $\Zv_1 \sim P_{\Zv}$, and $M-1$ other sequences independently and uniformly, as discussed in Section~\ref{subsec:variation-guessing}.

The next result formalises the fact that the optimal strategy, in terms of minimising the number of queries, is the same as that of the classical guessing problem, namely, guess sequences in decreasing order of probability. Intuitively, this is due to the fact that the $M-1$ noise sequences corresponding to incorrect codewords have uniform distribution, so they give no information that can be used to improve the order of testing noise sequences.

\begin{lemma} \label{lem:optimal-guessing-strategy}
	The optimal strategy in terms of complexity is deterministic noise guessing with matched decoding metric, i.e., to query noise sequences in decreasing order of their true probability.
\end{lemma}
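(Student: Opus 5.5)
The approach is to start from the representation \eqref{eq:ch3-deterministic-complexity} and allow a general guessing strategy. By the footnote to Lemma~\ref{lem:simultaenous-exponents}, the query order may not depend on the code. So a strategy is any (possibly randomised) ordering $\zv^{(1)}, \zv^{(2)}, \dots$ of $\Acal^n$, i.e.\ a bijective ranking $G$, and randomised strategies, including randomised noise guessing, are mixtures of such orderings. Since the complexity is linear in the mixture, it suffices to show that, among deterministic bijections $G$, the quantity
\begin{equation*}
	\E\left[ \min\left\{ G(\Zv_1), \min_{m'\neq 1} G(\Zv_{m'}) \right\} \right]
\end{equation*}
is minimised by ordering $\zv$ by decreasing $P_{\Zv}(\zv)$. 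Sampling with replacement, as in randomised guessing, is only worse than a mixture of orderings, because repeated queries can be deleted without increasing the stopping time.

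The key step is the tail-sum formula $\E[T] = \sum_{k\ge 0} \Pbb(T>k)$, where $T$ is the stopping time. The event $T>k$ means that none of $\Zv_1, \dots, \Zv_M$ lies in the set $S_k = \{ G^{-1}(1), \dots, G^{-1}(k) \}$ of the first $k$ queries. By independence,
\begin{equation*}
	\Pbb(T>k) = \big(1 - P_{\Zv}(S_k)\big) \left( 1 - \frac{k}{|\Acal|^n} \right)^{M-1}.
\end{equation*}
The second factor depends only on $|S_k| = k$, which is the same for every strategy. The first factor is minimised, for every $k$ simultaneously, by taking $S_k$ to be the $k$ most probable sequences, and the nested family of matched orderings achieves this for all $k$ at once. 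Summing over $k$ then gives optimality term by term. Ties are harmless, since any tie-breaking gives the same value of $P_{\Zv}(S_k)$, so the random tie-breaking in $\Gsf_u$ is also optimal.

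The point needing the most care is justifying the reduction to code-independent orderings. One must also ensure that averaging over random codes, rather than fixing a code, is the right notion. I would state optimality for the ensemble average $\bar{q}_d$; for a fixed code, the same argument applies after averaging over the uniform message, since the decoder's order cannot depend on $\Ccal$. Everything else is the elementary tail-sum exchange above.
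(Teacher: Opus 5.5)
Your core argument is the paper's proof. Both use the tail-sum identity and the factorisation $\Pbb(T>k)=\bigl(1-P_{\Zv}(S_k)\bigr)\bigl(1-k/|\Acal|^n\bigr)^{M-1}$, which comes from the independence of $\Zv_1$ and the uniform $\Zv_{m'}$. Both then maximise the partial sums $P_{\Zv}(S_k)$ termwise with the decreasing-probability order. Your additional remark that randomised strategies cannot do better is valid (a randomised ordering is a mixture of orderings, and sampling with replacement reduces to one after deleting repeats). It mildly strengthens the paper, whose proof only compares deterministic rankings $G$.

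One aside is false, however: that for a fixed code ``the same argument applies after averaging over the uniform message''. For a fixed $\Ccal$, the translates $\Yv-\xv_{m'}$ are neither uniform nor independent of $\Zv$, so the factorisation breaks down, and the matched order can be strictly suboptimal.

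Here is a counterexample. Take $n=3$, a BSC with crossover $p<1/2$, and the linear code $\Ccal=\{000,001\}$. A query $\zv$ succeeds iff $\zv\in\Zv+\Ccal$, so only cosets matter. Every matched order puts $000$ first and then spends one of the weight-one queries on $001$, which lies in the coset of $000$. This gives $\E[T]\ge 1+(2p-p^2)+p+p^2+p^2=1+3p+p^2$. The order $000,010,100,110$ achieves $1+3p$. So the lemma must be stated, as the paper does and as you otherwise do, for the ensemble average $\bar{q}_d$ only; drop the fixed-code claim.
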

\begin{IEEEproof}
	An arbitrary deterministic guessing strategy can be described by a ranking function $G \colon \Acal^n \to \{1, \dots, |\Acal|^n\}$ in such a way that the $t$-th guess is $G^{-1}(t)$. Denote $\Ucal_t \coloneqq \left\{ G^{-1}(1), \dots, G^{-1}(t) \right\}$ the first $t$ guesses, with the convention that $\Ucal_0 = \emptyset$.
	Fix $M$ sequences $\zv_1, \dots, \zv_M$ and denote $\Scal \coloneqq \left\{ \zv_1, \dots, \zv_M \right\}$ the set containing them.
	Note that the number of guesses needed to find a sequence in $\Scal$ with the strategy~$G$ can be written as
	\begin{equation*}
		\sum_{t=0}^{|\Acal|^n} \1 \left\{ \Ucal_t \cap \Scal = \emptyset \right\}.
	\end{equation*}
	
	Taking the average over random realisations of $\Ssf = \left\{ \Zv_1, \dots, \Zv_M \right\}$, we have that the average complexity is
	\begin{align}
		\bar{q}_{d}(n,M)
		&= \E\left[ \sum_{t=0}^{|\Acal|^n} \1 \left\{ \Ucal_t \cap \Ssf = \emptyset \right\} \right] \nonumber \\
		&= \sum_{t=0}^{|\Acal|^n} \Pbb \left( \Ucal_t \cap \Ssf = \emptyset \right) \nonumber \\
		&= \sum_{t=0}^{|\Acal|^n} \Pbb \left( \bigcap_{m=1}^{M} \left\{ \Zv_{m} \notin \Ucal_t \right\} \right) \nonumber \\
		&= \sum_{t=0}^{|\Acal|^n} \Pbb \left( \Zv_{1} \notin \Ucal_t \right)
		\prod_{m=2}^{M} \Pbb \left( \Zv_{m} \notin \Ucal_t \right) \nonumber \\
		&= \sum_{t=0}^{|\Acal|^n} \left( 1 - \sum_{s=1}^{t} P_{\Zv}\left(G^{-1}(s)\right) \right) \left( 1 - \frac{t}{|\Acal|^n} \right)^{M-1}, \label{eq:ch3-expression-average-complexity}
	\end{align}
	where in the last equality we used that $|\mathcal{U}_t| = t$.

	Denote $G_{\star}$ a strategy that orders sequences in decreasing order of probability. Since
	\begin{equation*}
		\sum_{s=1}^{t} P_{\Zv}\left(G^{-1}(s)\right) \le \sum_{s=1}^{t} P_{\Zv}\left(G^{-1}_{\star}(s)\right),
	\end{equation*}
	we conclude that $G_{\star}$ is indeed an optimal strategy.
\end{IEEEproof}

The next result provides non-asymptotic bounds on the average complexity in terms of the ranking function~$G$ and the statistics of $\Zv_1$.

\begin{lemma}[Non-asymptotic result] \label{lem:non-asymptotic-bounds-qd-n}
	The average complexity using an arbitrary ranking function $G$ satisfies
	\begin{align}
		\frac{1}{2}\, \E\left[ \min\left\{ G(\Zv_1),\ \frac{|\Acal|^n}{M} \right\} \right]
		&\le \bar{q}_d(n,M) \nonumber\\
		&\hspace{-3em}\le 2\, \E\left[ \min\left\{ G(\Zv_1),\ \frac{|\Acal|^n}{M} \right\} \right]. \label{eq:ch3-non-asymptotic-bounds-qd-n}
	\end{align}
\end{lemma}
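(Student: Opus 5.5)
The plan is to start from the exact expression \eqref{eq:ch3-expression-average-complexity} obtained in the proof of Lemma~\ref{lem:optimal-guessing-strategy}, which holds for an arbitrary ranking function $G$:
\begin{equation*}
	\bar{q}_d(n,M) = \sum_{t=0}^{|\Acal|^n} \Pbb\left( G(\Zv_1) > t \right) \left( 1 - \frac{t}{|\Acal|^n} \right)^{M-1}.
\end{equation*}
Here we use $1-\sum_{s\le t} P_{\Zv}(G^{-1}(s)) = \Pbb(G(\Zv_1)>t)$. Write $N \coloneqq |\Acal|^n$ and $w_t \coloneqq (1-t/N)^{M-1}$. In the same way, the target quantity can be written as
\begin{equation*}
	\E\left[\min\{G(\Zv_1), N/M\}\right] = \sum_{t\ge 0} \Pbb\left(\min\{G(\Zv_1),N/M\}>t\right) = \sum_{t=0}^{\lceil N/M\rceil-1} \Pbb(G(\Zv_1)>t)\, v_t,
\end{equation*}
where $v_t\in[0,1]$ accounts for the fractional last term when $N/M$ is not an integer. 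Both sides are then sums of the same decreasing sequence $a_t \coloneqq \Pbb(G(\Zv_1)>t)$ against two weight sequences. The lemma therefore reduces to comparing the weights $w_t$ with the truncation weights $\1\{t<N/M\}$, up to a factor $2$ in each direction, and exploiting that $a_t$ is nonincreasing.

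\textbf{Lower bound.} For $t \le N/M$ we have $w_t \ge (1-1/M)^{M-1}$, and $(1-1/M)^{M-1} \ge e^{-1}$ (this is Bernoulli's inequality / $\log(1-x)\ge -x/(1-x)$). That gives only a constant $e^{-1}$, not $1/2$, so a sharper argument is needed. I would instead use Bernoulli directly: $w_t \ge 1-(M-1)t/N$. Summing $a_t$ against this linear weight over $t<N/M$ and using monotonicity of $a_t$ (Chebyshev's sum inequality: a decreasing sequence paired with a decreasing weight has average at least the product of averages), we get
\begin{equation*}
	\sum_{t<N/M} a_t w_t \ge \frac{1}{2}\sum_{t<N/M} a_t,
\end{equation*}
because the average of $1-(M-1)t/N$ over $0\le t<N/M$ is at least about $1/2$. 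The edge effects from integer rounding must be checked so that the constant is exactly $1/2$. If that fails, I would fall back to a slightly different truncation point to absorb the rounding.

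\textbf{Upper bound.} Split the sum at $T=\lceil N/M\rceil$. The head $t<T$ has $w_t\le 1$ and contributes at most $\sum_{t<T}a_t \approx \E[\min\{G,N/M\}]$. For the tail, use $a_t\le a_T$ and $w_t\le e^{-(M-1)t/N}$, so that
\begin{equation*}
	\sum_{t\ge T} a_t w_t \le a_T \sum_{t\ge T} \left(1-\frac{t}{N}\right)^{M-1} \le a_T \frac{N}{M}\left(1-\frac{T}{N}\right)^{M}.
\end{equation*}
The last step uses the integral comparison $\sum_{t\ge T}(1-t/N)^{M-1}\le \int_{T-1}^{N}(1-s/N)^{M-1}\,ds$, or a cleaner exact identity. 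On the other side, $\E[\min\{G,N/M\}]\ge (N/M)\,a_{T}$ by monotonicity. Hence the tail is at most the truncated expectation, and the total is at most $2\,\E[\min\{G(\Zv_1),N/M\}]$.

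The main obstacle is getting the sharp constants $1/2$ and $2$ through discretisation. Specifically, the integer part of $N/M$, the $M-1$ versus $M$ exponent, and the case $M=1$ (where $N/M=N$ and $w_t\equiv1$ make the result trivial) all need care. I expect to handle these with exact evaluations of $\sum_{t}(1-t/N)^{M-1}$ bounded by $N/M$ via integral comparison, together with the monotonicity of $a_t$ as the only property of $G$ used.
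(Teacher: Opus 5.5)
Your lower bound is correct, and it argues differently from the paper. The paper fixes $k=G(\Zv_1)$, sandwiches $\sum_{t<k}(1-t/N)^{M-1}$ between $\frac{N}{M}[1-(1-k/N)^M]$ and that quantity plus one by Riemann sums, and then uses $\frac12\min\{1,Mx\}\le 1-(1-x)^M\le\min\{1,Mx\}$. The rounding you were worried about is harmless. With $T=\lceil N/M\rceil$ one has $T-1<N/M$, so the Bernoulli weights $1-(M-1)t/N$ for $0\le t\le T-1$ are positive. Their average is $1-(M-1)(T-1)/(2N)>(M+1)/(2M)\ge\frac12$. Chebyshev's sum inequality then gives $\bar{q}_d\ge\frac12\sum_{t<T}a_t\ge\frac12\E[\min\{G(\Zv_1),N/M\}]$.

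The upper bound does not go through as written.

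First, the displayed tail estimate goes the wrong way. Since $s\mapsto(1-s/N)^{M-1}$ is nonincreasing, $\sum_{t=T}^{N}(1-t/N)^{M-1}\ge\int_T^N(1-s/N)^{M-1}\,\d s=\frac{N}{M}(1-T/N)^M$. For $N=3$, $M=2$, $T=2$ the two sides are $1/3$ and $1/6$. Comparison with $\int_{T-1}^{N}$ only gives $\frac NM(1-(T-1)/N)^M\le N/M$. That still suffices for tail $\le\frac{N}{M}a_T\le\E[\min\{G(\Zv_1),N/M\}]$.

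Second, and this is the real gap, the head is \emph{not} bounded by $\E[\min\{G(\Zv_1),N/M\}]$ when $N/M\notin\N$. Bounding $w_t\le1$ gives $\sum_{t<T}a_t=\E[\min\{G(\Zv_1),N/M\}]+(T-N/M)\,a_{T-1}$. The excess is genuine, not an artefact of $w_t\le1$. For $N=3$, $M=2$ the head is exactly $a_0+\frac23a_1$, whereas $\E[\min\{G(\Zv_1),3/2\}]=a_0+\frac12a_1$. So your decomposition yields at best $2\E[\cdot]+a_{T-1}\le3\E[\cdot]$.

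A fix within your framework is to start the tail at $T-1$. For $T\ge2$,
$\bar{q}_d\le\sum_{t\le T-2}a_t+a_{T-1}\sum_{t\ge T-1}w_t$, with $\sum_{t\ge T-1}w_t\le\int_{T-2}^{N}(1-s/N)^{M-1}\,\d s\le N/M$. Then use both $\E[\min\{G(\Zv_1),N/M\}]\ge\sum_{t\le T-2}a_t$ and $\E[\min\{G(\Zv_1),N/M\}]\ge\frac NM\Pbb(G(\Zv_1)\ge T)=\frac NM a_{T-1}$.

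The case $T=1$ needs $N/M\ge1$, i.e.\ $M\le|\Acal|^n$; it then forces $N/M=1$ and $\bar{q}_d\le 1+N/M=2$. This hypothesis is implicit in the paper too, in the step $\min\{G,N/M\}+1\le2\min\{G,N/M\}$. It cannot be dropped: $\bar{q}_d\ge1$ always, while $2N/M<1$ once $M>2|\Acal|^n$. The paper's pointwise-in-$k$ route avoids this rounding bookkeeping entirely, because the additive $+1$ from the Riemann sum is absorbed by $\min\{k,N/M\}\ge1$.
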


\begin{IEEEproof}
	We can use~\eqref{eq:ch3-expression-average-complexity} to write the average complexity~as
	\begin{align}
		\bar{q}_d(n,M)
		&= \sum_{t=0}^{|\Acal|^n}
		\Pbb\left( G(\Zv_1) > t \right)
		\left( 1-\frac{t}{|\Acal|^n} \right)^{M-1} \nonumber\\
		&= \sum_{t=0}^{|\Acal|^n}
		\sum_{k=t+1}^{|\Acal|^n} \Pbb\left( G(\Zv_1) = k \right)
		\left( 1-\frac{t}{|\Acal|^n} \right)^{M-1} \nonumber\\
		&= \sum_{k=1}^{|\Acal|^n} \Pbb\left( G(\Zv_1) = k \right)
		\sum_{t=0}^{k-1} \left( 1-\frac{t}{|\Acal|^n} \right)^{M-1} \nonumber\\
		&= \E\left[ \sum_{t=0}^{G(\Zv_1)-1} \left( 1-\frac{t}{|\Acal|^n} \right)^{M-1} \right].
		\label{eq:proof-aux-1}
	\end{align}
	
	The function $f \colon \left[0,1\right]\to \R$ given by $f(x) = (1-x)^{M-1}$ is non-increasing, so its integral is upper bounded by right Riemann sums, and lower bounded by left Riemann sums.
	This yields
	\begin{align*}
		\frac{|\Acal|^n}{M} \left[ 1 - \left( 1-\frac{G(\Zv_1)}{|\Acal|^n} \right)^{M} \right]
		&\le \sum_{t=0}^{G(\Zv_1)-1} \left( 1 - \frac{t}{|\Acal|^n} \right)^{M-1}\\
		&\hspace{-2.5em}\le \frac{|\Acal|^n}{M} \left[ 1 - \left( 1-\frac{G(\Zv_1)}{|\Acal|^n} \right)^{M} \right] + 1.
	\end{align*}
	Using that $\frac{1}{2}\min\left\{1,\, Mx\right\} \le 1-(1-x)^{M} \le \min\left\{1,\, Mx \right\}$, for $x\in\left[0,1\right]$, we then get\footnote{
			For the lower bound, we have $1-(1-x)^M \ge 1-e^{-Mx} \ge {(1-e^{-1})}\min\left\{1,Mx\right\}$, using that $\log y \le y-1$ and that $1-e^{-y} \ge (1-e^{-1})\min\left\{1,y\right\}$, for $y\ge0$. Note that $1-e^{-1} \ge 1/2$.
			This chain of inequalities is akin to the truncated union bound for pairwise i.i.d. events. 
	}
	\begin{align}
		\frac{1}{2} \min\left\{ G(\Zv_1),\ \frac{|\Acal|^n}{M} \right\}
		&\le \sum_{t=0}^{G(\Zv_1)-1} \left( 1 - \frac{t}{|\Acal|^n} \right)^{M-1} \nonumber\\
		&\le \min\left\{ G(\Zv_1),\ \frac{|\Acal|^n}{M} \right\} + 1 \nonumber\\
		&\le 2 \min\left\{ G(\Zv_1),\ \frac{|\Acal|^n}{M} \right\}. \label{eq:proof-aux-2}
	\end{align}
	The proof is concluded by replacing~\eqref{eq:proof-aux-2} in \eqref{eq:proof-aux-1}.
\end{IEEEproof}

The non-asymptotic result of Lemma~\ref{lem:non-asymptotic-bounds-qd-n} plays a similar role to that of the random-coding union bound~\cite[Theorem~1]{scarlett2014} in the study of error exponents. In particular, it allows us to use the method of types, which differs from the analysis in~\cite{duffy2019}.
We are now ready to compute the complexity exponents.

\begin{theorem}[Mismatched] \label{thm:complexity-exponent-deterministic-mismatched}
	The ensemble-tight random-coding complexity exponent of deterministic noise guessing decoding with mismatched decoding metric at rate~$R$ is
	\begin{align} \label{eq:guessing-exponent-additive-mismatched}
		&F_{d}(R;g) =\nonumber\\
		&\max_{\hat{P}_Z \in \Pcal(\Acal)}
		\min\left\{F_1(\hat{P}_Z),\ \log|\Acal|-R\right\}
		- D( \hat{P}_Z \| P_Z ),
	\end{align}
	where
	\begin{align}
		F_1(\hat{P}_Z)
		\coloneqq F_1(\hat{P}_Z;g)
		\coloneqq \max_{\hat{P}_Z' \in \Pcal(\Acal) \colon g(\hat{P}_Z') \ge g(\hat{P}_Z)} H(\hat{P}_Z').
	\end{align}
\end{theorem}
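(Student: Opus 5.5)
The plan is to start from Lemma~\ref{lem:non-asymptotic-bounds-qd-n}. It shows that, up to a factor of $2$, $\bar{q}_d(n,M)$ equals $\E[\min\{\Gsf_u(\Zv_1),\, |\Acal|^n/M\}]$, with $\Zv_1\sim P_{\Zv}$. Since the bounds hold for every realisation $G$ of the random ranking, they also hold after averaging over tie-breaks. The exponent is therefore that of this single-sequence quantity, and the constant $2$ is irrelevant at exponential scale.

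The first step is to split the expectation by the type of $\Zv_1$:
\[
\E\big[\min\{\Gsf_u(\Zv_1),|\Acal|^n/M\}\big]=\sum_{\hat{P}_Z\in\Pcal_n(\Acal)}\Pbb\big(\hat{P}_{\Zv_1}=\hat{P}_Z\big)\,\E\big[\min\{\Gsf_u(\Zv_1),|\Acal|^n/M\}\mgiven \hat{P}_{\Zv_1}=\hat{P}_Z\big].
\]
The probability factor is $\doteq e^{-nD(\hat{P}_Z\|P_Z)}$. The number of types is polynomial, so the sum is $\doteq$ its largest term. It remains to estimate the conditional expectation.

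The key step is to bound the rank of $\zv$ with $\hat{P}_{\zv}=\hat{P}_Z$. Because $u_n$ depends only on the type, every sequence whose type $\hat{P}'$ satisfies $g(\hat{P}')>g(\hat{P}_Z)$ is ranked before $\zv$. Hence $G(\zv)\ge \sum_{\hat{P}':g(\hat{P}')>g(\hat{P}_Z)}|\Tcal_n(\hat{P}')|$. Conversely, $G(\zv)\le \sum_{\hat{P}':g(\hat{P}')\ge g(\hat{P}_Z)}|\Tcal_n(\hat{P}')|\doteq e^{nF_1(\hat{P}_Z)}$, restricted to $n$-types. This upper bound holds for every $\zv$ and every tie-break, so it is enough for the achievability direction $F_d\le$ RHS. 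Combining it with $\min\{\cdot,|\Acal|^n/M\}$, $M=\lfloor e^{nR}\rfloor$, and the divergence factor, then letting $n\to\infty$ so that types become dense, gives the $\limsup$ bound. One subtlety: the maximisation over $n$-types must be compared with the maximisation over $\Pcal(\Acal)$. For the upper bound I would simply enlarge the constraint set to all distributions, which is harmless because we are maximising.

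For the converse I would handle the tie class $\{g(\hat{P}')=g(\hat{P}_Z)\}$. Uniform random tie-breaking places a uniformly chosen $\zv$ of type $\hat{P}_Z$ at least halfway through the block of sequences with equal metric, with probability at least $1/2$. So the conditional expectation is $\dotge$ the full count $\sum_{g(\hat{P}')\ge g(\hat{P}_Z)}|\Tcal_n(\hat{P}')|\ge|\Tcal_n(\hat{P}'_n)|$ for any admissible $n$-type $\hat{P}'_n$. Given optimisers $\hat{P}_Z,\hat{P}'_Z\in\Pcal(\Acal)$, I then need $n$-types $\hat{P}_n\to\hat{P}_Z$ and $\hat{P}'_n\to\hat{P}'_Z$ with $g(\hat{P}'_n)\ge g(\hat{P}_n)$.

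I expect this approximation to be the main obstacle, since $g$ is arbitrary. The first thing I would try is to assume, or invoke implicitly as the paper does in Theorem~\ref{thm:error-exponent-deterministic-guessing-mismatched}, continuity of $g$. Under continuity, a strict inequality $g(\hat{P}'_Z)>g(\hat{P}_Z)$ survives the approximation. The boundary case $g(\hat{P}'_Z)=g(\hat{P}_Z)$ would then be handled either by perturbing $\hat{P}'_Z$ towards a distribution of larger $g$, or by choosing $\hat{P}_n$ to move $g$ downwards. Together with continuity of $H$ and $D$, this makes the lower bound match, so the $\liminf$ and $\limsup$ coincide and the exponent is ensemble-tight.
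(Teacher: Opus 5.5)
Your proposal follows essentially the paper's proof. Both reduce via Lemma~\ref{lem:non-asymptotic-bounds-qd-n} to $\E[\min\{\Gsf_u(\Zv_1),|\Acal|^n/M\}]$, sandwich the rank of a sequence of type $\hat{P}_Z$ between half the number of sequences with $g\ge g(\hat{P}_Z)$ (attained with probability at least $1/2$ by random tie-breaking) and that full number, and conclude with the method of types; the paper phrases the upper bound through the mean rank $\bar{G}_u$ and concavity, where your deterministic bound is slightly more direct. The paper also passes from $n$-types to $\Pcal(\Acal)$ without comment, so your regularity assumption is implicit there too, but continuity alone does not settle the tie case: if $\hat{P}'_Z$ is a strict local maximum and $\hat{P}_Z$ a strict local minimum of $g$ at the same level (e.g.\ both irrational), neither of your two perturbations exists and the formula can genuinely exceed the true exponent, so a non-degeneracy hypothesis of that kind is needed.
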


\begin{IEEEproof}
	See Appendix~\ref{app:proof-complexity-exponent-deterministic-mismatched}.
\end{IEEEproof}

\begin{theorem}[$\alpha$-tilted] \label{thm:complexity-exponent-deterministic-alpha}
	The ensemble-tight random-coding complexity exponent of deterministic noise guessing decoding with $\alpha$-tilted decoding metric~\eqref{eq:decoding-metric-alpha-tilted} at rate~$R$ is
	\begin{align}
		&F_d(R;g_{\alpha}) = \nonumber\\
		&\max_{\hat{P}_Z \in \Pcal(\Acal)}
		\min\left\{H( \hat{P}_Z ),\ \log|\Acal|-R\right\}
		- D( \hat{P}_Z \| P_Z ). \label{eq:complexity-exponent-deterministic-alpha}
	\end{align}
	Moreover, it can be written as
	\begin{align} \label{eq:complexity-exponent-deterministic-alpha-bis}
		F_{d}(R;g_{\alpha}) =
		\begin{cases}
			H_{1/2}(P_Z), &0 \le R \le R_c,\\
			F_2(R,\alpha), &R_c \le R \le C,\\
			\log|\Acal|-R, &R \ge C,
		\end{cases}
	\end{align}
	where
	\begin{equation*}
		F_2(R,\alpha) \coloneqq 2(1-\alpha_R)H_{\alpha_R}(P_Z) + (2\alpha_R-1)H(\tilde{P}_{\alpha_R}\|P_Z),
	\end{equation*}
	and $\alpha_R$ the unique $\alpha>0$ that satisfies $H(\tilde{P}_{\alpha}) = \log|\Acal| - R$, $C$ is the channel capacity~\eqref{eq:channel-capacity}, and $R_c$ the critical rate~\eqref{eq:critical-rate}.
\end{theorem}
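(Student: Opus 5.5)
We would first reduce the statement to Theorem~\ref{thm:complexity-exponent-deterministic-mismatched} with $g=g_\alpha$. By~\eqref{eq:decoding-metric-alpha-tilted}, the constraint $g_\alpha(\hat{P}'_Z)\ge g_\alpha(\hat{P}_Z)$ is equivalent to $H(\hat{P}'_Z\|P_Z)\le H(\hat{P}_Z\|P_Z)$, since $\alpha>0$. Hence $F_1(\hat{P}_Z;g_\alpha)\ge H(\hat{P}_Z)$, because $\hat{P}'_Z=\hat{P}_Z$ is feasible. This shows that the right-hand side of~\eqref{eq:guessing-exponent-additive-mismatched} is at least that of~\eqref{eq:complexity-exponent-deterministic-alpha}.

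For the reverse inequality, take any $\hat{P}_Z$ and a maximiser $\hat{P}'_Z$ in the definition of $F_1(\hat{P}_Z)$, so that $H(\hat{P}'_Z)=F_1(\hat{P}_Z)\ge H(\hat{P}_Z)$. Then
\begin{align*}
D(\hat{P}'_Z\|P_Z)&=H(\hat{P}'_Z\|P_Z)-H(\hat{P}'_Z)\\
&\le H(\hat{P}_Z\|P_Z)-H(\hat{P}_Z)=D(\hat{P}_Z\|P_Z).
\end{align*}
Consequently $\min\{H(\hat{P}'_Z),\log|\Acal|-R\}-D(\hat{P}'_Z\|P_Z)$ is at least the mismatched objective evaluated at $\hat{P}_Z$. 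This gives~\eqref{eq:complexity-exponent-deterministic-alpha}. Alternatively, one could note that tilting does not change the ranking, so the exponent equals the case $\alpha=1$ or $g_H$; the direct argument above is self-contained.

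For the explicit form~\eqref{eq:complexity-exponent-deterministic-alpha-bis}, write $L\coloneqq\log|\Acal|-R$ and set $\Phi(Q)=\min\{H(Q),L\}-D(Q\|P_Z)$. This function is concave in $Q$, being the minimum of concave functions minus a convex one, so the maximum is characterised by a three-case analysis.
\begin{enumerate}[(i)]
\item The unconstrained maximum of $H(Q)-D(Q\|P_Z)=-\sum_z Q(z)\log\bigl(Q(z)^2/P_Z(z)\bigr)$ is, by the Gibbs variational principle, $2\log\sum_z\sqrt{P_Z(z)}=H_{1/2}(P_Z)$, attained at $\tilde{P}_{1/2}$. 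It solves the full problem whenever $H(\tilde{P}_{1/2})\le L$, that is, $R\le R_c$.
\item The maximum of $L-D(Q\|P_Z)$ is $L$, attained at $Q=P_Z$. It is the answer whenever $H(P_Z)\ge L$, that is, $R\ge C$.
\item For $R_c<R<C$, neither unconstrained maximiser is feasible for its own branch. By concavity, the optimum must then lie on the set $\{H(Q)=L\}$, where the problem becomes $L-\min_{H(Q)=L}D(Q\|P_Z)$. A Lagrangian computation shows this minimiser lies in the tilted family, namely $\tilde{P}_{\alpha_R}$. Existence and uniqueness of $\alpha_R$ follow from the monotonicity and continuity of $\alpha\mapsto H(\tilde{P}_\alpha)$ from Appendix~\ref{app:properties-tilted-distributions}, with $\alpha_R\in(1/2,1)$ in this regime.
\end{enumerate}

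Finally, in case (iii) the value is $L-D(\tilde{P}_{\alpha_R}\|P_Z)=2H(\tilde{P}_{\alpha_R})-H(\tilde{P}_{\alpha_R}\|P_Z)$. Evaluating~\eqref{eq:decoding-metric-alpha-tilted} at $\hat{P}=\tilde{P}_\alpha$, and using $H(\tilde{P}_\alpha\|\tilde{P}_\alpha)=H(\tilde{P}_\alpha)$, gives
\begin{equation*}
H(\tilde{P}_\alpha)=\alpha H(\tilde{P}_\alpha\|P_Z)+(1-\alpha)H_\alpha(P_Z).
\end{equation*}
Substituting this yields $F_2(R,\alpha)$.

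The main obstacle is the rigorous justification of case (iii): showing the optimum sits on $H(Q)=L$ and that the entropy-constrained divergence minimiser is the tilt $\tilde{P}_{\alpha_R}$ rather than a boundary point. This is handled via the standard I-projection / Lagrange argument together with the monotonicity properties collected in the appendix.
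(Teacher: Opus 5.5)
Your proof is correct. The first part, reducing the mismatched formula via $F_1(\hat{P}_Z)\ge H(\hat{P}_Z)$ and the fact that a maximiser $\hat{P}'_Z$ of $F_1$ satisfies $D(\hat{P}'_Z\|P_Z)\le D(\hat{P}_Z\|P_Z)$, is the paper's argument; the paper only first reduces to $\alpha=1$.

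For the explicit form~\eqref{eq:complexity-exponent-deterministic-alpha-bis} you take a genuinely different route.

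\emph{The paper's route.} It invokes Lemma~\ref{lem:projection-reduction} to restrict the maximisation to the one-parameter family $\{\tilde{P}_\beta\}_{\beta\in\R}$. It then writes the objective piecewise as $f_1(\beta)=2\psi_P(\beta)-(2\beta-1)\psi_P'(\beta)$ and $f_2(\beta)=\log|\Acal|-R-(\beta-1)\psi_P'(\beta)+\psi_P(\beta)$. A calculus case analysis follows on where $\alpha_R$ sits relative to $1/2$ and $1$, plus a separate check that $\beta\le 0$ is never optimal.

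\emph{Your route.} You stay on the simplex. Each branch of the minimum gives a global upper bound: $H_{1/2}(P_Z)$ by the Gibbs variational principle, and $\log|\Acal|-R$ trivially. Each bound is attained exactly when its unconstrained maximiser is feasible for that branch. This settles $R\le R_c$ and $R\ge C$ in one line each, without the projection lemma of~\cite{salamatian2019b}. Concavity of $\Phi$ then forces the optimum onto the kink $\{H(Q)=\log|\Acal|-R\}$ in the middle regime. That step implicitly uses the uniqueness of the two unconstrained maximisers $\tilde{P}_{1/2}$ and $P_Z$: a maximiser strictly inside one branch would be a local, hence global, maximiser of that branch's concave objective.

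\emph{The price.} Case (iii) rests on identifying the entropy-constrained divergence minimiser. This is exactly Lemma~\ref{lem:minimiser-Esp}, so you can cite it rather than redo the Lagrangian. If you do redo it, note that stationarity of the equality-constrained problem also produces the negative root $\bar{\alpha}_R$. You can exclude it in either of two ways. One is to pass to the convex problem $\min_{H(Q)\ge\log|\Acal|-R}D(Q\|P_Z)$, whose constraint is active since $R<C$ and whose nonnegative multiplier forces the tilt parameter into $(0,1]$. The other is to note that $\beta\mapsto D(\tilde{P}_\beta\|P_Z)$ is decreasing for $\beta<1$.

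The paper's approach is heavier here. In exchange, everything already lives in the tilted family, and the same $\psi_P$ calculus is reused verbatim for the error exponents and the randomised exponents.
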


\begin{IEEEproof}
	See Appendix~\ref{app:proof-complexity-exponent-deterministic-alpha}.
\end{IEEEproof}

\begin{theorem}[Universal] \label{thm:complexity-exponent-deterministic-universal}
	The ensemble-tight random-coding complexity exponent of deterministic noise guessing decoding with universal decoding metric~\eqref{eq:decoding-metric-universal} is
	\begin{align}
		F_d(R;g_H) = F_d(R;g_{1}),
		\label{eq:complexity-exponent-deterministic-universal}
	\end{align}
	as in~\eqref{eq:complexity-exponent-deterministic-alpha}.
\end{theorem}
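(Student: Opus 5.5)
The plan is to specialise Theorem~\ref{thm:complexity-exponent-deterministic-mismatched} to $g_H(\hat{P}) = -H(\hat{P})$, in the same way Theorem~\ref{thm:error-exponent-deterministic-guessing-universal} follows from Theorem~\ref{thm:error-exponent-deterministic-guessing-mismatched}. Before doing so, we check that $g_H$ satisfies Assumption~\ref{ass:g-lim-log}. Indeed, $\max_{\hat{P}} H(\hat{P}) + g_H(\hat{P}) = 0$ trivially, so Theorem~\ref{thm:complexity-exponent-deterministic-mismatched} applies directly.

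The main step is to compute $F_1(\hat{P}_Z; g_H)$. The constraint $g_H(\hat{P}'_Z) \ge g_H(\hat{P}_Z)$ reads $H(\hat{P}'_Z) \le H(\hat{P}_Z)$. Hence
\begin{equation*}
	F_1(\hat{P}_Z; g_H) = \max_{\hat{P}'_Z \colon H(\hat{P}'_Z) \le H(\hat{P}_Z)} H(\hat{P}'_Z) = H(\hat{P}_Z),
\end{equation*}
where the maximum is attained at $\hat{P}'_Z = \hat{P}_Z$ itself. Substituting this into~\eqref{eq:guessing-exponent-additive-mismatched} yields
\begin{equation*}
	\max_{\hat{P}_Z} \min\{H(\hat{P}_Z), \log|\Acal| - R\} - D(\hat{P}_Z \| P_Z),
\end{equation*}
which is exactly the right-hand side of~\eqref{eq:complexity-exponent-deterministic-alpha}, that is, $F_d(R; g_1)$ by Theorem~\ref{thm:complexity-exponent-deterministic-alpha}.

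There is no real obstacle here; the argument is purely a substitution. The only point to check is that the tie-breaking convention does not matter. This holds because Theorem~\ref{thm:complexity-exponent-deterministic-mismatched} is stated for arbitrary $g$, with ties broken at random, and $g_H$ produces many ties, since it is constant on type classes. Ensemble tightness is inherited from Theorem~\ref{thm:complexity-exponent-deterministic-mismatched}.

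As a sanity check, the same computation for $g_1$ gives $F_1(\hat{P}_Z; g_1) = \max\{H(\hat{P}') \colon H(\hat{P}' \| P_Z) \le H(\hat{P}_Z \| P_Z)\}$. This quantity is at least $H(\hat{P}_Z)$, and may be strictly larger. The fact that both metrics nonetheless lead to the same final value follows from Theorem~\ref{thm:complexity-exponent-deterministic-alpha}, which we take as given, so no separate comparison is needed here.
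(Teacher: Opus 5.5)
Your proposal is correct and matches the paper's proof, which is simply the direct substitution of $g_H$ into Theorem~\ref{thm:complexity-exponent-deterministic-mismatched}. There the constraint $H(\hat{P}'_Z) \le H(\hat{P}_Z)$ reduces $F_1(\hat{P}_Z; g_H)$ to $H(\hat{P}_Z)$, which recovers the expression in~\eqref{eq:complexity-exponent-deterministic-alpha}; your check that $g_H$ satisfies Assumption~\ref{ass:g-lim-log} is a detail the paper leaves implicit.
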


\begin{IEEEproof}
	Direct application of \eqref{eq:decoding-metric-universal} to Theorem~\ref{thm:complexity-exponent-deterministic-mismatched}.
\end{IEEEproof}

\begin{remark}
	Note that $R=0$ corresponds to the classical Massey--Arikan guessing problem. With this choice, Theorem~\ref{thm:complexity-exponent-deterministic-mismatched} extends \cite[Corollary~2]{salamatian2019b}, with $\rho=1$, to mismatched metrics that are not necessarily in the form of memoryless distributions.
\end{remark}

\begin{remark}
	A simple upper bound to~\eqref{eq:complexity-exponent-deterministic-alpha} can be obtained by passing the outer maximisation inside the minimisation:
	\begin{align} 
		F_d(R)
		&= \max_{\hat{P}_Z \in \Pcal(\Acal)} \min\bigg\{H( \hat{P}_Z ) - D( \hat{P}_Z \| P_Z ) ,\nonumber \\
		&\hspace{8em}\log|\Acal|-R - D( \hat{P}_Z \| P_Z )\bigg\} \nonumber \\
		&\le \min\bigg\{ \max_{\hat{P}_Z \in \Pcal(\Acal)} H( \hat{P}_Z ) - D( \hat{P}_Z \| P_Z ) ,\nonumber\\
		&\hspace{4em} \max_{\hat{P}_Z \in \Pcal(\Acal)} \log|\Acal|-R - D( \hat{P}_Z \| P_Z )\bigg\} \nonumber \\
		&= \min\left\{ H_{1/2}(P_Z),\ \log|\Acal|-R \right\}, \label{eq:bound-complexity-exponent-deterministic-matched}
	\end{align}
	where in the last equality we used the variational form of the Rényi entropy (e.g.,~\cite[Theorem~2.26]{moser2025}).
	This upper bound was given in~\cite[Proposition~2]{duffy2019}, and is instructive because it captures the two main behaviours involved in the complexity exponent: when the additive noise $P_Z$ is low enough, noise guessing is dominated by the identification of the correct noise sequence, whose complexity exponent is the $H_{1/2}(P_Z)$.
	On the other hand, if the code rate is high enough, noise guessing is dominated by finding an incorrect noise sequence among the exponentially many ones; the guessing complexity of that is $\log|\Acal|-R$.
	Around the corner, that is, when $H_{1/2}(P_Z)$ and $\log|\Acal|-R$ are close, the transition is not abrupt, but rather smooth, and the upper bound~\eqref{eq:bound-complexity-exponent-deterministic-matched} is not tight, as illustrated later in numerical examples (see Section~\ref{sec:numerical-results}).
	Our results show that the complexity exponent derived in~\cite[Proposition~2]{duffy2019} is not tight in general, although it remains a valid upper bound. In contrast, our analysis led to the exact expression of that exponent.
\end{remark}

\section{Randomised Decoding} \label{sec:randomised-guessing}

We now turn to studying error and complexity exponents for randomised noise guessing decoding. In contrast to deterministic strategies, here, the value of the parameter $\alpha$ will affect the performance of decoding with the $\alpha$-tilted decoding metric.

\subsection{Error Exponents}

\begin{theorem}[Mismatched] \label{thm:error-exponent-randomised-guessing-mismatched}
	The ensemble-tight random-coding error exponent of randomised noise guessing decoding with decoding metric given by~$g$ at rate~$R$ is
	\begin{align} \label{eq:error-exponent-randomised-guessing-mismatched}
		&E_{r}(R;g) =
		\nonumber\\
		&\hspace{0.5em}\min_{\hat{P}_Z \in \Pcal(\Acal)} \min_{\hat{P}'_Z \in \Pcal(\Acal)} D(\hat{P}_Z \| P_Z) \nonumber\\
		&\hspace{0.5em}+ \bigg[ \left[ g(\hat{P}_Z) - g(\hat{P}'_Z) \right]_+ + \log|\Acal| - H(\hat{P}'_Z) - R \bigg]_+.
	\end{align}
\end{theorem}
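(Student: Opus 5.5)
The plan is to follow the standard analysis of randomised (likelihood-type) decoders, as in \cite{scarlett2015,merhav2017}, specialised to additive channels with uniform random coding. Without loss of generality, assume message $1$ is sent.

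Conditioning on $\Zv=\zv$ and $\Xv_1$, the error probability of the randomised decoder~\eqref{eq:randomised-decoder} equals
\begin{equation*}
\E\left[\frac{\sum_{m'\ne 1} u_n(\bar{\Zv}_{m'})}{u_n(\zv)+\sum_{m'\ne 1} u_n(\bar{\Zv}_{m'})}\right],
\end{equation*}
where the $\bar{\Zv}_{m'}$ are i.i.d.\ uniform on $\Acal^n$, exactly as in the complexity analysis. Writing $S\coloneqq\sum_{m'\ne1}u_n(\bar{\Zv}_{m'})$, the quantity $S/(u_n(\zv)+S)$ is within a factor $2$ of $\min\{1,S/u_n(\zv)\}$. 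The task is therefore to compute the exponent of $\E[\min\{1, S/u_n(\zv)\}]$ for $\zv$ of type $\hat P_Z$, and then average over types using $\Pbb(\hat P_{\Zv}=\hat P_Z)\doteq e^{-nD(\hat P_Z\|P_Z)}$.

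\textbf{Upper bound.} First apply Jensen's inequality to get $\E[\min\{1,S/u\}]\le\min\{1,\E[S]/u\}$. Then decompose $S$ by the type $\hat P'_Z$ of the incorrect noise sequences:
\begin{equation*}
\E[S]=(M-1)\sum_{\hat P'_Z}|\Tcal_n(\hat P'_Z)|\,|\Acal|^{-n}e^{ng(\hat P'_Z)}\doteq \max_{\hat P'_Z} e^{n(R+H(\hat P'_Z)-\log|\Acal|+g(\hat P'_Z))},
\end{equation*}
using Assumption~\ref{ass:g-lim-log} so that the normalisation in~\eqref{eq:u-decoding-metric} cancels. This gives the exponent
\begin{equation*}
\min_{\hat P'_Z}\big[g(\hat P_Z)-g(\hat P'_Z)+\log|\Acal|-H(\hat P'_Z)-R\big]_+ .
\end{equation*}
This is weaker than the claimed expression, because the inner $[\cdot]_+$ is missing. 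To recover it, apply the truncation type by type before summing: split $S=\sum_{\hat P'}S_{\hat P'}$, use $\min\{1,\sum_i a_i\}\le\sum_i\min\{1,a_i\}$ (there are only polynomially many types), and then exploit the integer nature of $N_{\hat P'}$, the number of incorrect sequences of type $\hat P'$. Since $S_{\hat P'}=N_{\hat P'}e^{ng(\hat P')}$ and $N_{\hat P'}$ is either $0$ or at least $1$, we have
\begin{equation*}
\E[\min\{1,S_{\hat P'}/u\}]\le \Pbb(N_{\hat P'}\ge1)\cdot \min\{1,\E[N_{\hat P'}\mid N_{\hat P'}\ge1]\,e^{n(g(\hat P')-g(\hat P_Z))}\}.
\end{equation*}
Here $\Pbb(N_{\hat P'}\ge1)\doteq e^{-n[\log|\Acal|-H(\hat P')-R]_+}$ and $\E[N\mid N\ge1]\doteq e^{n[R+H(\hat P')-\log|\Acal|]_+}$. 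A short case analysis on the sign of $\log|\Acal|-H(\hat P')-R$ then shows that the resulting exponent equals $\big[[g(\hat P_Z)-g(\hat P'_Z)]_++\log|\Acal|-H(\hat P'_Z)-R\big]_+$. This step, where the nested $[\cdot]_+$ arises, is the one I expect to be the main obstacle.

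\textbf{Lower bound (ensemble tightness).} Restrict to noise of type $\hat P_Z$ and to a single incorrect type $\hat P'_Z$. Use $S/(u+S)\ge \tfrac12\min\{1,S_{\hat P'}/u\}$, and lower bound $\E[\min\{1,N e^{n(g'-g)}\}]$ in two regimes. If $\log|\Acal|-H(\hat P')>R$, $N$ is at least $1$ with probability $\doteq e^{-n(\log|\Acal|-H-R)}$ (the truncated union bound, Lemma~\ref{lem:non-asymptotic-bounds-qd-n} style). Otherwise $N$ concentrates around its mean $e^{n(R+H-\log|\Acal|)}$ by Chebyshev's inequality, since it is binomial. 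Both regimes match the upper bound.

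Finally, minimise over $\hat P_Z,\hat P'_Z$. The sums over types are polynomial, so they contribute nothing to the exponent, and continuity lets the minimisation be taken over $\Pcal(\Acal)$, which yields~\eqref{eq:error-exponent-randomised-guessing-mismatched}.
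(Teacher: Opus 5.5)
Your exponent computation is correct, and it is more explicit than the paper's. The paper only asserts that the exponent of $\E\big[\sum_{m'\ge2}u_n(\Zv_{m'})/\big(u_n(\Zv_1)+\sum_{m'\ge2}u_n(\Zv_{m'})\big)\big]$ equals $E_r(R;g)$, by analogy with \cite{scarlett2015,merhav2017}, and omits the details. Those details would follow the type-class-enumerator ideas of \cite{merhav2017}, the same machinery the paper uses for the complexity exponent in Appendix~\ref{app:proof-complexity-exponent-randomised-mismatched}.

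Your type-by-type truncation is more elementary and shows exactly where the nested positive part comes from. Since $\min\{1,cN\}\le\1\{N\ge1\}$ and $\min\{1,cN\}\le cN$, you get $\E[\min\{1,cN\}]\le\min\{\Pbb(N\ge1),\,c\,\E[N]\}$. With $a=\log|\Acal|-H(\hat{P}'_Z)-R$ and $b=g(\hat{P}_Z)-g(\hat{P}'_Z)$, its exponent is $\max\{0,a,a+b\}=\big[[b]_++a\big]_+$. Your first-moment and Chebyshev lower bounds match this in both regimes. Your final appeal to continuity of $g$ is an extra regularity hypothesis, but the paper's passage from $\Pcal_n(\Acal)$ to $\Pcal(\Acal)$ silently relies on the same thing.

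The gap is in your first sentence. The expression $\E[S/(u_n(\zv)+S)]$ is the error probability of the decoder \eqref{eq:randomised-decoder}. The theorem, however, concerns randomised noise guessing, and the two coincide only when all codewords are distinct (cf.\ \eqref{eq:randomised-distribution-check}). A uniformly drawn codebook may contain repeated codewords. In that case noise guessing selects a codeword $\hat{\xv}$ with probability $Q_u(\yv-\hat{\xv})/Q_u\big(\bigcup_{m'}\{\yv-\xv_{m'}\}\big)$, normalised by the measure of a union rather than a sum. It then breaks ties uniformly among the messages sharing $\hat{\xv}$, which adds an error term that is not in your formula. This is exactly what the paper's Lemma~\ref{prop:randomised-error-exponent-replace-sum} handles, and you need its content.

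For the upper bound, the non-tie-break part is at most your $S/(u+S)$. This holds because $x\mapsto x/(x+u)$ is increasing and the incorrect part of the union has $Q_u$-measure at most $S$. The tie-break term is at most $\E[1-1/T(\Xv_1)]\le(M-1)|\Acal|^{-n}\dotle e^{-n(\log|\Acal|-R)}$, where $T(\Xv_1)$ is the multiplicity of $\Xv_1$ in the codebook. This term is negligible only because $E_r(R;g)\le\log|\Acal|-R$, which you must check (take $\hat{P}_Z=\hat{P}'_Z=P_Z$ in \eqref{eq:error-exponent-randomised-guessing-mismatched}).

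For the lower bound, you must count \emph{distinct} incorrect noise sequences of each type rather than $N_{\hat{P}'}$. The paper does this by noting that, outside an event of super-exponentially small probability, no sequence occurs more than $n^2$ times. On that event the union has measure at least $1/n^2$ times the sum, which is only a polynomial loss. With this reduction added, your argument proves the theorem.
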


\begin{IEEEproof}
	See Appendix~\ref{app:proof-error-exponent-randomised-mismatched}.
\end{IEEEproof}

\begin{theorem}[$\alpha$-tilted] \label{thm:error-exponent-randomised-guessing-alpha}
	The ensemble-tight random-coding error exponent of randomised noise guessing decoding with $\alpha$-tilted decoding metrics~\eqref{eq:decoding-metric-alpha-tilted} at rate~$R$ is
	\begin{align} \label{eq:error-exponent-randomised-guessing-alpha}
		E_{r}(R;g_{\alpha})
		=
		&\min_{\hat{P}_Z \in \Pcal(\Acal)} \min_{\hat{P}'_Z \in \Pcal(\Acal)}
		D(\hat{P}_Z \| P_Z) \nonumber\\
		&\hspace{0em}+ \bigg[ \alpha \left[ H(\hat{P}'_Z\|P_Z) -H(\hat{P}_Z\|P_Z) \right]_+  \nonumber\\
		&\hspace{5em} + \log|\Acal| - H(\hat{P}'_Z) - R  \bigg]_+.
	\end{align}
\end{theorem}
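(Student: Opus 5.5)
This is a direct specialisation of Theorem~\ref{thm:error-exponent-randomised-guessing-mismatched}: plug the $\alpha$-tilted metric $g_\alpha$ from~\eqref{eq:decoding-metric-alpha-tilted} into~\eqref{eq:error-exponent-randomised-guessing-mismatched}. First I would check that $g_\alpha$ is a legitimate metric in the sense of Assumption~\ref{ass:g-lim-log}. Since $u_n(\zv)=\prod_i \tilde{P}_\alpha(z_i)$ is already a probability distribution, the normalising sum $\sum_{\zv'} e^{ng_\alpha(\hat{P}_{\zv'})}$ equals $1$. Indeed, $e^{-nH(\hat{P}_{\zv}\|\tilde{P}_\alpha)}=\tilde{P}_\alpha^n(\zv)$, so the limit in the assumption is $0$. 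I would also note that $g_\alpha\le 0$ as required. Ensemble-tightness is then inherited directly from Theorem~\ref{thm:error-exponent-randomised-guessing-mismatched}.

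The only computation is the difference $g_\alpha(\hat{P}_Z)-g_\alpha(\hat{P}'_Z)$. By the second line of~\eqref{eq:decoding-metric-alpha-tilted}, the constant term $-(1-\alpha)H_\alpha(P_Z)$ cancels, giving
\begin{equation*}
g_\alpha(\hat{P}_Z)-g_\alpha(\hat{P}'_Z)=\alpha\left[H(\hat{P}'_Z\|P_Z)-H(\hat{P}_Z\|P_Z)\right].
\end{equation*}
Because $\alpha>0$, the positive part commutes with the scaling: $[\alpha x]_+=\alpha[x]_+$. Substituting this into~\eqref{eq:error-exponent-randomised-guessing-mismatched} yields~\eqref{eq:error-exponent-randomised-guessing-alpha} verbatim.

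The one step deserving care, though hardly an obstacle, is the identity $-H(\hat{P}\|\tilde{P}_\alpha)=-\alpha H(\hat{P}\|P_Z)-(1-\alpha)H_\alpha(P_Z)$. It follows by expanding $\log\tilde{P}_\alpha(z)=\alpha\log P_Z(z)-\log\sum_{z'}P_Z(z')^\alpha$ and using $\log\sum_{z'} P_Z(z')^\alpha=(1-\alpha)H_\alpha(P_Z)$. For $\alpha=1$ the identity holds trivially with the convention $H_1=H$. If $P_Z$ has zeros, the cross-entropies may be infinite. In that case I would adopt the usual conventions: types with $D(\hat{P}_Z\|P_Z)=\infty$ never contribute to the minimum, and a $\hat{P}'_Z$ outside the support yields an infinite bracket, consistent with $g_\alpha=-\infty$. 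Under Assumption~\ref{ass:Pz-non-extreme} this causes no issue for the minimisation.
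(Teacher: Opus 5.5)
Your proposal is correct and matches the paper's argument: the paper proves this theorem in one line, by substituting $g_\alpha$ from~\eqref{eq:decoding-metric-alpha-tilted} into Theorem~\ref{thm:error-exponent-randomised-guessing-mismatched}. Your checks simply make explicit what the paper leaves implicit: the normalising sum equals $1$, so Assumption~\ref{ass:g-lim-log} holds; the $(1-\alpha)H_\alpha(P_Z)$ term cancels in $g_\alpha(\hat{P}_Z)-g_\alpha(\hat{P}'_Z)$; and $[\alpha x]_+=\alpha[x]_+$ for $\alpha>0$. One small quibble: Assumption~\ref{ass:Pz-non-extreme} does not rule out zeros of $P_Z$, so it is your extended-value conventions, not that assumption, that handle the case $g_\alpha=-\infty$.
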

\begin{IEEEproof}
	Direct application of \eqref{eq:decoding-metric-alpha-tilted} to Theorem~\ref{thm:error-exponent-randomised-guessing-mismatched}.
\end{IEEEproof}

\begin{theorem}[Properties of $\alpha$-tilted error exponent]
	\label{thm:error-exponent-alpha-properties}
	The error exponent~\eqref{eq:error-exponent-randomised-guessing-alpha} has the following properties.
	\begin{enumerate}
		\item For $0 < \alpha' < \alpha$, we have
		\begin{equation} \label{eq:error-exponent-alpha-prop-1}
			0 \le E_{r}(R;g_{\alpha'}) \le E_{r}(R;g_{\alpha}) \le E_{d}(R;g_{1}).
		\end{equation}
	
		\item For $\alpha\ge$1, we have
		\begin{align} \label{eq:error-exponent-alpha-prop-2}
			E_{r}(R;g_{\alpha})
			= E_d(R;g_1).
		\end{align}
		
		\item For $0 \le R \le R_c$ and $\alpha \ge 1/2$, we have
		\begin{align} \label{eq:error-exponent-alpha-prop-3}
			E_{r}(R;g_{\alpha})
			= E_d(R;g_1).
		\end{align}
	
		\item For $R_c \le R \le C$ and $\alpha \ge \alpha_R$, where $\alpha_R$ is the unique $\alpha>0$ satisfying $H(\tilde{P}_{\alpha}) = \log|\Acal|-R$, we have
		\begin{align} \label{eq:error-exponent-alpha-prop-4}
			E_{r}(R;g_{\alpha})
			= E_d(R;g_1).
		\end{align}
	\end{enumerate}
\end{theorem}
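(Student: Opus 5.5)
Throughout, write
\[
\Phi_\alpha(\hat{P}_Z,\hat{P}'_Z) \coloneqq D(\hat{P}_Z\|P_Z) + \big[\alpha[H(\hat{P}'_Z\|P_Z)-H(\hat{P}_Z\|P_Z)]_+ + \log|\Acal|-H(\hat{P}'_Z)-R\big]_+
\]
for the objective in \eqref{eq:error-exponent-randomised-guessing-alpha}, so that $E_r(R;g_\alpha)=\min_{\hat{P}_Z,\hat{P}'_Z}\Phi_\alpha$. The plan is to compare this double minimisation directly with \eqref{eq:error-exponent-deterministic-guessing-alpha}.

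\textbf{Property 1.} Nonnegativity is immediate. Monotonicity in $\alpha$ holds pointwise, since $\Phi_\alpha$ is nondecreasing in $\alpha$ for every fixed pair. For the upper bound, take $\hat{P}'_Z=\hat{P}_Z$: the inner bracket vanishes and we recover exactly the objective of \eqref{eq:error-exponent-deterministic-guessing-alpha}. Hence $E_r(R;g_\alpha)\le E_d(R;g_1)$.

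\textbf{Property 2 ($\alpha\ge1$).} By Property 1 it suffices to treat $\alpha=1$ as a lower bound. Using $H(\hat{P}'_Z\|P_Z)=D(\hat{P}'_Z\|P_Z)+H(\hat{P}'_Z)$ together with $[x]_+\ge x$, we get
\[
\Phi_1 \ge D(\hat{P}_Z\|P_Z)+\big[D(\hat{P}'_Z\|P_Z)+H(\hat{P}'_Z)-H(\hat{P}_Z)-D(\hat{P}_Z\|P_Z)+\log|\Acal|-H(\hat{P}'_Z)-R\big]_+ .
\]
Applying $a+[b-a]_+\ge$ both $a$ and $b$, this is at least $\max\{D(\hat{P}_Z\|P_Z),\,D(\hat{P}'_Z\|P_Z)+\log|\Acal|-H(\hat{P}_Z)-R\}$. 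That is not quite the target, so we instead split into two cases.

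If $H(\hat{P}'_Z\|P_Z)\ge H(\hat{P}_Z\|P_Z)$, the displayed bound equals $\max\{D(\hat{P}_Z\|P_Z),\ D(\hat{P}'_Z\|P_Z)+\log|\Acal|-H(\hat{P}_Z)-R\}$. Note that $D(\hat{P}'_Z\|P_Z)+\log|\Acal|-H(\hat{P}_Z)-R\ge D(\hat{P}_Z\|P_Z)+\log|\Acal|-H(\hat{P}_Z)-R$, because $H(\hat{P}'_Z)\ge H(\hat{P}_Z)$ is not guaranteed. This route is messy, so I would use the cleaner observation in that case: $\Phi_1\ge D(\hat{P}'_Z\|P_Z)+[\log|\Acal|-H(\hat{P}'_Z)-R]_+$. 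This follows because $D(\hat{P}_Z\|P_Z)+D(\hat{P}'_Z\|P_Z)+H(\hat{P}'_Z)-H(\hat{P}_Z\|P_Z)=D(\hat{P}'_Z\|P_Z)+H(\hat{P}'_Z)-H(\hat{P}_Z)$. Hmm, that still has the stray term $H(\hat{P}_Z)$.

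A better route is the dual form. Let $\rho\in[0,1]$. One can check that $E_d=\max_\rho E_0(\rho)-\rho R$ is obtained as a lower bound via $[x]_+\ge\rho x$ and $\alpha[y]_+\ge\rho' y$ for suitable weights. Concretely, take $\Phi_\alpha\ge D(\hat{P}_Z\|P_Z)+\rho\big(\lambda\,(H(\hat{P}'_Z\|P_Z)-H(\hat{P}_Z\|P_Z))+\log|\Acal|-H(\hat{P}'_Z)-R\big)$ with $0\le\lambda\le\alpha$. Minimising over $\hat{P}'_Z$ gives $-\rho H_{\rho\lambda/\rho\cdots}$-type Rényi terms, and minimising over $\hat{P}_Z$ gives a Gallager-type function via the variational form of Rényi entropy. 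The choice $\lambda=1/(1+\rho)$ yields exactly $\rho(\log|\Acal|-R-H_{1/(1+\rho)}(P_Z))$, matching \eqref{eq:error-exponent-deterministic-dual}. The key computation is that
\[
\min_{Q}\big\{\lambda H(Q\|P_Z)-H(Q)\big\}=-(1-\lambda)H_{\lambda}(P_Z)\cdot(\text{appropriate form}),
\]
which comes from \eqref{eq:decoding-metric-alpha-tilted}, i.e.\ $\lambda H(Q\|P_Z)+(1-\lambda)H_\lambda(P_Z)=H(Q\|\tilde{P}_\lambda)\ge H(Q)$. The outer minimisation then reduces, through the same identity, to a tilted-distribution variational formula.

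This route works whenever the optimal $\lambda=1/(1+\rho^\star)\le\alpha$, where $\rho^\star$ maximises the dual. This gives Properties 2--4 in one stroke. For $\alpha\ge1$ it always holds. For $R\le R_c$ we have $\rho^\star=1$, so we need $\alpha\ge1/2$. For $R_c\le R\le C$, we have $\rho^\star$ with $1/(1+\rho^\star)=\alpha_R$, using Theorem~\ref{thm:error-exponent-deterministic-guessing-alpha}'s characterisation in which the minimiser is $\tilde{P}_{\alpha_R}$. For $R\ge C$ the claim is trivial, since both sides are zero.

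\textbf{Main obstacle.} The hard part is executing the joint minimisation exactly, which means verifying the identity that links $\rho^\star$ to $\alpha_R$ and the critical rate. I would check it via the stationarity condition $\partial E_0/\partial\rho=R$ and the formula $R_c=\partial E_0/\partial\rho|_{\rho=1}$.
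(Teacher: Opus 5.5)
Your proposal is correct once the abandoned first attempt is deleted, but it takes a genuinely different route from the paper.

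\textbf{How the paper proceeds.} For item~2 it imports the non-asymptotic result of \cite{liu2017}: matched randomised decoding has at most twice the error probability of matched deterministic decoding. Items~3 and~4 are done in the primal. The paper reduces both minimisations in \eqref{eq:error-exponent-randomised-guessing-alpha} to the tilted family and computes the inner minimum piecewise. It then does a case analysis at the two specific values $\alpha=1/2$ and $\alpha=\alpha_R$, and extends to larger $\alpha$ via the monotonicity of item~1.

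\textbf{How your argument proceeds.} You use weak duality. Take $\rho\in[0,1]$ and $0\le\lambda\le\alpha$. The relaxations $[x]_+\ge\rho x$ and $\alpha[y]_+\ge\lambda y$ make the objective separate in $(\hat{P}_Z,\hat{P}'_Z)$. Both pieces are then solved exactly by one identity, $\mu H(Q\|P_Z)+\psi_P(\mu)=H(Q\|\tilde{P}_\mu)\ge H(Q)$, with $\mu=\lambda$ for $\hat{P}'_Z$ and $\mu=1-\rho\lambda$ for $\hat{P}_Z$. This gives
\[
E_r(R;g_\alpha)\ge \rho\left(\log|\Acal|-R\right)-\rho\,\psi_P(\lambda)-\psi_P(1-\rho\lambda),
\]
which equals $\rho(\log|\Acal|-R-H_{1/(1+\rho)}(P_Z))$ at $\lambda=1/(1+\rho)$.

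The step you flag as the main obstacle is one line. By Proposition~\ref{prop:properties-psi}, item~4, $E_0'(\rho)=\log|\Acal|-H(\tilde{P}_{1/(1+\rho)})$. This is decreasing in $\rho$, with $E_0'(1)=R_c$ and $E_0'(0)=C$. Hence $\rho^\star=1$ for $R\le R_c$, and $1/(1+\rho^\star)=\alpha_R$ for $R_c\le R\le C$. For $\alpha\ge1$ every $\rho$ is admissible.

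\textbf{What each approach buys.} Yours is uniform and self-contained. One computation yields items~2--4 for every $\alpha$ above the threshold, without the external result or the projection machinery. It also explains the thresholds $1/2$ and $\alpha_R$ as $1/(1+\rho^\star)$. It does rely on the dual form \eqref{eq:error-exponent-deterministic-dual}, which the paper only cites. You can avoid that by checking your bound at $\rho=1$ and at $\rho=1/\alpha_R-1$ against \eqref{eq:error-exponent-deterministic-guessing-alpha-bis}; the latter evaluates to $(\alpha_R-1)\psi_P'(\alpha_R)-\psi_P(\alpha_R)=D(\tilde{P}_{\alpha_R}\|P_Z)$. The paper's primal computation, in turn, exhibits the minimising distributions. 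For item~2 it gives a non-asymptotic factor-of-two statement, which is stronger than equality of exponents.

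\textbf{Fixes for the write-up.} Delete the abandoned primal attempt under Property~2; the inference ``because $H(\hat{P}'_Z)\ge H(\hat{P}_Z)$ is not guaranteed'' is a non sequitur. Replace ``appropriate form'' with the exact value $\min_Q\{\lambda H(Q\|P_Z)-H(Q)\}=-\psi_P(\lambda)=-(1-\lambda)H_\lambda(P_Z)$, attained at $Q=\tilde{P}_\lambda$.
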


\begin{IEEEproof}
	See Appendix~\ref{app:proof-error-exponent-alpha-properties}.
\end{IEEEproof}

\begin{theorem}[Universal] \label{thm:error-exponent-randomised-guessing-universal}
	The ensemble-tight random-coding error exponent of randomised noise guessing decoding with universal decoding metrics~\eqref{eq:decoding-metric-universal} at rate~$R$ is
	\begin{align} \label{eq:error-exponent-randomised-guessing-universal}
		E_{r}(R;g_{H})
		= E_d(R;g_{1}),
	\end{align}
	as in~\eqref{eq:error-exponent-deterministic-guessing-alpha}.
\end{theorem}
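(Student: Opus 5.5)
Substituting $g_H(\hat{P}) = -H(\hat{P})$ into Theorem~\ref{thm:error-exponent-randomised-guessing-mismatched} gives
\begin{align*}
	E_r(R;g_H) = \min_{\hat{P}_Z}\min_{\hat{P}'_Z} D(\hat{P}_Z\|P_Z) + \Big[ \big[H(\hat{P}'_Z)-H(\hat{P}_Z)\big]_+ + \log|\Acal| - H(\hat{P}'_Z) - R \Big]_+ .
\end{align*}
The plan is to show that this expression equals the right-hand side of~\eqref{eq:error-exponent-deterministic-guessing-alpha} by proving the two inequalities separately. The key identity is that, for any reals $a,b$, we have $[a-b]_+ - a = \max\{-b,-a\}$. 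Applying it with $a = H(\hat{P}'_Z)$ and $b = H(\hat{P}_Z)$, the bracket becomes
\begin{align*}
	\big[H(\hat{P}'_Z)-H(\hat{P}_Z)\big]_+ - H(\hat{P}'_Z) = -\min\{H(\hat{P}_Z),\, H(\hat{P}'_Z)\}.
\end{align*}
Hence the inner objective equals $D(\hat{P}_Z\|P_Z) + \big[\log|\Acal| - R - \min\{H(\hat{P}_Z), H(\hat{P}'_Z)\}\big]_+$.

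For the lower bound, note that $\min\{H(\hat{P}_Z),H(\hat{P}'_Z)\} \le H(\hat{P}_Z)$ for every $\hat{P}'_Z$. Since $[\cdot]_+$ is non-decreasing, the objective is therefore at least $D(\hat{P}_Z\|P_Z) + [\log|\Acal| - H(\hat{P}_Z) - R]_+$, uniformly in $\hat{P}'_Z$. Minimising over $\hat{P}_Z$ then gives $E_r(R;g_H) \ge E_d(R;g_1)$ via~\eqref{eq:error-exponent-deterministic-guessing-alpha}.

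For the upper bound, choose $\hat{P}'_Z = \hat{P}_Z$ (or any $\hat{P}'_Z$ with $H(\hat{P}'_Z)\ge H(\hat{P}_Z)$, such as the uniform distribution). The inner minimum is then at most $D(\hat{P}_Z\|P_Z) + [\log|\Acal| - H(\hat{P}_Z) - R]_+$, and the outer minimisation yields $E_r(R;g_H)\le E_d(R;g_1)$. This is also where universality enters: with $g_H$, the "cheaper" incorrect types are exactly those of higher entropy, and their penalty is cancelled by the $[\,\cdot\,]_+$ term.

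The only potential obstacle is an administrative one: confirming that $g_H$ satisfies the standing assumptions, so that Theorem~\ref{thm:error-exponent-randomised-guessing-mismatched} applies. By the remark after Assumption~\ref{ass:g-lim-log}, we need $\max_{\hat{P}} H(\hat{P}) + g_H(\hat{P}) = 0$, and this holds trivially because the expression is identically zero; also $g_H\le 0$ as required. Ensemble tightness is inherited directly from Theorem~\ref{thm:error-exponent-randomised-guessing-mismatched}. Beyond this the argument is the elementary case split above, so no genuine difficulty is expected.
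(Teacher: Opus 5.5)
Your proposal is correct and follows the paper's route: the paper's proof is simply the substitution of $g_H(\hat{P}) = -H(\hat{P})$ into Theorem~\ref{thm:error-exponent-randomised-guessing-mismatched}. You supply the omitted algebra correctly, using the identity $[a-b]_+ - a = -\min\{a,b\}$ to show that the inner minimum over $\hat{P}'_Z$ equals $D(\hat{P}_Z\|P_Z) + [\log|\Acal| - H(\hat{P}_Z) - R]_+$, and your check that $g_H$ satisfies Assumption~\ref{ass:g-lim-log} makes explicit a point the paper leaves implicit.
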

\begin{IEEEproof}
	Direct application of \eqref{eq:decoding-metric-universal} to Theorem~\ref{thm:error-exponent-randomised-guessing-mismatched}.
\end{IEEEproof}

\subsection{Complexity Exponents}

Recall that the decoder draws noise sequences according to distribution $Q_u$ induced by the decoding metric~$u_n$, as in~\eqref{eq:sampling-distribution-u}.
For a fixed code $\Ccal=\left( \xv_1,\dots,\xv_M \right)$, and given received sequence $\yv\in\Acal^n$, the number of queries needed to find a codeword (correct or not) follows a geometric distribution with probability of success\footnote{
	With a slight abuse of notation, we denote, for a subset $\Bcal \subseteq \Acal^n$, $Q_u(\Bcal) \coloneqq \sum_{\zv\in\Acal^n} Q_u(\zv) \1_{\Bcal}(\zv)$.
} $Q_u\left( \bigcup_{m'=1}^{M} \left\{ \yv-\xv_{m'} \right\} \right)$; its average (over random sampling) is thus the reciprocal of that.
Again, we consider, without loss of generality, that the correct message is $m=1$. The average number of queries (over channel realisations and random codes) is thus
\begin{align}
	\bar{q}_r(n,M)
	&= \E\left[ \frac{1}{Q_u\left( \bigcup_{m'=1}^{M} \left\{ \Yv-\Xv_{m'} \right\} \right)} \right] \nonumber\\
	&= \E\left[ \frac{1}{Q_u\left( \bigcup_{m'=1}^{M} \left\{ \Xv_1+\Zv-\Xv_{m'} \right\} \right)} \right],
\end{align}
where the expectation is computed over $(\Xv_1, \dots, \Xv_M, \Yv) \sim P_{\Xv}(\xv_1) \cdots P_{\Xv}(\xv_M) P_{\Zv}(\yv-\xv_1)$. 
For fixed $\Xv_1,\Zv$, the random variables $\bar{\Zv}_{m'} \coloneqq \Xv_1 + \Zv - \Xv_{m'}$, for $2 \le m' \le M$, are independent and uniformly distributed, so we can equivalently write the average complexity as
\begin{align} \label{eq:average-complexity-randomised}
	\bar{q}_r(n,M)
	&= \E\left[ \frac{1}{Q_u\left(  \left\{ \Zv_1 \right\} \cup \bigcup_{m'=2}^{M} \left\{ {\Zv}_{m'} \right\} \right)} \right],
\end{align}
where $\Zv_1 \sim P_{\Zv}$ and ${\Zv}_2, \dots, {\Zv}_M$ are independently and uniformly distributed in $\Acal^n$.
This the guessing complexity of the randomised solution to the variant guessing problem discussed in Section~\ref{subsec:variation-guessing}.

\begin{theorem}[Mismatched] \label{thm:complexity-exponent-randomised-mismatched}
	The ensemble-tight random-coding complexity exponent of randomised noise guessing decoding with decoding metric given by~$g$ at rate~$R$ is
	\begin{align}
		F_{r}(R;g)
		= \max_{\hat{P}_Z \in \Pcal(\Acal)}
		&\min\left\{
		-g(\hat{P}_Z),\ \log|\Acal| - R + F_3(R;g)
		\right\} \nonumber\\
		&- D(\hat{P}_Z \| P_Z),
	\end{align}
	where
	\begin{equation}
		F_{3}(R;g)
		\coloneqq
		\min_{\hat{P}_Z'\in\Pcal(\Acal) \colon H(\hat{P}_Z') \ge \log|\Acal|-R} \left( -g(\hat{P}_Z') -H(\hat{P}_Z') \right).
	\end{equation}
\end{theorem}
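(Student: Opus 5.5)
Fix $\Zv_1=\zv_1$ of type $\hat{P}_Z$ and condition on it. Then, by \eqref{eq:average-complexity-randomised},
\begin{equation*}
	\bar{q}_r(n,M) = \E\big[ 1/S \big], \qquad S \coloneqq Q_u(\Zv_1) + W,
\end{equation*}
where $W \coloneqq Q_u\big(\bigcup_{m'\ge2}\{\Zv_{m'}\}\setminus\{\Zv_1\}\big)$. Since $Q_u(\zv_1)\le S\le Q_u(\zv_1)+Q_u(\{\Zv_{m'}\})$ summed without overlap, we have $1/S \doteq \min\{1/Q_u(\zv_1),\,1/W\}$ up to a factor of $2$. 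By Assumption~\ref{ass:g-lim-log}, $Q_u(\zv_1)\doteq e^{ng(\hat{P}_Z)}$, so the first term contributes exponent $-g(\hat{P}_Z)$. After that, averaging over the type of $\Zv_1$ by the method of types produces the outer $\max_{\hat{P}_Z}\{\cdot\}-D(\hat{P}_Z\|P_Z)$.

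The main step is showing that $W$ concentrates, so that $\E[1/S\mid\zv_1] \doteq \min\{e^{-ng(\hat{P}_Z)},\, e^{n(\log|\Acal|-R+F_3)}\}$. This requires two bounds.

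\textbf{Lower bound on $W$ with high probability.} Use the type class enumeration method. For each type $\hat{P}'$, the number $N(\hat{P}')$ of the $M-1$ uniform sequences falling in $\Tcal_n(\hat{P}')$ is binomial with mean $\doteq e^{n(R+H(\hat{P}')-\log|\Acal|)}$. For types with $H(\hat{P}')>\log|\Acal|-R$, this mean is exponentially large, and $N(\hat{P}')$ concentrates around it with doubly-exponentially small failure probability (Chernoff). Collisions are negligible because $N\ll|\Tcal_n(\hat{P}')|$ whenever $R<\log|\Acal|$; the case $R\ge\log|\Acal|$ is handled by saturation. Hence, with overwhelming probability,
\begin{equation*}
	W \dotge \max_{\hat{P}'\colon H(\hat{P}')\ge \log|\Acal|-R} e^{n(R+H(\hat{P}')-\log|\Acal|+g(\hat{P}'))} = e^{-n(\log|\Acal|-R+F_3)}.
\end{equation*}
The boundary $H=\log|\Acal|-R$ is reached by continuity, using types slightly inside the constraint set and continuity of $g$, which I would assume as in the earlier theorems. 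On the failure event, I bound $1/S\le 1/Q_u(\zv_1)$; the doubly-exponential probability kills it. This gives the upper bound
\begin{equation*}
	\bar{q}_r \dotle \max_{\hat{P}_Z} e^{n(\min\{-g(\hat{P}_Z),\,\log|\Acal|-R+F_3\}-D(\hat{P}_Z\|P_Z))}.
\end{equation*}

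\textbf{Converse for ensemble tightness.} I need $\E[1/S]\dotge\min\{\cdot,\cdot\}$, and Jensen gives it: $\E[1/S]\ge 1/\E[S]$. We have
\begin{equation*}
	\E[W] \le M\,\E[Q_u(\Zv_2)] = M|\Acal|^{-n}\sum_{\hat{P}'} |\Tcal_n(\hat{P}')|\, e^{ng(\hat{P}')}.
\end{equation*}
The danger is that this mean is dominated by types with $H(\hat{P}')<\log|\Acal|-R$, which typically have zero occupants. So I will not apply Jensen blindly. Instead I condition on the high-probability event that no type with $H(\hat{P}')<\log|\Acal|-R-\delta$ is occupied, except those containing $\zv_1$. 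By Markov plus a union bound over polynomially many types, the complement has probability at most $e^{-n\delta'}$ summed over such types. That probability is not negligible a priori against the exponents, so instead I use $\E[1/S]\ge \Pbb(\mathcal{E})/\E[S\mid\mathcal{E}]$, which needs only $\Pbb(\mathcal{E})\ge 1/2$. On $\mathcal{E}$, the occupancy of each type is at most its mean times a polynomial with high probability, giving
\begin{equation*}
	\E[W\mid\mathcal{E}] \dotle e^{-n(\log|\Acal|-R+F_3)} + \text{(a } \delta\text{-term)}.
\end{equation*}
Letting $\delta\to0$ and combining with $Q_u(\zv_1)$ yields the matching lower bound, and the types of $\Zv_1$ again produce the $-D$ term.

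I expect this converse to be the main obstacle: the expectation of a reciprocal of a random sum is governed by typical, not mean, behaviour, so Jensen must be applied on a carefully chosen conditioning event, and the continuity argument at the entropy boundary has to be handled cleanly.
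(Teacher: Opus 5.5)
Your argument is correct in substance, but it follows a different route from the paper.

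\emph{The paper's route.} It first replaces the union by a sum in a separate lemma. That lemma shows every sequence occurs at most $n^2+1$ times except with probability $e^{-\Theta(n^3)}$, and it needs a uniform lower bound $C_0$ on $Q_u$. The paper then writes $\E[1/X]=\int_0^\infty\Pbb(1/X\ge x)\,\d x$ and substitutes $\theta=\frac1n\log x$. The type class enumerator phase transition locates a threshold $\theta^*=\log|\Acal|-R+F_3(R;g)$. Below it, the probability that every class $\Tcal_n(\hat P')$ receives at most $e^{n(\tilde g_n(\hat P')-\theta)}$ uniform sequences is nearly $1$; above it, that probability is super-exponentially small. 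Splitting the integral at $\min\{-g(\hat P_{\zv_1}),\theta^*\}$ gives the exponent.

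\emph{Your route.} You skip the integral. With $S=Q_u(\zv_1)+W$, you identify the typical value of $W$ directly. Achievability comes from doubly-exponential concentration of the single dominant high-entropy class; this is essentially the Chernoff estimate the paper uses for $I_{2,b}(n)$. The converse comes from conditional Jensen on an event of probability at least $1/2$: no class with $H(\hat P')<\log|\Acal|-R-\delta$ is hit.

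\emph{Comparison.} Your version is shorter and makes plain why the typical value of $W$, not its mean, matters. It treats the union directly. It only needs $\E[1/Q_u(\Zv_1)]$ to be at most exponential in $n$, which is weaker than $C_0$. The paper's integral method is more systematic: it gives the full tail profile of $1/S$ and is a reusable template. Both arguments need continuity of $g$; the paper uses it implicitly in the uniform convergence of $\Phi_n$.

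\textbf{Achievability: collisions.} On the failure event you bound $1/S$ by $1/Q_u(\zv_1)$, which can be exponentially larger than the target. So collisions must be excluded with super-exponentially small failure probability. The justification ``$N\ll|\Tcal_n(\hat P')|$'', via the expected number of coinciding pairs, only gives failure probability of order $M|\Acal|^{-n}$, and that is not enough. For example, take the matched metric with fully supported $P_Z$. Then $\sum_{\zv_1}P_{\Zv}(\zv_1)/Q_u(\zv_1)=|\Acal|^n$, so your bound carries a term $e^{nR}$ against the true value $e^{n(\log|\Acal|-R)}$. It therefore fails once $R>\frac12\log|\Acal|$. Two fixes work:
\begin{itemize}
\item use the paper's multiplicity bound; or
\item use the estimate that $N$ i.i.d.\ uniform draws from a set of size $K$ take fewer than $N/2$ distinct values with probability at most $\binom{N}{N/2}(N/K)^{N/2}$, which is doubly-exponentially small here.
\end{itemize}

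\textbf{Converse: no occupancy statement needed.} On $\mathcal E$ only high-entropy classes are hit, so $\E[W\1_{\mathcal E}]\le\sum_{\hat P'\colon H(\hat P')\ge\log|\Acal|-R-\delta}\E[N(\hat P')]\,e^{-n\tilde g_n(\hat P')}$. Together with $\E[W\mid\mathcal E]\le 2\,\E[W\1_{\mathcal E}]$, this gives the bound you need. The limit $\delta\to0$ then uses upper semicontinuity of $g$.

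\textbf{The ``saturation'' remark.} Drop it. For $R>\log|\Acal|$ the constraint in $F_3$ is vacuous, so $F_3=0$ by Assumption~\ref{ass:g-lim-log}. The formula then gives the negative exponent $\log|\Acal|-R$, whereas the complexity is at least $1$. The theorem, like the paper's proof, implicitly requires $R<\log|\Acal|$.
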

\begin{IEEEproof}
	See Appendix~\ref{app:proof-complexity-exponent-randomised-mismatched}.
\end{IEEEproof}

\begin{theorem}[$\alpha$-tilted] \label{thm:complexity-exponent-randomised-alpha-tilted}
	The ensemble-tight random-coding complexity exponent of randomised noise guessing decoding with $\alpha$-tilted decoding metric~\eqref{eq:decoding-metric-alpha-tilted} is
	\begin{align} \label{eq:complexity-exponent-randomised-alpha-tilted}
		F_{r}(R;{g}_{\alpha})
		= &\max_{\hat{P}_Z \in \Pcal(\Acal)} \min\bigg\{
		H(\hat{P}_Z) + D(\hat{P}_Z \| \tilde{P}_{\alpha}),\ \nonumber\\
		&\log|\Acal| - R + F_3(R;{g}_{\alpha})
		\bigg\}
		- D(\hat{P}_Z \| P_Z),
	\end{align}
	where
	\begin{equation} \label{eq:F2-alpha-tilted}
		F_{3}(R;g_{\alpha}) =
		\min_{\hat{P}_Z'\in\Pcal(\Acal) \colon H(\hat{P}_Z') \ge \log|\Acal|-R} D(\hat{P}_Z' \| \tilde{P}_{\alpha}).
	\end{equation}
	Moreover, it can be written as
	\begin{align} \label{eq:complexity-exponent-randomised-alpha-tilted-bis}
		F_{r}(R{;g}_{\alpha})
		= 
		\begin{cases}
			(1-\alpha)H_{\alpha}(P_Z)+\alpha H_{1-\alpha}(P_Z), &
			\text{regime I},\\
			F_4(R,\alpha),
			& \text{regime II},\\
			\log|\Acal| - R + F_3(R;{g}_{\alpha}), &
			\text{regime III},
		\end{cases}
	\end{align}
	where
	\begin{align}
		F_4(R,\alpha) \coloneqq
		&(1-\alpha)H_{\alpha}(P_Z) + \big(1-\beta_{\alpha,R}\big)H_{\beta_{\alpha,R}}(P_Z) \nonumber\\
		&+ \big(\beta_{\alpha,R} + \alpha-1\big) H(P_{\beta_{\alpha,R}} \| P_Z),
	\end{align}
	$\beta_{\alpha,R}$ is the (unique) solution on $\beta$ to
	\begin{equation*}
		(1-\alpha)H_{\alpha}(P_Z) + \alpha H(\tilde{P}_{\beta} \| P_Z)
		= \log|\Acal|-R+F_3(R;{g}_{\alpha}),
	\end{equation*}
	and the definition of the regimes is shown in~\eqref{eq:complexity-exponent-randomised-alpha-tilted-bis-regimes}.
	
	\begin{figure*}[!h]
		\begin{equation} \label{eq:complexity-exponent-randomised-alpha-tilted-bis-regimes}
			\begin{cases}
				\text{regime I} \  \colon
				&\log|\Acal|-R+F_3(R;{g}_{\alpha}) \ge (1-\alpha)H_{\alpha}(P_Z) + \alpha H(\tilde{P}_{1-\alpha} \| P_Z),\\
				\text{regime II} \  \colon
				&(1-\alpha)H_{\alpha}(P_Z) + \alpha H(P_Z) \le \log|\Acal|-R+F_3(R;{g}_{\alpha}) 
				\le (1-\alpha)H_{\alpha}(P_Z) + \alpha H(\tilde{P}_{1-\alpha} \| P_Z), \\
				\text{regime III} \  \colon
				&\log|\Acal|-R+F_3(R;{g}_{\alpha}) \le (1-\alpha)H_{\alpha}(P_Z) + \alpha H(P_Z).
			\end{cases}
		\end{equation}
		\hrulefill
	\end{figure*}
\end{theorem}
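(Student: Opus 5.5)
The plan has two parts: first obtain the variational expression~\eqref{eq:complexity-exponent-randomised-alpha-tilted} by substituting $g_\alpha$ into Theorem~\ref{thm:complexity-exponent-randomised-mismatched}, and then solve the resulting concave max--min problem to get the three regimes.

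\textbf{Step 1 (substitution).} By~\eqref{eq:decoding-metric-alpha-tilted}, $-g_\alpha(\hat P)=H(\hat P\|\tilde P_\alpha)=H(\hat P)+D(\hat P\|\tilde P_\alpha)$. Hence the first argument of the minimum becomes $H(\hat P_Z)+D(\hat P_Z\|\tilde P_\alpha)$. Likewise $-g_\alpha(\hat P'_Z)-H(\hat P'_Z)=D(\hat P'_Z\|\tilde P_\alpha)$, which gives~\eqref{eq:F2-alpha-tilted}.

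\textbf{Step 2 (the two branches).} Write $c\coloneqq\log|\Acal|-R+F_3(R;g_\alpha)$, which does not depend on $\hat P_Z$. Using the second form in~\eqref{eq:decoding-metric-alpha-tilted}, define
\[
f_1(Q)\coloneqq(1-\alpha)H_\alpha(P_Z)+\alpha H(Q\|P_Z)-D(Q\|P_Z),\qquad f_2(Q)\coloneqq c-D(Q\|P_Z).
\]
Then the exponent is $\max_Q\min\{f_1(Q),f_2(Q)\}$. Both functions are concave in $Q$, since each is an affine function minus a divergence.

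Unconstrained maximiser of $f_1$: we have
\[
\alpha H(Q\|P_Z)-D(Q\|P_Z)=-\sum_z Q(z)\log\frac{Q(z)}{P_Z(z)^{1-\alpha}}.
\]
This is maximised at $Q=\tilde P_{1-\alpha}$, with value $\log\sum_z P_Z(z)^{1-\alpha}=\alpha H_{1-\alpha}(P_Z)$.

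Unconstrained maximiser of $f_2$: it is $Q=P_Z$, with value $c$.

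\textbf{Step 3 (case split).} For a max--min of two concave functions there are three possibilities.
\begin{itemize}
\item If $f_1(\tilde P_{1-\alpha})\le f_2(\tilde P_{1-\alpha})$, the answer is $\max f_1$. This condition reads $c\ge(1-\alpha)H_\alpha+\alpha H(\tilde P_{1-\alpha}\|P_Z)$, which is regime~I. Its value is $(1-\alpha)H_\alpha+\alpha H_{1-\alpha}$.
\item If $f_2(P_Z)\le f_1(P_Z)$, the answer is $c$. This condition reads $c\le(1-\alpha)H_\alpha+\alpha H(P_Z)$, which is regime~III.
\item Otherwise the optimum lies on the set $\{f_1=f_2\}$. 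There the constraint is the linear condition $\alpha H(Q\|P_Z)=c-(1-\alpha)H_\alpha$. This is regime~II, treated in Step~4.
\end{itemize}
To justify the boundary case, I will use the standard argument: if the maximiser had $f_1<f_2$ (or $f_2<f_1$), then moving along the segment towards the unconstrained maximiser of the smaller function would strictly increase the minimum, by concavity.

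\textbf{Step 4 (regime II: I-projection).} Maximising $-D(Q\|P_Z)$ over the linear family $\{Q:\sum_z Q(z)(-\log P_Z(z))=\text{const}\}$ is an I-projection. Its solution is a tilted distribution $\tilde P_\beta$. The value of $\beta$ is fixed by
\[
(1-\alpha)H_\alpha+\alpha H(\tilde P_\beta\|P_Z)=c,
\]
which is exactly the defining equation of $\beta_{\alpha,R}$.

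For uniqueness and existence I will invoke the strict monotonicity of $\beta\mapsto H(\tilde P_\beta\|P_Z)$ from Appendix~\ref{app:properties-tilted-distributions}. This holds under Assumption~\ref{ass:Pz-non-extreme}. The regime~II inequalities then place $\beta$ between $1-\alpha$ and $1$.

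The value is $c-D(\tilde P_\beta\|P_Z)$. Using
\[
H(\tilde P_\beta)=\beta H(\tilde P_\beta\|P_Z)+(1-\beta)H_\beta(P_Z),
\]
we get $D(\tilde P_\beta\|P_Z)=(1-\beta)\big(H(\tilde P_\beta\|P_Z)-H_\beta(P_Z)\big)$. Substituting $c$ from the defining equation of $\beta$ then gives exactly $F_4(R,\alpha)$.

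\textbf{Main obstacle.} I expect the main difficulty to be making the regime-II reduction rigorous. This requires two things: that the optimum indeed lies on $\{f_1=f_2\}$, and that the equality-constrained I-projection is attained in the interior, so that the tilted form applies. This relies on the monotonicity of the tilted quantities established in Appendix~\ref{app:properties-tilted-distributions}.
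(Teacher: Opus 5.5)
Your proposal is correct, but it organises the argument differently from the paper. The paper first restricts the maximisation to the tilted family $\{\tilde{P}_{\beta} : \beta\in\R\}$ via the projection argument of Lemma~\ref{lem:projection-reduction}. This works because the projection onto $\Ttilt(P_Z)$ preserves $H(\cdot\|P_Z)$, on which the first branch depends, and does not increase $D(\cdot\|P_Z)$. It then does one-dimensional calculus: the objective equals $f_1(\beta)$ on $[\beta_{\alpha,R},\infty[$ and $f_2(\beta)$ below it. The function $f_1$ peaks at $\beta=1-\alpha$ and $f_2$ at $\beta=1$, and the three regimes are read off from where $\beta_{\alpha,R}$ sits relative to $1-\alpha$ and $1$.

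You instead stay on the simplex and use that both branches are concave. You identify their unconstrained maximisers $\tilde{P}_{1-\alpha}$ and $P_Z$, which are the same two critical points as the paper's $\beta=1-\alpha$ and $\beta=1$. You bring in the tilted family only at the crossing, through an I-projection onto $\{Q : H(Q\|P_Z)=\kappa\}$.

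Your route has three advantages. The regime inequalities become transparent: they say which concave piece is active at the optimum. You do not need items~2--3 of Lemma~\ref{lem:results-projection}, i.e.\ existence of the projection and its entropy/divergence inequalities. And you only need $\beta_{\alpha,R}$ to exist in regime~II, where it does by the intermediate value theorem and strict monotonicity of $\psi_P'$.

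The paper's parametrisation has its own advantage. It defines $\beta_{\alpha,R}=\inf\Rcal_1$ in every regime and exposes the monotonicity in $\beta$. Both are reused in the proofs of Theorems~\ref{thm:randomised-complexity-alpha-1} and~\ref{thm:optimal-alpha}, which need $\beta_{\alpha,R}$ in regime~III as well.

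The step you flag as the main obstacle is in fact immediate. Take any $Q$ with $H(Q\|P_Z)=\kappa=H(\tilde{P}_{\beta}\|P_Z)$. A direct computation gives $D(Q\|P_Z)-D(Q\|\tilde{P}_{\beta})=(1-\beta)\kappa-\psi_P(\beta)=D(\tilde{P}_{\beta}\|P_Z)$; this is item~1 of Lemma~\ref{lem:results-projection}. Hence $\tilde{P}_{\beta}$ is the constrained minimiser, with no interiority argument needed.
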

\begin{IEEEproof}
	See Appendix~\ref{app:proof-complexity-exponent-randomised-alpha-tilted}.
\end{IEEEproof}

Note that $F_3(R;g_{\alpha})$ coincides with the sphere-packing error exponent~\eqref{eq:sphere-packing-exponent-additive} for the channel with noise distribution $\tilde{P}_{\alpha}$.

\begin{theorem}[Matched] \label{thm:randomised-complexity-alpha-1}
	For the matched decoding metric~\eqref{eq:decoding-metric-matched}, we have
	\begin{equation}
		F_r(R;g_1) = \log|\Acal|-R.
	\end{equation}
\end{theorem}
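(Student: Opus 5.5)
We specialise Theorem~\ref{thm:complexity-exponent-randomised-alpha-tilted} to $\alpha=1$. Then $\tilde{P}_1=P_Z$, and the first term in the minimum becomes $H(\hat{P}_Z)+D(\hat{P}_Z\|P_Z)$. The outer objective therefore reads
\begin{equation*}
\min\left\{H(\hat{P}_Z)+D(\hat{P}_Z\|P_Z),\ \log|\Acal|-R+F_3(R;g_1)\right\}-D(\hat{P}_Z\|P_Z),
\end{equation*}
with $F_3(R;g_1)=\min_{H(\hat{P}'_Z)\ge\log|\Acal|-R} D(\hat{P}'_Z\|P_Z)=E_{\sp}(R,P_Z)$ by~\eqref{eq:sphere-packing-exponent-additive}. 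We will prove matching upper and lower bounds equal to $\log|\Acal|-R$.

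\emph{Upper bound.} For every $\hat{P}_Z$ the objective is at most
\begin{equation*}
\log|\Acal|-R+E_{\sp}(R)-D(\hat{P}_Z\|P_Z).
\end{equation*}
Let $\hat{P}^\star_Z$ denote the minimiser in the definition of $F_3$. By definition $H(\hat{P}^\star_Z)\ge\log|\Acal|-R$ and $D(\hat{P}^\star_Z\|P_Z)=E_{\sp}(R)$. It remains to bound the objective at an arbitrary $\hat{P}_Z$, and we split into two cases.
\begin{itemize}
\item[(a)] If $D(\hat{P}_Z\|P_Z)\ge E_{\sp}(R)$, the second branch already gives a value at most $\log|\Acal|-R$.
\item[(b)] If $D(\hat{P}_Z\|P_Z)<E_{\sp}(R)$, then $\hat{P}_Z$ lies outside the feasible set of $F_3$, so $H(\hat{P}_Z)<\log|\Acal|-R$. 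The first branch then gives a value $H(\hat{P}_Z)<\log|\Acal|-R$.
\end{itemize}
In both cases the objective is at most $\log|\Acal|-R$, hence $F_r(R;g_1)\le\log|\Acal|-R$.

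\emph{Lower bound.} Evaluate the objective at $\hat{P}_Z=\hat{P}^\star_Z$. The first branch equals $H(\hat{P}^\star_Z)$, which is at least $\log|\Acal|-R$. The second branch equals $\log|\Acal|-R$. Hence the objective at this point equals exactly $\log|\Acal|-R$, and $F_r(R;g_1)\ge\log|\Acal|-R$.

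\emph{Edge cases.} For $R\ge C$ we have $E_{\sp}=0$ and $\hat{P}^\star_Z=P_Z$; the argument goes through unchanged, since $H(P_Z)\ge\log|\Acal|-R$ there. For $R>\log|\Acal|$ the feasibility constraint is vacuous, so $F_3=0$ and the same computation applies.

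The only delicate point is the case analysis in the upper bound. It relies on the contrapositive "$D(\hat{P}_Z\|P_Z)<E_{\sp}$ implies $H(\hat{P}_Z)<\log|\Acal|-R$", which follows directly from the definition of $E_{\sp}$ as a minimum over $\{H\ge\log|\Acal|-R\}$. No appeal to the regime decomposition~\eqref{eq:complexity-exponent-randomised-alpha-tilted-bis} is needed.
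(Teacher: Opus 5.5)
Your proof is correct, and it takes a different and more elementary route than the paper. The paper specialises the closed-form regime decomposition \eqref{eq:complexity-exponent-randomised-alpha-tilted-bis} to $\alpha=1$. It locates the threshold via $\psi_P'(\beta_{1,R})=R-\log|\Acal|-E_{\sp}(R)$ and shows that $R\ge C$ falls in regime~III. For $R<C$ it uses the tilted form of the sphere-packing minimiser (Lemma~\ref{lem:minimiser-Esp}) and monotonicity of $\psi_P'$ to get $\beta_{1,R}=\alpha_R\in[0,1]$. This is regime~II, with value $H(\tilde{P}_{\alpha_R})=\log|\Acal|-R$. That route relies on the tilted-family machinery (Lemma~\ref{lem:projection-reduction}, and strict convexity of $\psi_P$ under Assumption~\ref{ass:Pz-non-extreme}). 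Its merit is that it places $\alpha=1$ inside the same regime picture reused in the proof of Theorem~\ref{thm:optimal-alpha}.

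You instead work directly with the primal formula \eqref{eq:complexity-exponent-randomised-alpha-tilted} and use the one structural fact specific to the matched metric: the objective of $F_3(R;g_1)$ is exactly the penalty $D(\cdot\|P_Z)$. The contrapositive dichotomy then bounds the objective by $\log|\Acal|-R$ everywhere, and any sphere-packing minimiser attains this value. No projection, no tilted form of the minimiser, and no split at $R=C$ are needed. It also becomes transparent that the maximiser is the sphere-packing minimiser, where the two branches balance.

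Two minor points. First, existence of $\hat{P}^\star_Z$ follows from compactness of $\{H\ge\log|\Acal|-R\}$ and continuity of $D(\cdot\|P_Z)$. That divergence is finite because $g_1$ being real-valued forces $P_Z$ to have full support, which both proofs use implicitly. Second, you should drop the $R>\log|\Acal|$ remark. There $\bar{q}_r\ge 1$ gives $F_r\ge 0>\log|\Acal|-R$, so the formula you specialise is not valid in that range; the union-to-sum step, Lemma~\ref{prop:ch3-replace-union-sum}, relies on $\log|\Acal|-R>0$.
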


\begin{IEEEproof}
	See Appendix~\ref{app:proof-alpha-1-is-not-optimal}.
\end{IEEEproof}

In particular, Theorem~\ref{thm:randomised-complexity-alpha-1} implies that the choice $\alpha=1$, is not optimal in general for the randomised complexity exponent: comparing to~\eqref{eq:complexity-exponent-deterministic-alpha-bis}, we have $F_{r}(R;g_1) \ge F_{d}(R;g_{1})$.
And, in particular, for $R < R_c = \log|\Acal| - H(\tilde{P}_{1/2})$, the optimal guessing exponent is  $F_d(R;g_1) = H_{1/2}(P_Z)$; in that regime, 
\begin{align*}
	F_r(R;g_1)
	&= \log|\Acal|-R\\
	&> H(\tilde{P}_{1/2})\\
	&= H_{1/2}(P_Z) + D(\tilde{P}_{1/2}\| P_Z)\\
	&> F_d(R;g_1).
\end{align*}
See also Section~\ref{sec:numerical-results} for a numerical example.

The sub-optimality of the choice $\alpha=1$ to the complexity exponent becomes less surprising once we recall that this choice is sub-optimal for classical randomised guessing ($R=0$) too. In that case, as noted in~\cite{hanawal2010}, the average guesswork is
\begin{equation*}
	\E\left[ \frac{1}{P_{\Zv}(\Zv)} \right]
	= \sum_{\zv\in\Acal^n} P_{\Zv}(\zv) \frac{1}{P_{\Zv}(\zv)}
	= |\Acal|^n,
\end{equation*}
which is no better than \textit{any} deterministic guessing strategy; furthermore, it has the same average guesswork as the randomised strategy that draws sequences uniformly, that is, with the same probability irrespectively of the true $P_{\Zv}$.

Similarly to deterministic guessing decoding~\cite{duffy2019}, randomised guessing decoding too can be seen as a race between two guessing processes: identifying either the correct codeword, or one of the many incorrect ones (chosen with uniform distribution).
What the proof of Theorem~\ref{thm:randomised-complexity-alpha-1} unravels is that, for $R \ge C$, the complexity of the latter process dominates (and its exponent equals $\log|\Acal|-R$), and for $R \le C$, the balance between the two processes is such that the complexity exponent is also $\log|\Acal|-R$. It is never the case that the process of identifying the correct codeword (which has exponent $\log|\Acal|$) dominates.

\begin{theorem}[Universal] \label{thm:complexity-exponent-randomised-universal}
	The ensemble-tight random-coding complexity exponent of randomised noise guessing decoding with universal decoding metric~\eqref{eq:decoding-metric-universal} is
	\begin{align}
		F_{r}(R;g_{H}) = F_d(R;g_{1})
	\end{align}
	as in~\eqref{eq:complexity-exponent-deterministic-alpha}.
\end{theorem}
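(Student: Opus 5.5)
We will specialise Theorem~\ref{thm:complexity-exponent-randomised-mismatched} to $g=g_H=-H$ and then show that the resulting expression coincides with~\eqref{eq:complexity-exponent-deterministic-alpha}.

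\textbf{Step 1: the constant $F_3$.} With $g=-H$ we have $-g(\hat{P}'_Z)-H(\hat{P}'_Z)=0$ for every $\hat{P}'_Z$. The constraint set $\{H(\hat{P}'_Z)\ge\log|\Acal|-R\}$ is nonempty, since it contains the uniform distribution. Hence $F_3(R;g_H)=0$.

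\textbf{Step 2: the resulting exponent.} Substituting into Theorem~\ref{thm:complexity-exponent-randomised-mismatched} gives
\begin{equation*}
F_r(R;g_H)=\max_{\hat{P}_Z\in\Pcal(\Acal)}\min\left\{H(\hat{P}_Z),\ \log|\Acal|-R\right\}-D(\hat{P}_Z\|P_Z).
\end{equation*}
This is exactly the expression~\eqref{eq:complexity-exponent-deterministic-alpha} for $F_d(R;g_1)$, which by Theorem~\ref{thm:complexity-exponent-deterministic-universal} also equals $F_d(R;g_H)$. Ensemble-tightness is inherited from Theorem~\ref{thm:complexity-exponent-randomised-mismatched}.

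\textbf{Step 3: checking the hypotheses.} Theorem~\ref{thm:complexity-exponent-randomised-mismatched} requires $g$ to satisfy Assumption~\ref{ass:g-lim-log}. For $g_H$ this holds because $\max_{\hat{P}}\big(H(\hat{P})+g_H(\hat{P})\big)=0$. Equivalently, $\sum_{\zv}e^{-nH(\hat{P}_{\zv})}\doteq\sum_{\hat{P}}|\Tcal_n(\hat{P})|e^{-nH(\hat{P})}$, which is polynomial in $n$.

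\textbf{Main obstacle.} The argument is a direct substitution; the only point needing care is Step~1. If the general theorem's proof had relied on some non-degeneracy of the minimiser in $F_3$, one would have to check that the trivial value $F_3=0$ is still valid there. Since the uniform type is always feasible, we do not expect this to cause any difficulty.
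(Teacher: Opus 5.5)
Your proposal is correct and is exactly the paper's argument: substitute $g_H=-H$ into Theorem~\ref{thm:complexity-exponent-randomised-mismatched}, note that the objective defining $F_3(R;g_H)$ vanishes identically on a nonempty feasible set, and read off~\eqref{eq:complexity-exponent-deterministic-alpha}. You did not list one further hypothesis, the lower bound $\frac{1}{n}\log Q_u(\zv)\ge C_0$ used in Lemma~\ref{prop:ch3-replace-union-sum}, but it holds trivially here because $-H\ge-\log|\Acal|$ and the normalising sum is at most $|\Pcal_n(\Acal)|$, which is polynomial in $n$.
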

\begin{IEEEproof}
	Direct application of \eqref{eq:decoding-metric-universal} to Theorem~\ref{thm:complexity-exponent-randomised-mismatched}.
\end{IEEEproof}

\subsection{Discussion}

In deterministic noise guessing decoding, the optimal solution both for error and complexity exponents can be obtained with the matched decoding metric ($\alpha=1$). In contrast, this is not the case in randomised noise guessing: while $\alpha=1$ is optimal for error exponent (Theorem~\ref{thm:error-exponent-alpha-properties}), it is not optimal for complexity exponent (Theorem~\ref{thm:randomised-complexity-alpha-1}). This raises the question of what is the optimal value of $\alpha$ for the $\alpha$-tilted metric, in terms of the complexity exponent, and if there are values that allow to simultaneously achieve optimal error and complexity exponents.
The next result answers this question by providing the optimal value of $\alpha$ that achieves both optimal exponents.

\begin{theorem}[Optimal~$\alpha$] \label{thm:optimal-alpha}
	For rate~$R>0$, let $\alpha_R$ denote the (unique) value of $\alpha>0$ satisfying $H(\tilde{P}_{\alpha}) = \log|\Acal| - R$.
	The choice
	\begin{equation}
		\alpha^{\star}(R)
		= \begin{cases}
			\frac{1}{2}, & 0 \le R \le R_c,\\
			\alpha_R, & R_c \le R \le C,\\
			1, &R \ge C
		\end{cases}
	\end{equation}
	minimises the $\alpha$-tilted complexity exponent at rate~$R$, where $C$ is the channel capacity~\eqref{eq:channel-capacity}, and $R_c$ the critical rate~\eqref{eq:critical-rate}.
	It achieves the optimal complexity exponent, that is,
	\begin{equation}
		F_r\big(R;{g}_{\alpha^{\star}(R)}\big) = F_d(R;g_1).
	\end{equation}
	Moreover, it achieves the optimal error exponent, that is,
	\begin{equation}
		E_r\big(R;{g}_{\alpha^{\star}(R)}\big) = E_d(R;g_1).
	\end{equation}
\end{theorem}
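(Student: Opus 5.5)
The plan has three parts: handle the error exponent, prove the lower bound $F_r(R;g_\alpha)\ge F_d(R;g_1)$ for every $\alpha$, and then check that $\alpha^\star(R)$ attains equality.

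\emph{Error exponent.} This follows directly from Theorem~\ref{thm:error-exponent-alpha-properties}.
For $0\le R\le R_c$ we have $\alpha^\star=1/2$, and property~3 applies.
For $R_c\le R\le C$ we have $\alpha^\star=\alpha_R$, and property~4 applies.
For $R\ge C$ we have $\alpha^\star=1$, and property~2 applies.

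\emph{Lower bound (optimality).} Lemma~\ref{lem:optimal-guessing-strategy} states that deterministic guessing with the matched metric minimises the average complexity among all strategies. A randomised sampler is a mixture over infinite query sequences, and removing repeated queries can only shorten the search, so its average complexity is at least that of some deterministic ranking. Hence $\bar q_r(n,M;g_\alpha)\ge \bar q_d(n,M;g_1)$ for every $n$. Because the exponents are ensemble-tight, this gives $F_r(R;g_\alpha)\ge F_d(R;g_1)$ for all $\alpha$. So any $\alpha$ achieving equality is a minimiser, and only achievability remains.

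\emph{Achievability.} I will use the variational form of Theorem~\ref{thm:complexity-exponent-randomised-alpha-tilted} and compare it with \eqref{eq:complexity-exponent-deterministic-alpha}. The first term inside the minimum is $H(\hat P)+D(\hat P\|\tilde P_\alpha)$. It suffices to show
\[
\min\{H(\hat P)+D(\hat P\|\tilde P_\alpha),\,\log|\Acal|-R+F_3\}-D(\hat P\|P_Z)\le F_d(R;g_1)
\]
for every $\hat P$. Both sides are concave-type problems, so I will evaluate the maximiser explicitly.

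\textbf{Case $R\ge C$, $\alpha=1$.} Theorem~\ref{thm:randomised-complexity-alpha-1} gives $\log|\Acal|-R$, which matches the third branch of \eqref{eq:complexity-exponent-deterministic-alpha-bis}.

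\textbf{Case $R\le R_c$, $\alpha=1/2$.} Here $\tilde P_{1/2}$ satisfies $H(\tilde P_{1/2})=\log|\Acal|-R_c\ge\log|\Acal|-R$. Thus $F_3(R;g_{1/2})=0$, since $\tilde P_{1/2}$ is feasible in \eqref{eq:F2-alpha-tilted}. Dropping the second term of the minimum, I will bound
\[
\max_{\hat P}\; H(\hat P)+D(\hat P\|\tilde P_{1/2})-D(\hat P\|P_Z)=\max_{\hat P}\; -\sum_z\hat P(z)\log\tilde P_{1/2}(z)+\sum_z\hat P(z)\log P_Z(z).
\]
Using $\tilde P_{1/2}(z)=P_Z(z)^{1/2}/\sum_{z'}P_Z(z')^{1/2}$, the objective is linear in $\hat P$ and equals $\tfrac12 H_{1/2}(P_Z)+\tfrac12\E_{\hat P}[\log P_Z]$. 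This is maximised at a point mass, which gives the wrong value.

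So I need the regime I formula instead. At $\alpha=1/2$, regime I yields $\tfrac12H_{1/2}+\tfrac12H_{1/2}=H_{1/2}(P_Z)$. The condition for regime I is $\log|\Acal|-R\ge \tfrac12 H_{1/2}+\tfrac12 H(\tilde P_{1/2}\|P_Z)$. Since $H(\tilde P_{1/2}\|P_Z)=2H(\tilde P_{1/2})-H_{1/2}$ by a short computation, the right-hand side equals $H(\tilde P_{1/2})=\log|\Acal|-R_c$. The condition is therefore exactly $R\le R_c$, and this matches the first branch of \eqref{eq:complexity-exponent-deterministic-alpha-bis}.

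\textbf{Case $R_c\le R\le C$, $\alpha=\alpha_R$.} We have $H(\tilde P_{\alpha_R})=\log|\Acal|-R$, so again $F_3=0$ and the threshold is $\log|\Acal|-R$. Since $\alpha_R\in[1/2,1]$, I will check that we are in regime II. Regime II requires
\[
(1-\alpha)H_\alpha+\alpha H(P_Z)\le H(\tilde P_{\alpha})\le(1-\alpha)H_\alpha+\alpha H(\tilde P_{1-\alpha}\|P_Z).
\]
Using the identity $H(\tilde P_\alpha)=(1-\alpha)H_\alpha+\alpha H(\tilde P_\alpha\|P_Z)$ and the monotonicity of $\beta\mapsto H(\tilde P_\beta\|P_Z)$ (decreasing in $\beta$), these inequalities reduce to $1-\alpha\le\alpha\le1$.

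Regime II also requires the defining equation for $\beta$, which becomes $H(\tilde P_\beta\|P_Z)=H(\tilde P_{\alpha_R}\|P_Z)$, so $\beta_{\alpha,R}=\alpha_R$. Substituting into $F_4$ gives
\[
F_4(R,\alpha_R)=2(1-\alpha_R)H_{\alpha_R}+(2\alpha_R-1)H(\tilde P_{\alpha_R}\|P_Z)=F_2(R,\alpha_R),
\]
which matches the middle branch.

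\emph{Expected obstacle.} The main difficulty is the set of tilted-family identities: the cross-entropy decomposition $H(\tilde P_\alpha)=(1-\alpha)H_\alpha+\alpha H(\tilde P_\alpha\|P_Z)$, and the monotonicity of $\beta\mapsto H(\tilde P_\beta\|P_Z)$ and $\alpha\mapsto H(\tilde P_\alpha)$. These give the uniqueness of $\beta$ and the regime membership. I expect them to be available from Appendix~\ref{app:properties-tilted-distributions}; otherwise I will derive them by differentiating the log-partition function $\alpha\mapsto\log\sum_zP_Z(z)^\alpha$.
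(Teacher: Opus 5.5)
Your proposal is correct in substance, but it proves minimality by a different route from the paper. The error-exponent part is the same as the paper's, via Theorem~\ref{thm:error-exponent-alpha-properties}.

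\textbf{How the two routes differ.} The paper never uses an operational lower bound. It studies $\alpha\mapsto F_r(R;g_\alpha)$ directly from the closed forms of Theorem~\ref{thm:complexity-exponent-randomised-alpha-tilted}. It splits $\alpha>0$ into two regions: region~A ($\alpha\le\alpha_R$, where $F_3=0$) and region~B ($\alpha>\alpha_R$, where $F_3=D(\tilde{P}_{\alpha_R}\|\tilde{P}_\alpha)$ and $\beta_{\alpha,R}=\alpha_R$). Differentiating within regimes I--III, it shows the function is non-increasing on region~A, is minimised at $1/2$ in regime~I of region~B, and is non-decreasing in regimes II--III of region~B. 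It then evaluates the minimiser and compares it with Theorem~\ref{thm:complexity-exponent-deterministic-alpha}.

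You instead extend Lemma~\ref{lem:optimal-guessing-strategy} to randomised samplers. The de-duplicated sample path is a ranking independent of the set of noise sequences that hit codewords, and de-duplication never increases the stopping time. This gives $\bar{q}_r(n,M;g)\ge\bar{q}_d(n,M;g_1)$ for every $n$ and every metric $g$, so minimality of $\alpha^\star(R)$ reduces to checking equality at three points.

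\textbf{What each route buys.} Your route is shorter and avoids the regime bookkeeping. It also makes ``optimal complexity exponent'' operationally meaningful: no randomised noise-guesser with any type-based metric beats deterministic matched guessing. You do not need ensemble-tightness for this, since $\limsup$ preserves the inequality. What your route does not give is the paper's picture of $\alpha\mapsto F_r(R;g_\alpha)$ as decreasing then increasing, which locates the minimiser rather than merely certifying it.

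\textbf{Two slips to fix in the case $R\le R_c$.}
\begin{enumerate}
\item You claim $F_3(R;g_{1/2})=0$ because $H(\tilde{P}_{1/2})=\log|\Acal|-R_c\ge\log|\Acal|-R$, but that inequality is reversed. For $R<R_c$, $\tilde{P}_{1/2}$ is infeasible and $F_3(R;g_{1/2})=D(\tilde{P}_{\alpha_R}\|\tilde{P}_{1/2})>0$. This does not break your argument. The regime~I condition reads $\log|\Acal|-R+F_3(R;g_{1/2})\ge H(\tilde{P}_{1/2})$, and $F_3\ge0$ only helps, so $R\le R_c$ still suffices. You just cannot claim the condition is ``exactly'' $R\le R_c$ by that reasoning.
\item Your abandoned variational computation dropped a term. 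In fact $H(\hat{P})+D(\hat{P}\|\tilde{P}_{1/2})-D(\hat{P}\|P_Z)=H(\hat{P})+\tfrac12 H_{1/2}(P_Z)+\tfrac12\E_{\hat{P}}[\log P_Z]$. By Gibbs' inequality this is maximised at $\hat{P}=\tilde{P}_{1/2}$, with value $H_{1/2}(P_Z)$, not at a point mass. So dropping the second term of the minimum gives $F_r(R;g_{1/2})\le H_{1/2}(P_Z)$ for every $R$. Combined with your lower bound, this settles $R\le R_c$ without any regime check.
\end{enumerate}
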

\begin{IEEEproof}
	See Appendix~\ref{app:proof-optimal-alpha}.
\end{IEEEproof}

Interestingly, the optimal value of $\alpha$ depends on the value of the rate~$R$. So, even if the decoder knows the channel law, it is not completely obvious how it should be used to implement an optimal randomised guessing decoder: in fact, using an $\alpha$-tilted decoder, the parameter $\alpha$ should be tuned according to the code rate.
The contrast with the universal decoding metric~\eqref{eq:u-decoding-metric} should be emphasised: remarkably, the universal decoder not only needs no adjustment for each code rate, but dispenses knowledge of the channel law altogether.

Furthermore, in this context, an interesting interpretation for the critical rate~$R_c$ emerges: for rates below this threshold, the optimal randomised decoding strategy is the same as that for guessing a single sequence ($R=0$), namely, $\alpha=1/2$; so, effectively, the randomised guesser can `ignore' the effect of the additional sequences that correspond to incorrect codewords. Increasing the value of~$R$, the optimal strategy shifts up to the point that the uniformly distributed additional sequences are so numerous that the complexity is dominated by finding one of them, so even $\alpha=1$ is optimal.

\section{Numerical Results} \label{sec:numerical-results}

In this section we illustrate the previous error and complexity exponents in a simple example: a binary symmetric channel~(BSC) with cross-over probability $0.1$. The capacity of this channel is $C \approx 0.53$~bits, and the critical rate is~$R_c \approx 0.19$~bits.
Figure~\ref{fig:ch3-error-exponents} shows the sphere-packing error exponent~\eqref{eq:sphere-packing-exponent-additive}, the $\alpha$-tilted deterministic error exponent (Theorem~\ref{thm:error-exponent-deterministic-guessing-alpha}), and $\alpha$-tilted randomised error exponents (Theorem~\ref{thm:error-exponent-randomised-guessing-alpha}), for different values of~$\alpha$. The circles indicate the corresponding maximum achievable rates.
We observe that, below the critical rate, $\alpha\ge1/2$ achieves the optimal exponent and, above that, the curves for $1/2 < \alpha < 1$ match the optimal exponent up to a certain rate (Theorem~\ref{thm:error-exponent-alpha-properties}).

\begin{figure}[]
	\centering
	\includegraphics[width=0.9\linewidth]{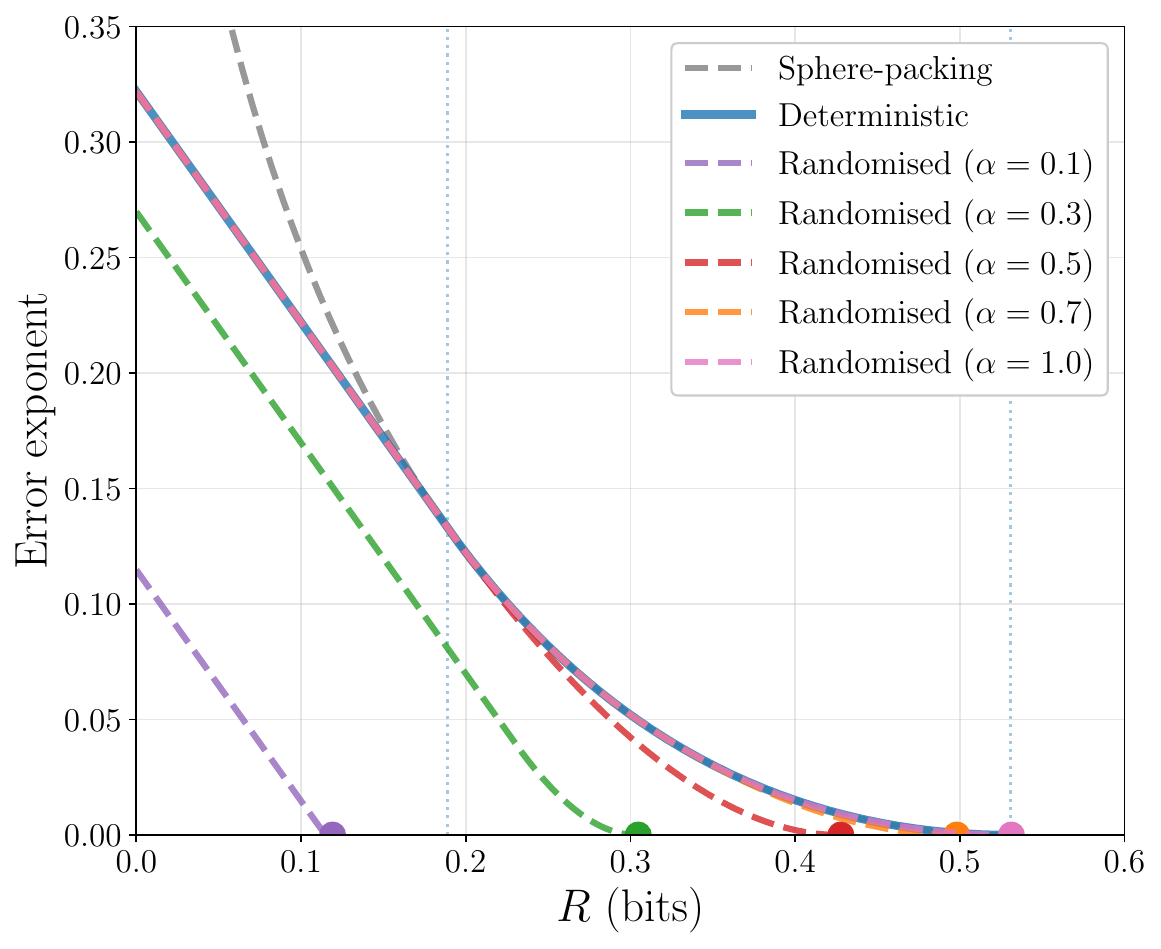}
	\caption{Error exponents as a function of the rate~$R$ in a BSC with cross-over probability~$0.1$. The circles indicate the maximum achievable rate. All units are in base~$2$.}
	\label{fig:ch3-error-exponents}
\end{figure}

Figure~\ref{fig:ch3-complexity-exponents} depicts the complexity exponent of deterministic decoding with the $\alpha$-tilted metric (Theorem~\ref{thm:complexity-exponent-deterministic-alpha}), randomised decoding with $\alpha$-tilted metric for different values of $\alpha$ (Theorem~\ref{thm:complexity-exponent-randomised-alpha-tilted}), and the upper bound~\eqref{eq:bound-complexity-exponent-deterministic-matched}.
We observe that the upper bound is not tight around the corner, specifically, for rates $R_c \le R \le C$.
In accordance with Theorem~\ref{thm:optimal-alpha}, the choice $\alpha=1/2$ is optimal for rates $R \le R_c$; for $R_c \le R \le C$, the optimal choice $\alpha^{\star}(R) = \alpha_R$ depends explicitly on the rate; and for $R \ge C$, even $\alpha=1$ is enough.

\begin{figure}[!t]
	\centering
	\includegraphics[width=0.9\linewidth]{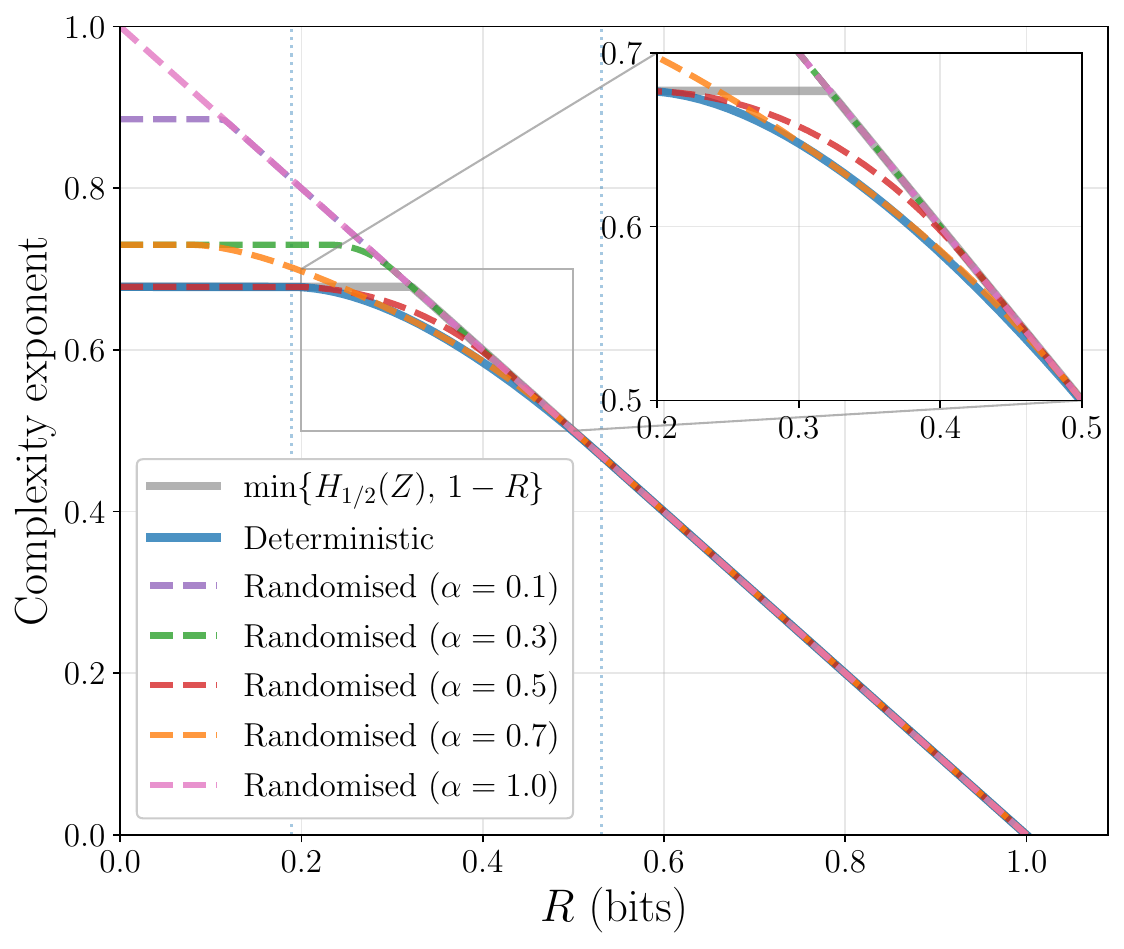}
	\caption{Complexity exponents as a function of the rate~$R$ in a BSC with cross-over probability~$0.1$. All units are in base~$2$.}
	\label{fig:ch3-complexity-exponents}
\end{figure}

A different visualisation is presented in Figure~\ref{fig:ch3-error-complexity-alpha}, where the rate~$R$ is fixed and both error and complexity exponents of randomised guessing decoding are shown as a function of~$\alpha$. 
The complexity exponent is minimised at the value $\alpha^{\star}(R)$, which also affords optimal error exponent.

\begin{figure*}[!t]
	\centering
	\subfloat[$R=0.1$\label{fig:2a}]{%
			\includegraphics[width=0.32\linewidth]{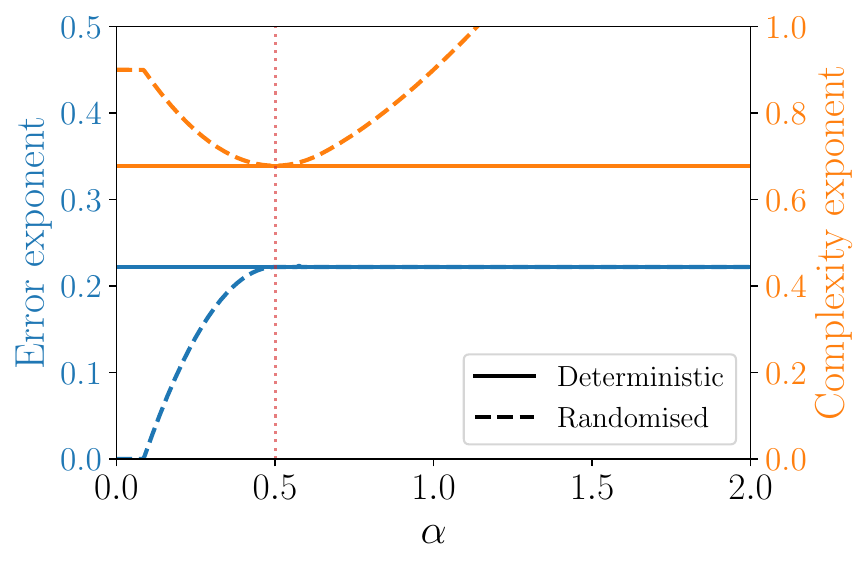}}
	\subfloat[$R=0.3$\label{fig:2b}]{%
		\includegraphics[width=0.32\linewidth]{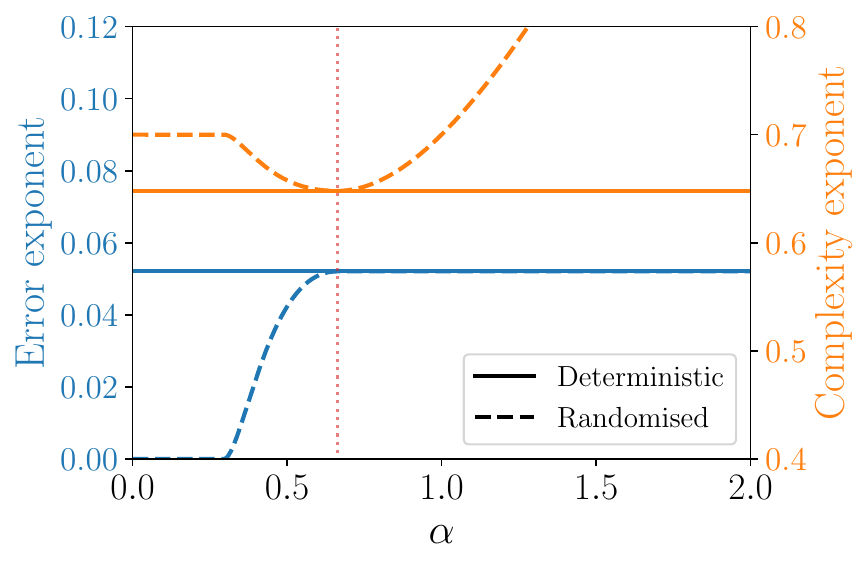}}
	\subfloat[$R=0.45$\label{fig:2c}]{%
		\includegraphics[width=0.32\linewidth]{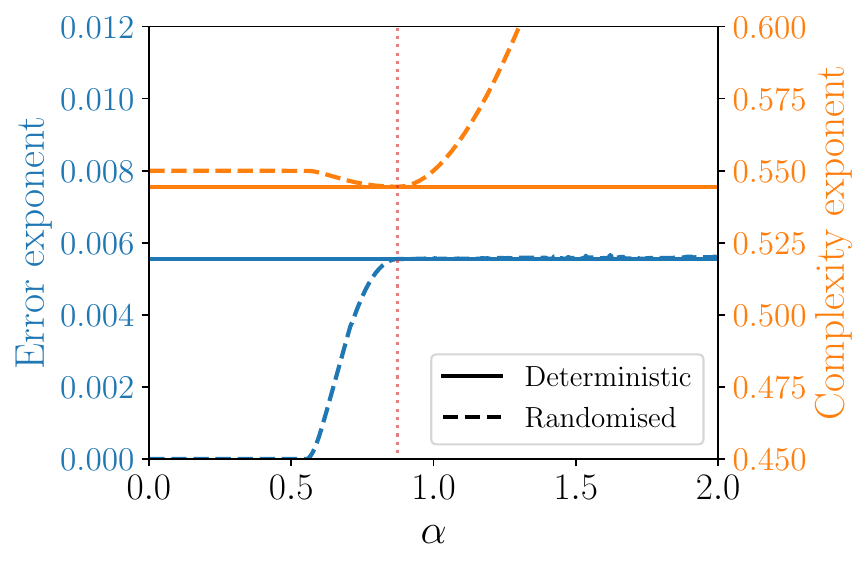}}
	\caption{Error and complexity exponents of $\alpha$-tilted randomised decoding as a function of~$\alpha$.}
	\label{fig:ch3-error-complexity-alpha}
\end{figure*}

\section{Conclusion} \label{sec:conclusion}

In this work we have studied error and complexity exponents of deterministic and randomised noise guessing decoding in memoryless additive channels. Different decoding metrics were considered: mismatched, $\alpha$-tilted (which includes the matched metric), and a universal metric based on the empirical entropy.
It is interesting to note the dependency of randomised decoding with $\alpha$-tilted metrics, for which the parameter~$\alpha$ affects both the error and complexity exponents. We have characterised the value of $\alpha$ that achieves both optimal error and complexity exponents, and that depends on the value of the code rate. This is in contrast to randomised decoding with the universal decoding metric, which achieves both optimal error and complexity exponents uniformly over code rates, and even being ignorant of the channel law.


\appendices

\section{Proof of Lemma~\ref{lem:simultaenous-exponents}} \label{app:proof-lemma-simultanous-exponents}

\begin{IEEEproof}[Proof of Lemma~\ref{lem:simultaenous-exponents}]
	Since $E_d(R)$ and $F_d(R)$ are respectively random-coding error and complexity exponents, there exist sequences $\epsilon(n) \to 0$ and $\delta(n) \to 0$ such that
	\begin{equation*}
		\E\left[p_{e,d}(\rCcal_n;u_n) \right] \le e^{-n\left( E_d(R) - \epsilon(n) \right)}
	\end{equation*}
	and
	\begin{equation*}
		\E\left[ q_d(\rCcal_n;u_n) \right] \le e^{n\left( F_d(R) + \delta(n) \right)},
	\end{equation*}
	with $|\rCcal_n| \le e^{nR}$, for every $n\in\N$.
	Consider the sequence given by $\eta(n) \coloneqq \max\left\{ \epsilon(n),\ \delta(n) \right\}$, which satisfies $\eta(n) \to 0$. Then, for each $n\in\N$, we have
	\begin{align*}
		&\hspace{-1em}\Pbb\Bigg(
		\left\{ p_{e,d}(\rCcal_n;u_n) \le 4e^{-n\left( E_d(R) - \eta(n) \right)} \right\}\\
		&\hspace{1em}\cap
		\left\{ q_d(\rCcal_n;u_n) \le 4e^{n\left( F_d(R) + \eta(n) \right)} \right\}
		\Bigg)\\
		&= 1 - \Pbb\Bigg(
		\left\{ p_{e,d}(\rCcal_n;u_n) > 4e^{-n\left( E_d(R) - \eta(n) \right)} \right\}\\
		&\hspace{5em}\cup
		\left\{ q_d(\rCcal_n;u_n) > 4e^{-n\left( F_d(R) + \eta(n) \right)} \right\}
		\Bigg)\\
		&\ge 1 - \Bigg[
		\Pbb\left(
		p_{e,d}(\rCcal_n;u_n) > 4e^{-n\left( E_d(R) - \eta(n) \right)} \right)\\
		&\hspace{4em}+ \Pbb\left(
		q_d(\rCcal_n;u_n) > 4e^{n\left( F_d(R) + \eta(n) \right)}
		\right)
		\Bigg]\\
		&\ge 1 - \left[
		\frac{\E\left[ p_{e,d}(\rCcal_n;u_n) \right]}{4e^{-n\left( E_d(R)-\eta(n) \right)}}
		+ \frac{\E\left[ q_{d}(\rCcal_n;u_n) \right]}{4e^{n\left( F_d(R)+\eta(n) \right)}}
		\right]\\
		&\ge \frac{1}{2},
	\end{align*}
	where the first inequality is due to the union bound; the second, to Markov's inequality; and the third, to the definition of $\eta(n)$.
	This means that we can find (with quite high probability) a sequence of codes $(\Ccal_n)_{n\in\N}$ simultaneously satisfying
	\begin{equation*}
		p_{e,d}(\Ccal_n;u_n) \le 4e^{-n\left( E_d(R) - \eta(n) \right)}
	\end{equation*}
	and
	\begin{equation*}
		q_d(\Ccal_n;u_n) \le 4e^{n\left( F_d(R) + \eta(n) \right)}.
	\end{equation*}
	Taking the limit superior of the normalised logarithm of each quantity yields \eqref{eq:ch2-error-exponent-sequence-Cn} and \eqref{eq:ch2-complexity-exponent-sequence-Cn}.
\end{IEEEproof}

\section{Properties of Tilted Distributions}
\label{app:properties-tilted-distributions}

Given $P_Z\in\Pcal(\Acal)$, let
\begin{equation} \label{eq:def-psi}
	\psi_{P}(\alpha) \coloneqq (1-\alpha) H_{\alpha}(P_Z) = \log \left( \sum_{z\in\Acal} P_Z(z)^{\alpha} \right).
\end{equation}

\begin{proposition}\label{prop:properties-psi}
	The following holds:
	\begin{enumerate}
		\item $\psi_P'(\alpha) = -H(\tilde{P}_{\alpha} \| P_Z)$;
		\item $\psi_P''(\alpha) \ge 0$, with strict inequality, and only if, $P_Z$ is non-uniform and non-deterministic;
		\item $D(\tilde{P}_{\alpha} \| \tilde{P}_{\beta}) = (\alpha-\beta) \psi_P'(\alpha) - \psi_P(\alpha) + \psi_P(\beta)$;
		\item $H(\tilde{P}_{\alpha}) = \psi_P(\alpha) - \alpha \psi_P'(\alpha)$;
		\item $\log \tilde{P}_\alpha(z)  = \alpha \log P_Z(z) - \psi_P(\alpha)$, for every $z\in\Acal$.
	\end{enumerate}
\end{proposition}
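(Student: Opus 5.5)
The plan is to prove the five items by direct differentiation of $\psi_P(\alpha)=\log\sum_z P_Z(z)^\alpha$, working throughout on $\supp(P_Z)$ (terms with $P_Z(z)=0$ vanish for $\alpha>0$, and we adopt the same convention for the tilted family). We first establish item~5, since everything else follows from it. By definition~\eqref{eq:def-alpha-tilt}, $\tilde P_\alpha(z)=P_Z(z)^\alpha/e^{\psi_P(\alpha)}$, so taking logarithms gives $\log\tilde P_\alpha(z)=\alpha\log P_Z(z)-\psi_P(\alpha)$. The identity $\psi_P(\alpha)=(1-\alpha)H_\alpha(P_Z)$ is immediate from the definition of Rényi entropy.

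For item~1, differentiate under the finite sum:
\[
\psi_P'(\alpha)=\frac{\sum_z P_Z(z)^\alpha\log P_Z(z)}{\sum_{z'}P_Z(z')^\alpha}=\sum_z\tilde P_\alpha(z)\log P_Z(z).
\]
This equals $-H(\tilde P_\alpha\|P_Z)$, because the cross-entropy satisfies $H(P\|Q)=D(P\|Q)+H(P)=-\sum_z P(z)\log Q(z)$.

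For item~2, differentiate once more. This gives $\psi_P''(\alpha)=\sum_z\tilde P_\alpha(z)(\log P_Z(z))^2-\big(\sum_z\tilde P_\alpha(z)\log P_Z(z)\big)^2$, which is the variance of $\log P_Z(Z)$ under $Z\sim\tilde P_\alpha$, hence nonnegative. The variance vanishes iff $\log P_Z$ is constant on $\supp(\tilde P_\alpha)=\supp(P_Z)$, that is, iff $P_Z$ is uniform on its support. The main care point is matching this to the stated condition. ``Uniform on its support'' covers the deterministic case (support of size one) and the fully uniform case. If Assumption~\ref{ass:Pz-non-extreme} is read literally, it also allows intermediate cases such as uniform on a strict subset, where $\psi_P''\equiv 0$. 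We will therefore state the strictness claim as holding iff $P_Z$ is not uniform on its support, and note that this is what Assumption~\ref{ass:Pz-non-extreme} is meant to exclude. The ``if and only if'' follows directly from the variance characterisation.

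For items~3 and~4, substitute item~5 into the definitions. First,
\[
D(\tilde P_\alpha\|\tilde P_\beta)=\sum_z\tilde P_\alpha(z)\big[(\alpha-\beta)\log P_Z(z)-\psi_P(\alpha)+\psi_P(\beta)\big].
\]
Using item~1 to identify $\sum_z\tilde P_\alpha\log P_Z=\psi_P'(\alpha)$, this becomes $(\alpha-\beta)\psi_P'(\alpha)-\psi_P(\alpha)+\psi_P(\beta)$. Similarly,
\[
H(\tilde P_\alpha)=-\sum_z\tilde P_\alpha(z)\big[\alpha\log P_Z(z)-\psi_P(\alpha)\big]=\psi_P(\alpha)-\alpha\psi_P'(\alpha).
\]
No step is a genuine obstacle. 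The only subtle point is the support and degenerate-case bookkeeping in item~2.
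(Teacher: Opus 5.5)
Your proof is correct and follows the same route as the paper. You differentiate $\psi_P$ directly for item~1, identify $\psi_P''(\alpha)$ as $\Var_{\tilde P_\alpha}(\log P_Z(Z))$ for item~2, and substitute the log-tilt identity of item~5 into the definitions for items~3 and~4. Your refinement of the strictness condition in item~2 is also right: the paper's literal criterion fails when $P_Z$ is uniform on a strict subset of $\Acal$, e.g.\ $P_Z=(\tfrac12,\tfrac12,0)$ gives the affine $\psi_P(\alpha)=(1-\alpha)\log 2$, so the correct condition is that $P_Z$ is not uniform on its support.
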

\begin{IEEEproof}
	\begin{enumerate}
		\item Obtained by direct calculation of $\frac{\d}{\d\alpha} \log \sum_{z\in\Acal} P_Z(z)^{\alpha}$.
		
		\item Straightforward computations show that
		\begin{align*}
			\psi_P''(\alpha)
			&= \frac{\d}{\d\alpha} \sum_{z \colon P_Z(z)>0} \tilde{P}_{\alpha}(z) \log P_Z(z)\\
			&= \Var_{\tilde{P}_{\alpha}}\left( \log P_Z(Z) \right) \ge 0,
		\end{align*}
		where $\Var_P(X)$ denotes the variance of the random variable~$X \sim P$.
		The inequality is strict if, and only if, $P_Z$ is non-uniform and non-deterministic.

		\item We can directly compute
		\begin{align*}
			D({\tilde{P}_\alpha} \| \tilde{P}_{\beta})
			&= 			\sum_{z\in\Acal} \tilde{P}_\alpha(z)\log \frac{\tilde{P}_\alpha(z)}{\tilde{P}_{\beta}(z)} \\
			&= 			\sum_{z\in\Acal} \tilde{P}_\alpha(z) \log P_Z(z)^{\alpha-\beta} + \psi_P(\beta) - \psi_P(\alpha) \\
			&= 			(\alpha-\beta)\psi_P'(\alpha) + \psi_P(\beta) - \psi_P(\alpha).
		\end{align*}
		
		\item Similarly,
		\begin{align*}
			H({\tilde{P}_\alpha})
			&=
			\sum_{z\in\Acal} \tilde{P}_\alpha(z)\log\frac{1}{\tilde{P}_\alpha(z)}\\
			&= \sum_{z\in\Acal} \tilde{P}_{\alpha}(z) \log \left( \frac{\sum_{z'\in\Acal} P_Z(z')^{\alpha}}{P_Z(z)^{\alpha}}\right)\\
			&= \psi_P(\alpha) - \alpha\psi_P'(\alpha).
		\end{align*}
	
		\item Straightforward application of~\eqref{eq:def-alpha-tilt} and \eqref{eq:def-psi}.
	\end{enumerate}
\end{IEEEproof}

We borrow the next definition and results from~\cite{salamatian2019b}.
\begin{definition} \label{def:projection}
	The \emph{projection of~$P$ on $\Ttilt(Q)$}, denoted $\Pi_{\Ttilt(Q)}(P)$, is the distribution (if it exists) $\tilde{Q}_{\alpha} \in \Ttilt(Q)$, $\alpha\in\R$, satisfying $H(\tilde{Q}_{\alpha} \| Q) = H(P\|Q)$.
\end{definition}

The next lemma proves, among other results, that the projection is well defined, in the sense that it exists and is unique, under mild conditions.

\begin{lemma}[\hspace{0.001em}\cite{salamatian2019b}] \label{lem:results-projection}
	Let $P,Q \in \Pcal(\Acal)$ and $P' \in \Ttilt(Q)$, with $Q$ non-uniform and non-deterministic. Then,
	\begin{enumerate}
		\item We have
		\begin{equation}
			D( P \| P' ) = D\big( P \| \Pi_{\Ttilt(Q)}(P) \big) + D\big( \Pi_{\Ttilt(Q)}(P) \| P' \big).
		\end{equation}
		
		\item The projection $\Pi_{\Ttilt(Q)}(P)$ exists and is unique. Moreover, $\Pi_{\Ttilt(Q)}(P) = P$ if, and only if, $P \in \Ttilt(Q)$.
		
		\item We have 
		\begin{align}
			H\big( \Pi_{\Ttilt(Q)}(P) \big) &\ge H(P),\\
			D\big( \Pi_{\Ttilt(Q)}(P) \big\| Q \big) &\le D(P \| Q),
		\end{align}
		with equality in each of them if, and only if, $P \in \Ttilt(Q)$.
	\end{enumerate}	
\end{lemma}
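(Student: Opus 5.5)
The proof will rest on the identity $\log \tilde{Q}_\beta(z) = \beta \log Q(z) - \psi_Q(\beta)$ (Proposition~\ref{prop:properties-psi}, item~5). It turns every divergence to a member of $\Ttilt(Q)$ into an affine function of the cross-entropy $H(\cdot\|Q)$. Throughout I restrict the tilted family to $\supp(Q)$ and assume $\supp(P)\subseteq\supp(Q)$, so that $H(P\|Q)<\infty$; this is the "mild condition" under which the projection can exist. I would prove item~2 first, since items~1 and~3 use the projection.

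\emph{Existence and uniqueness (item~2).} By Proposition~\ref{prop:properties-psi}, items~1 and~2, the map $\alpha\mapsto H(\tilde{Q}_\alpha\|Q) = -\psi_Q'(\alpha)$ is continuous and strictly decreasing, because $\psi_Q''>0$ when $Q$ is non-uniform and non-deterministic. This gives uniqueness immediately. For existence, as $\alpha\to+\infty$ the tilt $\tilde{Q}_\alpha$ concentrates on $\arg\max Q$, and as $\alpha\to-\infty$ it concentrates on $\arg\min_{\supp(Q)} Q$. Hence the range of $-\psi_Q'$ is the open interval $\big(-\log\max Q,\ -\log\min_{\supp(Q)}Q\big)$. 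Since $H(P\|Q)=\sum_z P(z)\log(1/Q(z))$ is an average of values in $[-\log\max Q, -\log\min_{\supp Q} Q]$, it lies in this interval, and the intermediate value theorem provides $\alpha$. The only exception is when $P$ is concentrated on the extreme-probability symbols. That degenerate case must either be excluded or handled by allowing $\alpha=\pm\infty$, i.e.\ taking the closure of $\Ttilt(Q)$. \textbf{I expect this boundary bookkeeping to be the main obstacle}; the remaining steps are algebra. For the characterisation in item~2: if $P\in\Ttilt(Q)$, then $P$ itself satisfies the defining equation, so uniqueness gives $\Pi_{\Ttilt(Q)}(P)=P$. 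The converse is trivial.

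\emph{Pythagorean identity (item~1).} Let $P'=\tilde{Q}_\beta$. Then
\begin{equation*}
D(P\|\tilde{Q}_\beta) = -H(P) + \beta H(P\|Q) + \psi_Q(\beta).
\end{equation*}
Write $\Pi=\tilde{Q}_\alpha$ with $H(\Pi\|Q)=H(P\|Q)$. The same formula applied to $\Pi$ gives $D(P\|P')-D(\Pi\|P') = H(\Pi)-H(P)$. With $\beta=\alpha$ it gives $D(P\|\Pi) = -H(P)+\alpha H(P\|Q)+\psi_Q(\alpha)$. By Proposition~\ref{prop:properties-psi}, item~4, $H(\Pi)=\psi_Q(\alpha)-\alpha\psi_Q'(\alpha)=\psi_Q(\alpha)+\alpha H(\Pi\|Q)$, so the last expression equals $H(\Pi)-H(P)$. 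Adding the two relations yields item~1.

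\emph{Inequalities (item~3).} The computation in item~1 already shows $H(\Pi)-H(P)=D(P\|\Pi)\ge0$, with equality if and only if $P=\Pi$, that is, by item~2, if and only if $P\in\Ttilt(Q)$. For the divergence inequality, use $D(\cdot\|Q)=H(\cdot\|Q)-H(\cdot)$ together with $H(\Pi\|Q)=H(P\|Q)$. This gives $D(\Pi\|Q)=H(P\|Q)-H(\Pi)\le H(P\|Q)-H(P)=D(P\|Q)$, with the same equality condition.
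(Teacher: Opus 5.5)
Your proof is correct. Item~1 is essentially the paper's computation. The paper writes $P'=\tilde{Q}_{\alpha}$, $\Pi_{\Ttilt(Q)}(P)=\tilde{Q}_{\beta}$ and adds the expansion of $D(P\|\tilde{Q}_{\beta})$ to the closed form of $D(\tilde{Q}_{\beta}\|\tilde{Q}_{\alpha})$ from Proposition~\ref{prop:properties-psi}, item~3. You instead, writing $\Pi$ for $\Pi_{\Ttilt(Q)}(P)$, pass through the entropy gap $D(P\|P')-D(\Pi\|P')=H(\Pi)-H(P)=D(P\|\Pi)$.

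The genuine difference is in items~2 and~3, which the paper does not prove but imports from \cite[Lemmas~2, 4 and~5]{salamatian2019b}. You prove them directly:
\begin{itemize}
\item existence and uniqueness from the continuity, strict monotonicity and limits of $\alpha\mapsto-\psi_Q'(\alpha)$, plus the intermediate value theorem;
\item item~3 as a one-line corollary of $D(P\|\Pi)=H(\Pi)-H(P)$ together with $D(\cdot\|Q)=H(\cdot\|Q)-H(\cdot)$.
\end{itemize}
This gives a self-contained argument in which all three items come from a single identity. It also uses the right logical order, since the paper's proof of item~1 tacitly assumes the projection exists. And it brings out hypotheses the statement leaves implicit.

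Your boundary remark is correct, and it is a defect of the statement, not of your proof. If $P$ is supported on $\arg\max Q$, or on the least likely symbols of $\supp(Q)$, no finite $\alpha$ satisfies $H(\tilde{Q}_{\alpha}\|Q)=H(P\|Q)$. So item~2 with $\alpha\in\R$ fails there, and one needs the closure $\alpha=\pm\infty$. This is harmless for the paper, since Lemma~\ref{lem:projection-reduction} only uses an infimum over $\beta$. Likewise, $\supp(P)\subseteq\supp(Q)$ is genuinely needed.

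One caveat you inherit from Proposition~\ref{prop:properties-psi}, item~2: after restricting to $\supp(Q)$, strict convexity of $\psi_Q$ requires $Q$ to be non-uniform on its support, not merely on $\Acal$. For $Q=(\tfrac12,\tfrac12,0)$ the family collapses to $\{Q\}$. The lemma then holds trivially with $\Pi_{\Ttilt(Q)}(P)=Q$, but your intermediate-value step would misclassify every $P$ as degenerate, so this case should be handled separately.
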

\begin{IEEEproof}
	For item~1, since both $P'$ and $\Pi_{\Ttilt(Q)}(P)$ belong to $\Ttilt(Q)$, denote them $\tilde{Q}_{\alpha} \coloneqq P'$ and $\tilde{Q}_{\beta} \coloneqq \Pi_{\Ttilt(Q)}(P)$, and note that $H( \tilde{Q}_{\beta} \| Q ) = H(P \| Q)$, by definition. 
	Expanding the divergences and using~\eqref{eq:def-psi}, we have
	\begin{align*}
		D(P \| \tilde{Q}_{\alpha})
		&= -H(P) +\alpha H(P \| Q) + \psi_Q(\alpha),\\
		D(P \| \tilde{Q}_{\beta})
		&= -H(P) +\beta H(P \| Q) + \psi_Q(\beta).
	\end{align*}
	Together with items~1 and 3 of Proposition~\ref{prop:properties-psi}, we have
	\begin{align*}
		&~D(P\| \tilde{Q}_{\beta}) + D(\tilde{Q}_{\beta} \| \tilde{Q}_{\alpha})\\
		&= \left[ -H(P) +\beta H(P \| Q) + \psi_Q(\beta) \right]\\
		&\quad+ \left[(\alpha-\beta)H(\tilde{Q}_{\beta}\|Q) + \psi_Q(\alpha) - \psi_Q(\beta) \right]\\
		&= -H(P) +\beta H(P \| Q)
		+ (\alpha-\beta)H(P\|Q) + \psi_Q(\alpha)\\
		&= D(P \| \tilde{Q}_{\alpha}).
	\end{align*}
	
	For item~2, see \cite[Lemma~2]{salamatian2019b}; 
	for item~3, see \cite[Lemmas 4 and 5]{salamatian2019b}.
\end{IEEEproof}

\begin{lemma} \label{lem:projection-reduction}
	Let $f \colon \R_+ \to \R_+$ be a non-decreasing function and $P_Z \in \Pcal(\Acal)$ non-uniform and non-deterministic. Then,
	\begin{align}
		&\min_{\hat{P}_Z\in\Pcal(\Acal)} D(\hat{P}_Z \| P_Z) - f\big(H(\hat{P}_Z)\big) \nonumber\\
		&\hspace{2em}= \inf_{\beta\in\R} D(\tilde{P}_{\beta} \| P_Z) - f\big(H(\tilde{P}_{\beta})\big).
	\end{align}
\end{lemma}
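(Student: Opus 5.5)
The proof shows two inequalities. The inequality ``$\le$'' is immediate, since every tilted distribution $\tilde{P}_{\beta}$, $\beta\in\R$, belongs to $\Pcal(\Acal)$. Hence the minimum over $\Pcal(\Acal)$ is at most the infimum over the tilted family $\Ttilt(P_Z)$.

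For ``$\ge$'', fix an arbitrary $\hat{P}_Z\in\Pcal(\Acal)$ and replace it by its projection $\Pi \coloneqq \Pi_{\Ttilt(P_Z)}(\hat{P}_Z)$ from Definition~\ref{def:projection}. Lemma~\ref{lem:results-projection} (item~2) guarantees that $\Pi$ exists, is unique, and equals $\tilde{P}_{\beta}$ for some $\beta\in\R$. This uses the hypothesis that $P_Z$ is non-uniform and non-deterministic. Item~3 of that lemma, applied with $Q=P_Z$, gives
\begin{equation*}
	H(\Pi) \ge H(\hat{P}_Z), \qquad D(\Pi\|P_Z) \le D(\hat{P}_Z\|P_Z).
\end{equation*}
Because $f$ is non-decreasing, the first inequality gives $f(H(\Pi)) \ge f(H(\hat{P}_Z))$. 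Combining the two,
\begin{equation*}
	D(\Pi\|P_Z) - f\big(H(\Pi)\big) \le D(\hat{P}_Z\|P_Z) - f\big(H(\hat{P}_Z)\big).
\end{equation*}
The left-hand side is at least $\inf_{\beta\in\R} D(\tilde{P}_{\beta}\|P_Z) - f(H(\tilde{P}_{\beta}))$. Since $\hat{P}_Z$ was arbitrary, taking the minimum over $\hat{P}_Z$ gives ``$\ge$''.

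The main obstacle is a bookkeeping issue rather than a conceptual one. The left-hand side is written as a minimum, so we must check that it is attained, or simply read it as an infimum; the argument above works either way. Without continuity of $f$, attainment is not automatic. I would therefore note that the chain of inequalities holds pointwise, so equality of the infima follows regardless. If $f$ is continuous, compactness of $\Pcal(\Acal)$ also yields attainment.

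A second subtlety concerns $f$, whose domain is $\R_+$. This domain suffices because $H(\tilde{P}_{\beta})\ge 0$ for every $\beta$. Limits $\beta\to\pm\infty$, which yield distributions uniform on subsets of $\supp(P_Z)$, need no separate treatment, since only the infimum over real $\beta$ is claimed.
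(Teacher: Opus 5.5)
Your proof is correct and follows the paper's argument: restricting to $\Ttilt(P_Z)$ gives $\le$, and replacing each $\hat{P}_Z$ by its projection gives $\ge$, using item~3 of Lemma~\ref{lem:results-projection} (entropy does not decrease, divergence does not increase) and the monotonicity of $f$. One minor caveat, inherited from item~2 of Lemma~\ref{lem:results-projection} as stated: if $\hat{P}_Z$ is supported on the most (or least) likely symbols of $P_Z$, the projection exists only as a limit $\beta\to\pm\infty$, so your claim that these limits need no treatment should be justified as follows: $H(\tilde{P}_{\beta})$ decreases to its limiting value from above while $D(\tilde{P}_{\beta}\|P_Z)$ converges, and since $f$ is non-decreasing the infimum over real $\beta$ still captures this case.
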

\begin{IEEEproof}
	One the one hand, restricting the minimisation domain to $\Ttilt(P_Z) \subseteq \Pcal(\Acal)$ can only increase the minimum:
	\begin{align*}
		&\min_{\hat{P}_Z\in\Pcal(\Acal)} D(\hat{P}_Z \| P_Z) - f\big(H(\hat{P}_Z)\big) \nonumber\\
		&\hspace{2em}\le \min_{\hat{P}_Z \in \Ttilt(P_Z)} D(\hat{P}_Z \| P_Z) - f\big(H(\hat{P}_Z)\big).
	\end{align*}
	On the other hand, from Lemma~\ref{lem:results-projection}, item~3,
	\begin{align*}
		&\min_{\hat{P}_Z\in\Pcal(\Acal)} D(\hat{P}_Z \| P_Z) - f\big(H(\hat{P}_Z)\big) \nonumber\\
		&\hspace{1em}\ge \min_{\hat{P}_Z\in\Pcal(\Acal)} D\big(\Pi_{\Ttilt(P_Z)}(\hat{P}_Z) \| P_Z\big) - f\bigg(H\big(\Pi_{\Ttilt(P_Z)}(\hat{P}_Z)\big)\bigg)\\
		&\hspace{1em}= \min_{\hat{P}_Z\in\Ttilt(P_Z)} D\big(\hat{P}_Z \| P_Z\big) - f\big(H(\hat{P}_Z)\big).
	\end{align*}
	Together, this shows that
	\begin{align*}
		&\hspace{-2em}\min_{\hat{P}_Z\in\Pcal(\Acal)} D(\hat{P}_Z \| P_Z) - f\big(H(\hat{P}_Z)\big) \nonumber\\
		&= \min_{\hat{P}_Z \in \Ttilt(P_Z)} D(\hat{P}_Z \| P_Z) - f\big(H(\hat{P}_Z)\big)\\
		&= \inf_{\beta\in\R} D(\tilde{P}_{\beta} \| P_Z) - f\big(H(\tilde{P}_{\beta})\big).
	\end{align*}
\end{IEEEproof}

\section{Proof of Theorem~\ref{thm:error-exponent-deterministic-guessing-alpha}} \label{app:proof-error-exponent-deterministic-guessing-alpha}

\begin{IEEEproof}[Proof of Theorem~\ref{thm:error-exponent-deterministic-guessing-alpha}]
	The proof of~\eqref{eq:error-exponent-deterministic-guessing-alpha} follows the same lines as that of~\cite[Lemma~5]{gallager1994}, but in the specific case of additive channels; details are omitted. We now prove~\eqref{eq:error-exponent-deterministic-guessing-alpha-bis}.
	Applying Lemma~\ref{lem:projection-reduction} to~\eqref{eq:error-exponent-deterministic-guessing-alpha}, we get
	\begin{equation*}
		E_d(R;g_{\alpha})
		= \inf_{\beta\in\R} D(\tilde{P}_{\beta} \| P_Z) + \left[ \log|\Acal| - H(\tilde{P}_{\beta}) - R \right]_+.
	\end{equation*}
	Using the notation from Appendix~\ref{app:properties-tilted-distributions}, we have
	\begin{align*}
		f(\beta)
		&\coloneqq D(\tilde{P}_{\beta} \| P_Z) + \left[ \log|\Acal| - H(\tilde{P}_{\beta}) - R \right]_+\\
		&= \begin{cases}
			f_1(\beta), &H(\tilde{P}_{\beta}) \ge \log|\Acal|-R ,\\
			f_2(\beta),
			& H(\tilde{P}_{\beta}) \le \log|\Acal|-R.
		\end{cases}
	\end{align*}
	where
	\begin{align*}
		f_1(\beta) &\coloneqq (\beta-1)\psi_P'(\beta) - \psi_P(\beta),\\
		f_2(\beta) &\coloneqq (2\beta-1)\psi_P'(\beta) - 2\psi_P(\beta) + \log|\Acal|-R.
	\end{align*}
	We can now explicitly identify the minimiser~$\beta^{\star}$. Since $f_1'(\beta) = (\beta-1)\psi_P''(\beta)$ and $\psi_P''(\beta) > 0$ (see Assumption~\ref{ass:Pz-non-extreme}), we find that $f_1$ is increasing for $\beta\ge1$, and decreasing elsewhere, with minimum value $f_1(1)=0$. Similarly, $f_2'(\beta) = (2\beta-1)\psi_P''(\beta)$, so $f_2$ is increasing for $\beta\ge1/2$, decreasing elsewhere, and has minimum value $f_2(1/2) = \log|\Acal|-R-H_{1/2}(P_Z)$.

	Denote $\alpha_R$ (resp., $\bar{\alpha}_R$) the unique non-negative (resp., non-positive) solution on $\alpha\in\R$ to $H(\tilde{P}_{\alpha}) = \log|\Acal|-R$.
	We split into two cases.
	
	Case~1: $\beta\ge0$. In this case, $\beta\mapsto H(\tilde{P}_{\beta})$ is decreasing, and $\log|\Acal|-R \le H(\tilde{P}_{\beta}) \iff \beta \le \alpha_R \iff f_1 \ge f_2$.  We can then identify $\min_{\beta\ge0} f(\beta)$ by analysing separately the three following subcases:
	a)~$0 \le \alpha_R \le \frac{1}{2}$: for $\beta\le\alpha_R$, $f=f_1$ and is decreasing, and, for $\beta\ge\alpha_R$, $f=f_2$ which achieves minimum at $\beta^{\star}=1$;
	b)~$\frac{1}{2}\le\alpha_R\le1$: for $\beta\le\alpha_R$, $f=f_1$ and is decreasing, and, for $\beta\ge\alpha_R$, $f=f_2$ and is increasing; thus the minimum is at $\beta^{\star}=\alpha_R$;
	c)~$\alpha_R\ge1$: for $\beta\le\alpha_R$, $f=f_1$ attains minimum at $\beta^{\star}=1$, and $\beta\ge\alpha_R$, $f=f_2$ which is increasing.
	Therefore,
	\begin{equation*}
		\inf_{\beta\ge0} f(\beta)
		= \begin{cases}
			\log|\Acal| - R - H_{1/2}(P_Z), & 0 \le \alpha_R \le \frac{1}{2},\\
			D(\tilde{P}_{\alpha_R}\| P_Z), &\frac{1}{2} \le \alpha_R \le 1,\\
			0, & \alpha_R\ge1.
		\end{cases}
	\end{equation*}
	
	Case~2: $\beta<0$. In this case, $\beta \mapsto H(\tilde{P}_{\beta})$ is increasing. We have $\log|\Acal|-R \le H(\tilde{P}_{\beta}) \iff \beta \ge \bar{\alpha}_R \iff f_1(\beta) \ge f_2(\beta)$. In this interval, $f$ is decreasing and attains minimum $f(0) = f_1(0)$, but this is always lower bounded by the solution of the case $\beta\ge0$, since $f_1$ is minimised at $\beta^{\star}=1$.
	
	Thus, the solution of the minimisation is $\inf_{\beta\in\R} f(\beta) = \inf_{\beta\ge0} f(\beta)$, as given above. Finally, using the monotonicity of $\alpha \mapsto H(\tilde{P}_{\alpha})$ for $\alpha\ge0$ and the definition of $\alpha_R$, we have that
	\begin{align*}
		\alpha_R \le \frac{1}{2}
		&\iff H(\tilde{P}_{\alpha_R}) \ge H(\tilde{P}_{1/2})\\
		&\iff \log|\Acal|-R \ge H(\tilde{P}_{1/2})\\
		&\iff R \le \log|\Acal|-H(\tilde{P}_{1/2}) = R_c,
	\end{align*}
	and 
	\begin{align*}
		\alpha_R \ge 1
		&\iff H(\tilde{P}_{\alpha_R}) \ge H(P_Z)\\
		&\iff \log|\Acal|-R \ge H(P_Z)\\
		&\iff R \le \log|\Acal|-H(P_Z) = C.
	\end{align*}
	This concludes the proof.
\end{IEEEproof}

\section{Proof of Theorem~\ref{thm:complexity-exponent-deterministic-mismatched}}
\label{app:proof-complexity-exponent-deterministic-mismatched}

\begin{IEEEproof}[Proof of Theorem~\ref{thm:complexity-exponent-deterministic-mismatched}]
	In view of Lemma~\ref{lem:non-asymptotic-bounds-qd-n}, we have
	\begin{equation} \label{eq:proof-complexity-deterministic-doteq}
		\bar{q}_{d}(n,M)
		\doteq \E\left[ \min\left\{ \Gsf_u(\Zv_1),\ \frac{|\Acal|^n}{M} \right\} \right],
	\end{equation}
	where the expectation is over the realisations of $\Zv_1 \sim P_{\Zv}$, and over the random tie breaks of $\Gsf_u$ (cf. Section~\ref{subsec:decoding-noise-guessing}).
	
	To evaluate~\eqref{eq:proof-complexity-deterministic-doteq}, first fix $\Zv_1=\zv$ and denote
	\begin{align*}
		\bar{G}_u(\zv)
		&\coloneqq \E\left[ \Gsf_u(\zv) \right]\\
		&= \left| \left\{ \bar{\zv} \in \Acal^n \suchthat g(\hat{P}_{\bar{\zv}}) > g(\hat{P}_{\zv}) \right\} \right|\\
		&\quad+ \frac{1}{2} \left( 1+ \left| \left\{ \bar{\zv} \in \Acal^n \suchthat g(\hat{P}_{\bar{\zv}}) = g(\hat{P}_{\zv}) \right\} \right|\right)
	\end{align*}
	the mean rank in the type class of~$\zv$.
	Owing to the concavity of $x \mapsto \min\left\{ x,\ \frac{|\Acal|^n}{M} \right\}$, we have
	\begin{equation*}
		\E\left[ \min\left\{ \Gsf_u(\zv),\, \frac{|\Acal|^n}{M} \right\} \right]
		\le \min\left\{ \bar{G}_u(\zv),\, \frac{|\Acal|^n}{M} \right\}.
	\end{equation*}
	On the other hand, since $x \mapsto \min\left\{ x,\ \frac{|\Acal|^n}{M} \right\}$ is non-decreasing and $\Gsf_u(\zv)$ has uniform distribution, we readily deduce
	\begin{align*}
		&\hspace{-1em}\E\left[ \min\left\{ \Gsf_u(\zv),\, \frac{|\Acal|^n}{M} \right\} \right] \nonumber\\
		&\ge \E\left[ \min\left\{ \Gsf_u(\zv),\, \frac{|\Acal|^n}{M} \right\} \1_{\left[ \bar{G}_u(\zv),\, |\Acal|^n \right]}  \left(\Gsf_u(\zv)\right) \right] \nonumber\\
		&\ge \min\left\{ \bar{G}_u(\zv),\, \frac{|\Acal|^n}{M} \right\} \Pbb\left( \Gsf_u(\zv) \ge \bar{G}_u(\zv) \right) \nonumber\\
		&= \frac{1}{2}\min\left\{ \bar{G}_u(\zv),\, \frac{|\Acal|^n}{M} \right\}. 
	\end{align*}
	Together, these imply that
	\begin{align}
		\E\left[ \min\left\{ \Gsf_u(\zv),\ \frac{|\Acal|^n}{M} \right\} \right] 
		&\doteq \min\left\{ \bar{G}_u(\zv),\, \frac{|\Acal|^n}{M} \right\}. \label{eq:app-d-1}
	\end{align}
	
	Moreover, we can bound
	\begin{align}
		&\hspace{-2em}\frac{1}{2} \left| \left\{ \bar{\zv} \in \Acal^n \suchthat g(\hat{P}_{\bar{\zv}}) \ge g(\hat{P}_{\zv}) \right\} \right| \nonumber\\
		&\le \bar{G}_u(\zv)
		\le \left| \left\{ \bar{\zv} \in \Acal^n \suchthat g(\hat{P}_{\bar{\zv}}) \ge g(\hat{P}_{\zv}) \right\} \right|, \label{eq:app-d-2}
	\end{align}
	and, using the method of types, we obtain 
	\begin{align}
		&\hspace{-3em}\left| \left\{ \bar{\zv} \in \Acal^n \suchthat g(\hat{P}_{\bar{\zv}}) \ge g(\hat{P}_{\zv}) \right\} \right| \nonumber\\
		&= \sum_{\hat{P}_Z' \in \Pcal_n(\Acal) \suchthat g(\hat{P}_Z') \ge g(\hat{P}_{\zv})} \left| \Tcal_n(\hat{P}_Z') \right| \nonumber\\
		&\doteq e^{n \max_{\hat{P}_Z' \in \Pcal_n(\Acal) \colon g(\hat{P}_Z') \ge g(\hat{P}_{\zv})} H(\hat{P}_Z')}. \label{eq:app-d-3}
	\end{align}

	Using~\eqref{eq:app-d-1}, \eqref{eq:app-d-2} and \eqref{eq:app-d-3} in~\eqref{eq:proof-complexity-deterministic-doteq} and taking the average with respect to $\Zv_1$, we get
	\begin{align*}
		\bar{q}_d(n,M)
		&\doteq \E\left[ e^{n \min\left\{ \max_{\hat{P}_Z' \colon g(\hat{P}_Z') \ge g(\hat{P}_{\zv})} H(\hat{P}_Z'),\ \log|\Acal|-R \right\}} \right] \nonumber\\
		&\doteq \sum_{\hat{P}_Z \in \Pcal_n(\Acal)} e^{-nD(\hat{P}_Z \| P_Z)}\\
		&\hspace{2em} \times e^{n \min\left\{ \max_{\hat{P}_Z' \colon g(\hat{P}_Z') \ge g(\hat{P}_{Z})} H(\hat{P}_Z'),\ \log|\Acal|-R \right\}} \nonumber\\
		&\doteq e^{n \max_{\hat{P}_Z \in \Pcal(\Acal)} \min\big\{ F_1(\hat{P}_Z),\log|\Acal|-R \big\} - D(\hat{P}_Z \| P_Z) }.
	\end{align*}
\end{IEEEproof}

\section{Proof of Theorem~\ref{thm:complexity-exponent-deterministic-alpha}}
\label{app:proof-complexity-exponent-deterministic-alpha}

\begin{IEEEproof}
	It is enough to consider the case $\alpha=1$. From Theorem~\ref{thm:complexity-exponent-deterministic-mismatched} with matched decoding metric~\eqref{eq:decoding-metric-matched}, we get
	\begin{align*}
		F_d(R;g_1)
		= \max_{\hat{P}_Z\in\Pcal(\Acal)} &\min\left\{ F_1(\hat{P}_Z),\ \log|\Acal|-R \right\}\\
		&\quad- D(\hat{P}_Z \| P_Z),
	\end{align*}
	where
	\begin{align*}
		F_1(\hat{P}_Z) = \max_{\hat{P}_Z' \colon H(\hat{P}_Z'\| P_Z) \le H(\hat{P}_Z \| P_Z)} H(\hat{P}_Z').
	\end{align*}
	Noting that $F_1(\hat{P}_Z) \ge H(\hat{P}_Z)$, we find
	\begin{align}
		F_d(R;g_1)
		\ge \max_{\hat{P}_Z\in\Pcal(\Acal)} &\min\left\{ H(\hat{P}_Z),\ \log|\Acal|-R \right\} \nonumber\\
		&\quad- D(\hat{P}_Z \| P_Z). \label{eq:aux-lower-bound-00}
	\end{align}
	We then prove the other direction, with an argument similar to that of~\cite[Lemma~9]{gallager1994}.
	Denote $\hat{P}_Z^{\star}$ the maximiser of $\hat{P}_Z \mapsto \min\big\{ H(\hat{P}_Z), \log|\Acal|-R \big\} - D(\hat{P}_Z\|P_Z)$. Now fix $\hat{P}_Z$ and denote $\hat{P}^{'\star}_{Z}$ the maximiser of $H(\hat{P}_Z')$ subject to $H(\hat{P}_Z'\| P_Z) \le H(\hat{P}_Z \| P_Z)$. Two cases can occur. If $H(\hat{P}^{'\star}_{Z}) \le H(\hat{P}_Z)$, then we immediately have
	\begin{align*}
		&\hspace{-1em}\min\left\{ H(\hat{P}^{'\star}_{Z}),\ \log|\Acal|-R \right\} - D(\hat{P}_Z\|P_Z)\\
		&\le \min\left\{ H(\hat{P}_{Z}),\ \log|\Acal|-R \right\} - D(\hat{P}_Z\|P_Z)\\
		&\le \min\left\{ H(\hat{P}^{\star}_{Z}),\ \log|\Acal|-R \right\} - D(\hat{P}^{\star}_Z\|P_Z).
	\end{align*}
	Otherwise, $H(\hat{P}^{'\star}_{Z}) > H(\hat{P}_Z)$, which, together with $H(\hat{P}^{'\star}_Z\| P_Z) \le H(\hat{P}_Z \| P_Z)$, implies that $D(\hat{P}^{'\star}_Z \| P_Z) < D(\hat{P}_Z \| P_Z)$, yielding
	\begin{align*}
		&\hspace{-1em}\min\left\{ H(\hat{P}^{'\star}_{Z}),\ \log|\Acal|-R \right\} - D(\hat{P}_Z\|P_Z)\\
		&< \min\left\{ H(\hat{P}^{'\star}_{Z}),\ \log|\Acal|-R \right\} - D(\hat{P}^{'\star}_Z\|P_Z)\\
		&\le \min\left\{ H(\hat{P}^{\star}_{Z}),\ \log|\Acal|-R \right\} - D(\hat{P}^{\star}_Z\|P_Z).
	\end{align*}
	The conclusion is the same in either case, and valid for any $\hat{P}_Z\in\Pcal(\Acal)$, in particular for the one realising the maximum of the expression. So, together with~\eqref{eq:aux-lower-bound-00}, we conclude that
	\begin{align*}
		F_d(R;g_1)
		= \max_{\hat{P}_Z\in\Pcal(\Acal)}&\min\left\{ H(\hat{P}_{Z}),\ \log|\Acal|-R \right\}\\
		&\quad- D(\hat{P}_Z\|P_Z).
	\end{align*}
	Since $F_d(R;g_1) = F_d(R;g_{\alpha})$ for any $\alpha>0$, we obtain~\eqref{eq:complexity-exponent-deterministic-alpha}.

	To prove~\eqref{eq:complexity-exponent-deterministic-alpha-bis}, we proceed similarly to Appendix~\ref{app:proof-error-exponent-deterministic-guessing-alpha}; some details are omitted.
	Invoking Lemma~\ref{lem:projection-reduction}, we obtain
	\begin{align*}
		F_d(R;g_{\alpha})
		= \sup_{\beta\in\R}&\min\left\{ H(\tilde{P}_{\beta}),\, \log|\Acal|-R \right\} - D(\tilde{P}_{\beta}\|P_Z).
	\end{align*}
	We can write $F_d(R;g_{\alpha}) = \sup_{\beta\in\R} f(\beta)$, where
	\begin{align*}
		f(\beta)
		&\coloneqq
		\min\left\{H(\tilde{P}_{\beta}),\ \log|\Acal|-R\right\} - D( \tilde{P}_{\beta} \| P_Z )\\
		&= \begin{cases}
			f_1(\beta),
			&H(\tilde{P}_{\beta}) \le \log|\Acal|-R \\
			f_2(\beta),
			&H(\tilde{P}_{\beta}) \ge \log|\Acal|-R,
		\end{cases}
	\end{align*}
	and
	\begin{align*}
		f_1(\beta) &\coloneqq 2\psi_P(\beta) - (2\beta-1)\psi_P'(\beta),\\
		f_2(\beta) &\coloneqq \log|\Acal|-R - (\beta-1)\psi_P'(\beta) + \psi_P(\beta).
	\end{align*}
	
	By studying the derivatives, we find that $f_1$ is increasing for $\beta\le1/2$ and decreasing elsewhere, with maximum value $f_1(1/2) = 2\psi_P(1/2)$; $f_2$ is increasing for $\beta \le 1$ and decreasing elsewhere, with maximum $f_2(1) = \log|\Acal|-R$.
	Denote $\alpha_R$ (resp., $\bar{\alpha}_R$) the unique non-negative (resp., non-positive) solution on $\alpha\in\R$ to $H(\tilde{P}_{\alpha}) = \log|\Acal|-R$.
	Consider the two cases.
	
	Case~1: $\beta\ge0$. In this case, $H(\tilde{P}_{\beta}) \le \log|\Acal|-R \iff \beta \le \alpha_R \iff f_1(\beta) \le f_2(\beta)$. We can identify $\max_{\beta\ge0} f(\beta)$ by splitting the three different cases:
	\begin{align*}
		\sup_{\beta\ge0} f(\beta)
		= \begin{cases}
			2\psi_P(1/2), & 0 \le \alpha_R \le \frac{1}{2},\\
			\log|\Acal|-R-D(\tilde{P}_{\alpha_R}\| P_Z), &\frac{1}{2} \le \alpha_R \le 1,\\
			\log|\Acal|-R, &\alpha_R \ge 1.
		\end{cases}
	\end{align*}
	
	Case~2: $\beta\le0$. In this case, $H(\tilde{P}_{\beta}) \le \log|\Acal|-R \iff \beta \ge \bar{\alpha}_R \iff f_1(\beta) \le f_2(\beta)$, and $\inf_{\beta\le0} f(\beta) = f_2(0)$, which is always less than the maximiser of the first case.
	
	The solution is $\sup_{\beta\in\R} f(\beta) = \sup_{\beta\ge0} f(\beta)$, as given above. The description of the intervals is the same as in Appendix~\ref{app:proof-error-exponent-deterministic-guessing-alpha}.
\end{IEEEproof}

\section{Proof of Theorem~\ref{thm:error-exponent-randomised-guessing-mismatched}}
\label{app:proof-error-exponent-randomised-mismatched}

Recall that the decoder draws noise sequences with probability $Q_u(\zv) = u_n(\zv)$, cf.~\eqref{eq:sampling-distribution-u} and~\eqref{eq:u-decoding-metric}.
The random-coding analysis of guessing decoding has a technical subtlety, as if the code contains repeated codewords, then randomised noise guessing is not an implementation of $Q_{\hat{\Msf} \given \Yv}(m \given \yv)$ as in~\eqref{eq:randomised-decoder}.
In fact, the probability that codeword~$\xv \in \Ccal$ is selected by randomised noise guessing after having observed~$\yv$ is
\begin{equation}
	\hat{Q}_{\Xv \given \Yv}(\xv \given \yv) = \frac{Q_u(\yv-\xv) \1_{\Ccal}(\xv)}{Q_u\left( \bigcup_{m'=1}^{M} \left\{ \yv-\xv_{m'} \right\} \right)},
\end{equation}
which reduces to~\eqref{eq:randomised-distribution-check} for a code containing only distinct codewords.
We shall assume that the decoder proceeds as follows after sampling a codeword~$\hat{\Xv}\sim\hat{Q}_{\Xv\given\Yv}(\cdot\given\yv)$. If $\hat{\Xv} = \hat{\xv}$ only encodes a single message, then it declares that one as the decoded message; otherwise, it selects uniformly at random one of the messages encoded by~$\hat{\xv}$.
In the following, we prove that, as far as asymptotic exponents are concerned, we can proceed as if all the codewords were different.

\begin{lemma} \label{prop:randomised-error-exponent-replace-sum}
	Fix a rate~$R$ and let $M = \lfloor e^{nR} \rfloor$. Let $\Zv_1 \sim P_{\Zv}$, and $\Zv_2,\dots,\Zv_{M}$ are i.i.d. uniformly distributed on~$\Acal^n$. Suppose that
	\begin{equation} \label{eq:ch3-prop-sum-union-hypothesis}
		\lim_{n\to\infty} -\frac{1}{n} \log \E\left[ \frac{
			\sum_{m'=2}^{M} u_n(\Zv_{m'})
		}{
			u_n(\Zv_1) + \sum_{m'=2}^{M} u_n(\Zv_{m'})
		} \right] \eqqcolon E_0(R)
	\end{equation}
	exists and that $E_0(R) \le \log|\Acal| - R$. Then,
	\begin{equation}
		\bar{p}_{e,r}\big(n,\lfloor e^{nR} \rfloor\big) \doteq e^{-nE_0(R)}.
	\end{equation}
\end{lemma}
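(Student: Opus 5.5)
We compare the true random-coding error probability, which accounts for repeated codewords, with the idealised quantity in~\eqref{eq:ch3-prop-sum-union-hypothesis}. As before, we take message $m=1$ as sent and write everything in terms of $\Zv_1=\Zv$ and $\Zv_{m'}=\Xv_1+\Zv-\Xv_{m'}$ for $m'\ge2$; these are i.i.d.\ uniform given $(\Xv_1,\Zv)$. Let $\Bcal$ be the event that some $\Zv_{m'}$, $m'\ge2$, equals $\Zv_1$, i.e., some codeword coincides with $\Xv_1$, and let $\Bcal'$ be the event that two of $\Zv_2,\dots,\Zv_M$ coincide. By the union bound,
\[
\Pbb(\Bcal)\le (M-1)|\Acal|^{-n}\dotle e^{-n(\log|\Acal|-R)}.
\]

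For the \textbf{upper bound}, we work on the complement of $\Bcal$. There the correct codeword is distinct from all others, so it is sampled with probability
\[
Q_u(\Zv_1)\Big/Q_u\Big(\{\Zv_1\}\cup\bigcup_{m'\ge2}\{\Zv_{m'}\}\Big),
\]
and once it is sampled the decoder outputs message~1. The denominator is a union of sets, so it is at most $u_n(\Zv_1)+\sum_{m'\ge2}u_n(\Zv_{m'})$. Hence the conditional error probability on $\Bcal^\comp$ is at most the ratio inside~\eqref{eq:ch3-prop-sum-union-hypothesis}. Therefore
\[
\bar p_{e,r}\le \E[\text{ratio}]+\Pbb(\Bcal)\dotle e^{-nE_0(R)}+e^{-n(\log|\Acal|-R)}\doteq e^{-nE_0(R)},
\]
where the last step uses the hypothesis $E_0(R)\le\log|\Acal|-R$.

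For the \textbf{lower bound}, we work on $\Bcal^\comp\cap\Bcal'^\comp$, where all codewords are distinct. There the decoder implements~\eqref{eq:randomised-distribution-check} exactly, so the conditional error equals the ratio. Since the ratio is at most $1$,
\[
\bar p_{e,r}\ge \E[\text{ratio}]-\Pbb(\Bcal)-\Pbb(\Bcal').
\]
The term $\Pbb(\Bcal')\le M^2|\Acal|^{-n}$ is too large when $R>\log|\Acal|-R$, so subtracting it is not adequate. Instead I would lower-bound directly. On $\Bcal^\comp$, which only requires the correct codeword to be unique, the sampled word is correct with probability $Q_u(\Zv_1)/Q_u(\text{union})$. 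The union's mass is at least $u_n(\Zv_1)+Q_u\big(\bigcup_{m'\ge2}\{\Zv_{m'}\}\big)$, because $\Zv_1$ is not among the others. So the error probability is at least
\[
\frac{Q_u(\cup_{m'\ge2})}{u_n(\Zv_1)+Q_u(\cup_{m'\ge2})},
\]
and the remaining task is to compare $Q_u(\cup_{m'\ge2}\{\Zv_{m'}\})$ with $\sum_{m'\ge2}u_n(\Zv_{m'})$ in expectation.

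This comparison is the main obstacle. Since $x\mapsto x/(a+x)$ is increasing and concave, Jensen's inequality alone points the wrong way. My plan is to condition on $\Zv_1$ and use the method of types.
\begin{itemize}
\item Both $\sum_{m'}u_n(\Zv_{m'})$ and $Q_u(\cup)$ concentrate around a sum over types $\hat P$ of $N_{\hat P}\,e^{ng(\hat P)}$ (up to normalisation), where $N_{\hat P}$ counts the incorrect noise sequences falling in $\Tcal_n(\hat P)$.
\item The count of \emph{distinct} such sequences is at least $\min\{N_{\hat P},|\Tcal_n(\hat P)|\}/2$ with high probability, since the balls-into-bins occupancy is at least a constant fraction of $\min\{\text{balls},\text{bins}\}$.
\item For any type with $N_{\hat P}\gg|\Tcal_n(\hat P)|$, the union's mass is $\doteq |\Tcal_n(\hat P)|e^{ng(\hat P)}$, which is at least $e^{-nD}$-comparable to $u_n(\Zv_1)$ on the types dominating $E_0(R)$.
\end{itemize}
Given this, I would argue that saturated types contribute ratios that are already $\doteq 1$ in both expressions. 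The exponents then match type-by-type, yielding $\bar p_{e,r}\dotge e^{-nE_0(R)}$. The condition $E_0(R)\le\log|\Acal|-R$ would be used again here to absorb the $e^{-n(\log|\Acal|-R)}$ losses incurred on the low-probability events where occupancy fails.
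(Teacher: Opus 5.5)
Your upper bound is correct and is essentially the paper's argument. You account for duplicates of the correct codeword through $\Pbb(\Bcal)$ where the paper uses $\E[1-1/T(\Xv_1)]$. In both cases the correction is $\dotle e^{-n(\log|\Acal|-R)}$ and is absorbed by the hypothesis $E_0(R)\le\log|\Acal|-R$.

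The gap is in the lower bound. You correctly reduce it to showing that $Q_u\big(\bigcup_{m'\ge2}\{\Zv_{m'}\}\big)$ is, up to subexponential factors, at least $\sum_{m'\ge2}u_n(\Zv_{m'})$. But this is exactly the step you leave as a plan, and the plan targets the wrong difficulty:
\begin{enumerate}
\item Saturated type classes essentially never occur. Each fixed $\zv$ is hit $\Bin(M-1,|\Acal|^{-n})$ times, with mean at most $e^{-n(\log|\Acal|-R)}\le1$, so $\E[N_{\hat P}]\le|\Tcal_n(\hat P)|$ for every type.
\item Your occupancy claim is never quantified. Its failure probability would have to be beaten against $e^{-nE_0(R)}$ after summing over types and averaging over $\Zv_1$.
\item ``Absorbing'' subtracted losses of order $e^{-n(\log|\Acal|-R)}$ cannot work in a lower bound when $E_0(R)=\log|\Acal|-R$, which the hypothesis allows. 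The difference then need not retain the exponent.
\end{enumerate}

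The missing idea is a crude global multiplicity bound, which makes any type-by-type matching unnecessary. Let $\tilde{\Ecal}_n$ be the event that no sequence occurs more than $n^2$ times among $\Zv_2,\dots,\Zv_M$. Chernoff's bound with $d(p\|q)\ge p\log(p/q)-p$, plus a union bound over the $|\Acal|^n$ sequences, gives $\Pbb(\tilde{\Ecal}_n^{\comp})\le\exp\big(-n^3(\log|\Acal|-R)-2n^2\log n+O(n^2)\big)$. This is super-exponentially small.

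Write $a\coloneqq Q_u(\Zv_1)$, $S\coloneqq Q_u\big(\bigcup_{m'\ge2}\{\Zv_{m'}\}\big)$ and $\Sigma\coloneqq\sum_{m'\ge2}Q_u(\Zv_{m'})$. On $\tilde{\Ecal}_n$ you have $S\ge n^{-2}\Sigma$, hence $\frac{S}{a+S}\ge n^{-2}\frac{\Sigma}{a+\Sigma}$. Since the ratio is at most $1$, this gives $\E\big[\frac{S}{a+S}\big]\ge n^{-2}\big(\E\big[\frac{\Sigma}{a+\Sigma}\big]-\Pbb(\tilde{\Ecal}_n^{\comp})\big)\dotge e^{-nE_0(R)}$. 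The factor $n^{-2}$ is invisible at exponential scale, and the subtracted term is negligible.

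You do not need to subtract $\Pbb(\Bcal)$ either. On $\Bcal$, the random tie-break among the $T(\Xv_1)\ge2$ messages sharing $\Xv_1$ already makes the conditional error probability at least $1/2$. So the error probability is pointwise at least $\frac{1}{2}\frac{S}{a+S}$, and $\bar{p}_{e,r}\ge\frac{1}{2}\E\big[\frac{S}{a+S}\big]$ holds with no condition on $E_0(R)$.
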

\begin{IEEEproof}
	Without loss of generality, we suppose that the transmitter sends message $m=1$.
	Fix a code $\Ccal = \left( \xv_1, \dots, \xv_M \right)$, and a received sequence~$\yv\in\Acal^n$.
	Denote~$W \coloneqq W(\xv_1)\in\Mcal$ a random variable with uniform distribution over the messages that are encoded by codeword $\xv_1$.
	An error is caused by either of two events: the sampled codeword is not $\xv_1$; or the sampled codeword is $\xv_1$, but the random tie break selects $W \neq 1$, that is,
	\begin{align} \label{eq:ch3-step1-bound}
		&\Pr\left(\error \mgiven \xv_1, \yv, \Ccal \right) \nonumber\\
		&\qquad= \Pbb\big( \hat{\Xv} \neq \xv_1 \big)
		+ \Pbb\left( \{ \hat{\Xv}=\xv_1 \} \cap \left\{ W \neq 1 \right\} \right).
	\end{align}
	The first term is
	\begin{equation*}
		\Pbb\big( \hat{\Xv} \neq \xv_1 \big)
		= 1 - \frac{Q_u(\yv-\xv_1)}{Q_u\left( \bigcup_{m'=1}^{M} \left\{ \yv-\xv_{m'} \right\} \right)}.
	\end{equation*}
	Let $T(\xv_m)$ be the number of messages encoded by the same codeword~$\xv_m$. The second term can be bounded as
	\begin{align}
		0 &\le \Pbb\left( \{ \hat{\Xv}=\xv_1 \} \cap \left\{ W(\xv_1) \neq 1 \right\} \right) \nonumber\\
		&\le \Pbb\left(  W(\xv_1) \neq 1 \mgiven \hat{\Xv}=\xv_1 \right) \nonumber\\
		&= 1 - \frac{1}{T(\xv_1)}. \label{eq:bounds-proof-appendix-F}
	\end{align}
	We can now take the average over random codes and channel realisations. For convenience, define
	\begin{align} \label{eq:Phi-n-M}
		\Phi(n,M)
		&\coloneqq
		\E\left[ 1 - \frac{Q_u(\Yv-\Xv_1)}{Q_u\left( \bigcup_{m'=1}^{M} \left\{ \Yv-\Xv_{m'} \right\} \right)} \right].
	\end{align}
	Taking the average on~\eqref{eq:ch3-step1-bound} and using the bounds~\eqref{eq:bounds-proof-appendix-F}, we find
	\begin{align} \label{eq:p-er-aux-decomp}
		\Phi(n,M)
		\le
		\bar{p}_{e,r}(n,M) 
		\le \Phi(n,M)
		+ \E\left[ 1 - \frac{1}{T(\Xv_1)} \right].
	\end{align}
	
	Note that $T(\xv_1) = 1 + T_n$, with $T_n \sim \Bin\left( M-1,\, |\Acal|^{-n} \right)$, for any $\xv_1\in\Acal^n$.
	Recall that $M = \lfloor e^{nR} \rfloor$ and denote $\varepsilon(n) \coloneqq \frac{1}{n} \log \frac{\lfloor e^{nR}\rfloor}{\lfloor e^{nR} \rfloor - 1}$.
	An application of Jensen's inequality yields
	\begin{align*}
		\E\left[ 1 - \frac{1}{1+T_n} \right]
		&\le 1 - \frac{1}{1+\E\left[T_n\right]}\\
		&= 1 - \frac{1}{1+ e^{n\left( R - \varepsilon(n) -\log|\Acal| \right)}}\\
		&\le \frac{e^{-n\left( \log|\Acal| - R + \varepsilon(n) \right)}}{1 + e^{-n\left( \log|\Acal|-R+\varepsilon(n) \right)}}.
	\end{align*}
	Since $\epsilon(n) \to 0$ and $R < \log|\Acal|$, we conclude that
	\begin{align} \label{eq:p-er-aux-decomp-1}
		\E\left[ 1 - \frac{1}{1+T_n} \right] \dotle e^{-n\left( \log|\Acal|-R \right)}.
	\end{align}
	
	We then bound $\Phi(n,M)$. Note that $\Yv = \Xv_1 + \Zv$ and, for fixed $\Xv_1$ and $\Zv$, the random variables $\bar{\Zv}_{m'} \coloneqq \Xv_1+ \Zv - \Xv_{m'}$, for $2 \le m' \le M$, are independently and uniformly distributed in $\Acal^n$. Using that, we can write
	\begin{equation*}
		\Phi(n,M)
		= \E\left[ \frac{Q_u\left( \bigcup_{m'=2}^{M} \left\{ \Zv_{m'} \right\} \right)}{Q_u(\Zv_1) + Q_u\left( \bigcup_{m'=2}^{M} \left\{ \Zv_{m'} \right\} \right)} \right],
	\end{equation*}
	where $\Zv_1 \sim P_{\Zv}$ and $\Zv_2, \dots, \Zv_M$ are i.i.d. uniform in~$\Acal^n$.
	Using that $x \mapsto \frac{x}{x+a}$ is increasing ($a>0$) together with
	\begin{equation}
		Q_u\left( \bigcup_{m'=1}^{M} \left\{ \Yv - \Xv_{m'} \right\}\right) \le \sum_{m'=1}^{M} Q_u(\Yv - \Xv_{m'}), \label{eq:appF-bound1}
	\end{equation}
	we get
	\begin{align} \label{eq:omega-upper-bound}
		\Phi(n,M)
		&\le
		\E\left[ \frac{\sum_{m'=2}^{M} Q_u\left(  \Zv_{m'} \right)}{Q_u(\Zv_1) + \sum_{m'=2}^{M} Q_u\left( \Zv_{m'}  \right)} \right].
	\end{align}
	So using~\eqref{eq:p-er-aux-decomp-1} and \eqref{eq:omega-upper-bound} in \eqref{eq:p-er-aux-decomp} and the hypothesis $E_0(R) \le \log|\Acal|-R$, we obtain
	\begin{align*}
		\bar{p}_{e,r}(n,M)
		\dotle e^{-nE_0(R)} + e^{-n\left(\log|\Acal|-R\right)}
		\doteq e^{-nE_0(R)}.
	\end{align*}
	
	To conclude, we obtain a lower bound. Define, for each $\zv\in\Acal^N$, the random variables $\tilde{N}_{\zv} \coloneqq \tilde{N}_{\zv}(\zv_2,\dots,\xv_M) \coloneqq \sum_{m'=2}^{M} \1_{\{\zv\}}(\Zv_{m'})$ and the set
	\begin{equation*}
		\tilde{\Ecal}_n \coloneqq \left\{ (\zv_2,\dots,\zv_M) \suchthat \forall \zv \in\Acal^n, N_{\zv}(\zv_2,\dots,\zv_M) \le n^2 \right\}.
	\end{equation*}
	For sequences $(\Zv_2, \dots, \Zv_M) \in \tilde{\Ecal}_n$, it holds that
	\begin{align}
		Q_u\left( \bigcup_{m'=2}^{M} \left\{ \Zv_{m'} \right\} \right)
		&= \sum_{\zv\in\Acal^n} Q_u(\zv) \1_{\left\{ \Zv_2,\dots,\Zv_M \right\}}(\zv) \nonumber\\
		&\ge \sum_{\zv\in\Acal^n} Q_u(\zv) \frac{N_{\zv}}{n^2} \nonumber\\
		&= \frac{1}{n^2} \sum_{m'=2}^{M} Q_u(\Zv_{m'}). \label{eq:appF-bound2}
	\end{align}

	We can then obtain the lower bound
	\begin{align*}
		&\Phi(n,M) \\
		&\ge \E\left[ \frac{Q_u\left( \bigcup_{m'=2}^{M} \left\{ \Zv_{m'} \right\} \right)}{Q_u(\Zv_1) + Q_u\left( \bigcup_{m'=2}^{M} \left\{ \Zv_{m'} \right\} \right)} \1_{\tilde{\Ecal}_n}(\Zv_2,\dots,\Zv_M) \right]\\
		&\ge \frac{1}{n^2} \E\left[ \frac{ \sum_{m'=2}^{M} Q_u(\Zv_{m'}) }{Q_u(\Zv_1) + \sum_{m'=2}^{M} Q_u(\Zv_{m'})} \1_{\tilde{\Ecal}_n}(\Zv_2,\dots,\Zv_M) \right]\\
		&\ge \frac{1}{n^2} \Bigg( \E\left[ \frac{ \sum_{m'=2}^{M} Q_u(\Zv_{m'}) }{Q_u(\Zv_1) + \sum_{m'=2}^{M} Q_u(\Zv_{m'})} \right]\\
		&\hspace{10em} - \Pbb\left( (\Zv_2,\dots,\Zv_M) \notin \tilde{\Ecal}_n \right) \Bigg),
	\end{align*}
	where in the second inequality we used \eqref{eq:appF-bound1} and \eqref{eq:appF-bound2}; and in the third inequality we used that $\1_{\tilde{\Ecal}_n}(\Zv_2,\dots,\Zv_M) = 1 - \1_{\tilde{\Ecal}_n^{\comp}}(\Zv_2,\dots,\Zv_M)$ and that the fraction in the expectation is no more than~$1$. To bound the probability term, note that $N_{\zv} \sim \Bin\left( M-1, |\Acal|^{-n} \right)$ and apply Chernoff's bound:
	\begin{align*}
		&\hspace{-2em}\Pbb\left( (\Zv_2,\dots,\Zv_M) \notin \tilde{\Ecal}_n \right)\\
		&= \Pbb\left( \bigcup_{\zv\in\Acal^n} \left\{ N_{\zv} > n^2 \right\} \right)\\
		&\le |\Acal|^n \max_{\zv\in\Acal^n} \Pbb\left( N_{\zv} > n^2 \right)\\
		&\le |\Acal|^n \exp\left( -(M-1) d\left( \frac{n^2}{M-1} \middle\| \frac{1}{|\Acal|^n} \right) \right)\\
		&\le \exp\left( -n^3 \left( \log|\Acal|-R+\epsilon(n) \right) \right),
	\end{align*}
	where $\epsilon(n) \coloneqq \left( 2\log n-1 \right)/n - (\log|\Acal|)/n^2$, and in the last step we used $d(p\|q) \ge p\log\frac{p}{q}-p$~\cite[pp.~166-167]{merhav2010}. Since this decay is super-exponential , under the hypothesis that the limit \eqref{eq:ch3-prop-sum-union-hypothesis} exists, we then have
	\begin{equation*}
		\bar{p}_e(n,M) \ge \Phi(n,M) \dotge e^{-nE_0(R)},
	\end{equation*} 
	which concludes the proof.
\end{IEEEproof}

\begin{IEEEproof}[Proof of Theorem~\ref{thm:error-exponent-randomised-guessing-mismatched}]
	We can show that
	\begin{align*}
		\lim_{n\to\infty} -\frac{1}{n} \log \E\left[ \frac{
			\sum_{m'=2}^{M} u_n(\Zv_{m'})
		}{
			u_n(\Zv_1) + \sum_{m'=2}^{M} u_n(\Zv_{m'})
		} \right]
	= E_r(R;g),
	\end{align*}
	as given by~\eqref{eq:error-exponent-randomised-guessing-mismatched}.
	This result is analogous to \cite[Theorem~3]{scarlett2015} and \cite[Section~III]{merhav2017}, but in the particular case of additive channels (see also~\cite{miyamoto-arxiv} for a detailed proof following the ideas of~\cite{merhav2017}); details are omitted.
	Note that, by choosing $\hat{P}_Z' = \hat{P}_Z$ in the second minimisation of~\eqref{eq:error-exponent-randomised-guessing-mismatched}, we have
	\begin{align*}
		E_r(R;g)
		&\le \min_{\hat{P}_Z\in\Pcal(\Acal)} D(\hat{P}_Z\|P_Z) + \left[ \log|\Acal| - H(\hat{P}_Z) - R \right]_+\\
		&\le \min_{\hat{P}_Z\in\Pcal(\Acal)} D(\hat{P}_Z\|P_Z) + \left[ \log|\Acal| - R \right]_+\\
		&= \log|\Acal| - R.
	\end{align*}
	So now we can apply Lemma~\ref{prop:randomised-error-exponent-replace-sum} and conclude the proof. 
\end{IEEEproof}

\section{Proof of Theorem~\ref{thm:error-exponent-alpha-properties}}
\label{app:proof-error-exponent-alpha-properties}

\begin{IEEEproof}[Proof of Theorem~\ref{thm:error-exponent-alpha-properties}]
	Item~1 follows from the monotonicity of $\alpha \mapsto E_r(R;g_{\alpha})$ in \eqref{eq:error-exponent-randomised-guessing-alpha} and the optimality of the ML decoder.
	
	For item~2, we invoke \cite[Theorem~7]{liu2017}, which states that the average probability of error of matched randomised decoding is at most twice that of matched deterministic decoding, that is, $\bar{p}_{e,d}(n,M;g_{1}) \le \bar{p}_{e,r}(n,M;g_{1}) \le 2 \bar{p}_{e,d}(n,M;g_{1})$, which implies that $E_{r}(R;g_1) = E_d(R;g_1)$.
	
	For the remaining, we note that, by the same arguments used in Lemma~\ref{lem:projection-reduction}, the minimisations in~\eqref{eq:error-exponent-randomised-guessing-alpha} can be reduced to tilted distributions, and the exponent can be written as
	\begin{align*}
		E_r(R;{g}_{\alpha})
		= &\inf_{\gamma\in\R} D(\tilde{P}_{\gamma} \| P_Z )\\
		&+ \inf_{\beta\in\R} \bigg[ \alpha \left[ H(\tilde{P}_{\beta}\| P_Z) - H(\tilde{P}_{\gamma}\| P_Z) \right]_+\\
		&\hspace{4em}+ \log|\Acal| - H(\tilde{P}_{\beta}) - R \bigg]_+.
	\end{align*}
	Since $\beta \mapsto \psi_P'(\beta)$ is increasing, $H(\tilde{P}_{\beta} \| P_Z) - H(\tilde{P}_{\gamma} \| P_Z) \ge 0$ whenever $\beta\le\gamma$. So we can split the inner minimisation into
	\begin{align*}
		F(\gamma) &\coloneqq \inf_{\beta\in\R} \bigg[ \alpha \left[ H(\tilde{P}_{\beta}\| P_Z) - H(\tilde{P}_{\gamma}\| P_Z) \right]_+\\
		&\hspace{6.5em}+ \log|\Acal| - H(\tilde{P}_{\beta}) - R \bigg]_+ \nonumber\\
		&= \min\Bigg\{
		\inf_{\beta\le\gamma} \bigg[ \alpha \left( H(\tilde{P}_{\beta}\| P_Z) - H(\tilde{P}_{\gamma}\| P_Z) \right)\\
		&\hspace{6.5em}+ \log|\Acal| - H(\tilde{P}_{\beta}) - R \bigg]_+ ,\  \nonumber\\
		&\hspace{4em}\inf_{\beta \ge \gamma} \left[ \log|\Acal|-H(\tilde{P}_{\beta})-R \right]_+
		\Bigg\}.
	\end{align*}
	Denote
	\begin{align*}
		f_{1}(\beta)
		&\coloneqq \alpha \left( H(\tilde{P}_{\beta}\| P_Z) - H(\tilde{P}_{\gamma}\| P_Z) \right)\\
		&\quad  + \log|\Acal| - H(\tilde{P}_{\beta}) - R\\
		&= \alpha \psi_P'(\gamma) + \left( \beta - \alpha \right)\psi_P'(\beta) - \psi_P(\beta) + \log|\Acal| - R.
	\end{align*}
	The function $f_1$ is increasing for $\beta \ge \alpha$, and decreasing elsewhere; thus
	\begin{align} \label{eq:ch3-aux-f-bar-1}
		F_1(\gamma)
		\coloneqq \inf_{\beta\le\gamma} \left[ f_1(\beta) \right]_+
		= \begin{cases}
			\left[ f_1(\gamma) \right]_+, & \gamma\le \alpha,\\
			\left[ f_1\left( \alpha \right) \right]_+, & \gamma \ge \alpha.
		\end{cases}
	\end{align}
	Similarly, denote
	\begin{align*}
		f_2(\beta)
		&\coloneqq
		\log|\Acal| - H(\tilde{P}_{\gamma}) - R\\
		&= \log|\Acal| - R - \psi_P(\beta) + \beta\psi_P'(\beta).
	\end{align*}
	This function is decreasing for $\beta\le0$ and increasing for $\beta\ge0$, with minimum value $f_2(0) = -R$. Note that $f_2$ is negative for $0 \le \beta \le \alpha_R$, with $\alpha_R>0$ satisfying $H(\tilde{P}_{\alpha_R}) = \log|\Acal|-R$. We deduce that
	\begin{align} \label{eq:ch3-aux-f-bar-2}
		F_2(\gamma)
		\coloneqq \inf_{\beta\ge\gamma} \left[ f_2(\beta) \right]_+
		= \begin{cases}
			0, & \gamma \le \alpha_R,\\
			f_2(\gamma), &\gamma \ge \alpha_R.
		\end{cases}
	\end{align}
	With these, we have $F(\gamma) = \min\left\{ F_1(\gamma),\  F_2(\gamma) \right\}$ and
	\begin{equation} \label{eq:ch3-aux-min-F}
		E_r(R;g_{\alpha})
		= \inf_{\gamma\in\R} D(\tilde{P}_{\gamma} \| P_Z )
		+ F(\gamma).
	\end{equation}
	
	For item~3, consider $0 \le \alpha_R\le1/2$ and $\alpha = 1/2$. Taking that into account in~\eqref{eq:ch3-aux-f-bar-1} and \eqref{eq:ch3-aux-f-bar-2}, we have
	\begin{equation*}
		F(\gamma)
		= \begin{cases}
			0, &\gamma \le \alpha_R,\\
			\log|\Acal|-H(\tilde{P}_{\gamma})-R, &\alpha_R \le \gamma \le \frac{1}{2},\\
			(\spadesuit), &\gamma\ge\frac{1}{2}.
		\end{cases}
	\end{equation*}
	with
	\begin{align*}
		(\spadesuit)
		= &\frac{1}{2}\left( H(\tilde{P}_{1/2}\|P_Z) - H(\tilde{P}_{\gamma}\|P_Z) \right)\\
		&+ \log|\Acal| - H(\tilde{P}_{1/2}) - R.
	\end{align*}
	
	Accordingly, the minimisation in~\eqref{eq:ch3-aux-min-F} declines in three cases, which can be treated separately using again the same techniques. Namely,
	\begin{align*}
		\tilde{E}_1
		&\coloneqq \inf_{\gamma \le \alpha_R} D(\tilde{P}_{\gamma} \| P_Z)\\
		&= D(\tilde{P}_{\alpha_R} \| P_Z),\\
		\tilde{E}_2
		&\coloneqq \min_{\alpha_R \le \gamma \le 1/2} D(\tilde{P}_{\gamma} \| P_Z) + \log|\Acal|-H(\tilde{P}_{\gamma})-R\\
		&= D(\tilde{P}_{1/2} \| P_Z) + \log|\Acal|-H(\tilde{P}_{1/2})-R,\\
		\tilde{E}_3
		&\coloneqq \inf_{\gamma\ge1/2} D(\tilde{P}_{\gamma} \| P_Z) + 
		\frac{1}{2}\left( H(\tilde{P}_{1/2}\|P_Z) - H(\tilde{P}_{\gamma}\|P_Z) \right)\\
		&\hspace{4em}+ \log|\Acal| - H(\tilde{P}_{1/2}) - R\\
		&= D(\tilde{P}_{1/2} \| P_Z) + \log|\Acal|-H(\tilde{P}_{1/2})-R,
	\end{align*}
	and $E_r(R;g_{\alpha}) = \min\{ \tilde{E}_1,\, \tilde{E}_2,\, \tilde{E}_3 \}$. We have $\tilde{E}_1 \ge \tilde{E}_2 = \tilde{E}_3$, which can be seen by noting that $\alpha \mapsto D(\tilde{P}_{\alpha}\|P_Z) - H(\tilde{P}_{\alpha})$ is minimised at $\alpha=1/2$, and recalling that $H(\tilde{P}_{\alpha_R}) = \log|\Acal|-R$. Thus, in this case, we have
	\begin{align}
		E_r(R;g_{\alpha})
		&=	D(\tilde{P}_{1/2} \| P_Z) + \log|\Acal|- H(\tilde{P}_{1/2}) - R\nonumber\\
		&= \log|\Acal| - R - H_{1/2}(P_Z),
	\end{align}
	where we used that $D(\tilde{P}_{1/2} \| P_Z) -H(\tilde{P}_{1/2}) = -H_{1/2}(P_Z)$.
 	This expression coincides with $E_d(R;g_1)$ in the same regime, see \eqref{eq:error-exponent-deterministic-guessing-alpha-bis}. Using~\eqref{eq:error-exponent-alpha-prop-1}, we conclude that the same is true for any $\alpha\ge1/2$.
	
	For item~4, consider $1/2 \le \alpha_R \le 1$ and $\alpha=\alpha_R$. Note that
	\begin{align*}
		&f_1\left( \alpha_R \right) \le f_2(\gamma)\\
		&\iff g(\gamma) \coloneqq \psi_P(\alpha_R) + (\gamma-\alpha_R) \psi_P'(\gamma) - \psi_P(\gamma) \ge 0,
	\end{align*}
	which can be seen to be true by noting that $\gamma \mapsto g(\gamma)$ has minimum value $g(\alpha_R) = 0$. 
	Considering \eqref{eq:ch3-aux-f-bar-1} and \eqref{eq:ch3-aux-f-bar-2}, we conclude that
	\begin{align*}
		F(\gamma) = \begin{cases}
			0, & \gamma \le \alpha_R\\
			\alpha_R \left( \psi_P'(\gamma) - \psi_P'(\alpha_R) \right), & \gamma \ge \alpha_R.,
		\end{cases}
	\end{align*}
	and thus,
	\begin{align*}
		E_r(R;{g}_{\alpha})
		&= \min \bigg\{
		\inf_{\gamma \le \alpha_R} D(P_{\gamma}\|P_Z),\\
		&\hspace{0.5em}\inf_{\gamma\ge\alpha_R} D(P_{\gamma}\|P_Z)
		+ \alpha_R \left( \psi_P'(\gamma) - \psi_P'(\alpha_R) \right)
		\bigg\}.
	\end{align*}
	Using the same techniques as before, we can show that
	\begin{align*}
		\inf_{\gamma \le \alpha_R} D(P_{\gamma}\|P_Z)
		&= \inf_{\gamma \le \alpha_R}(\gamma-1)\psi_P'(\gamma)-\psi_P(\gamma) \\
		&= \begin{cases}
			0, & \alpha_R \ge 1\\
			D(\tilde{P}_{\alpha_R} \| P_Z), & \alpha_R \le 1,
		\end{cases}
	\end{align*}
	and
	\begin{align*}
		&\hspace{-1em}\inf_{\gamma\ge\alpha_R} D(P_{\gamma}\|P_Z)
		+ \alpha_R \left( \psi_P'(\gamma) - \psi_P'(\alpha_R) \right)\\
		&= \inf_{\gamma\ge\alpha_R} (\gamma-1+\alpha_R)\psi_P'(\gamma) - \psi_P(\gamma) - \alpha_R\psi_P'(\alpha_R)\\
		&= D(\tilde{P}_{\alpha_R}\| P_Z).
	\end{align*}
	Combining these, we get, in this case,
	\begin{equation}
		E_r(R;{g}_{\alpha})
		= D(\tilde{P}_{\alpha_R}\| P_Z),
	\end{equation}
	which once again agrees with $E_d(R;g_1)$ in the same regime (see \eqref{eq:error-exponent-deterministic-guessing-alpha-bis}). And~\eqref{eq:error-exponent-alpha-prop-1} allows us to conclude the same holds for any $\alpha\ge\alpha_R$.
\end{IEEEproof}

\section{Proof of Theorem~\ref{thm:complexity-exponent-randomised-mismatched}}
\label{app:proof-complexity-exponent-randomised-mismatched}

\subsection{Replace the Union by a Sum}

In general, the union inside the probability in~\eqref{eq:average-complexity-randomised} may contain less than $M$ elements, as the sequences may be repeated (this corresponds to the fact that the codebook may contain repeated codewords). However, as far as complexity exponents are concerned, we can replace the probability of the union by the sum of probabilities, as shown in the next result.

\begin{lemma} \label{prop:ch3-replace-union-sum}
	Fix a rate~$R$ and let $M = \lfloor e^{nR} \rfloor$. Let $\Zv_1 \sim P_{\Zv}$, and $\Zv_2,\dots,\Zv_{M}$ are i.i.d. uniformly distributed on~$\Acal^n$.
	Suppose that
	\begin{equation}
		\lim_{n\to\infty} \frac{1}{n} \log \E\left[ \frac{1}{Q_u(\Zv_1) + \sum_{m'=2}^{M} Q_u(\Zv_{m'}) } \right] \eqqcolon F_0(R)
	\end{equation}
	exists, and that there exists a number $C_0\in\R$ such that $\min_{n\in\N} \min_{\zv\in\Acal^n} \frac{1}{n}\log Q_u(\zv) \ge C_0$.
	 Then,
	\begin{equation} \label{eq:ch3-complexity-randomised-guessing-expectation-sum}
		\bar{q}_r\big(n,\lfloor e^{nR} \rfloor\big)
		\doteq
		e^{nF_0}.
	\end{equation}
\end{lemma}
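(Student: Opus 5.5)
We compare, realisation by realisation, the probability of the union $Q_u\bigl(\{\Zv_1\}\cup\bigcup_{m'=2}^{M}\{\Zv_{m'}\}\bigr)$ with the sum $S\coloneqq Q_u(\Zv_1)+\sum_{m'=2}^{M}Q_u(\Zv_{m'})$, in the same way as in the proof of Lemma~\ref{prop:randomised-error-exponent-replace-sum}.

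\emph{Lower bound.} Since the probability of a union is at most the sum of the individual probabilities, we have $Q_u(\bigcup)\le S$ pointwise. Hence $1/Q_u(\bigcup)\ge 1/S$, and taking expectations in~\eqref{eq:average-complexity-randomised} gives $\bar{q}_r(n,M)\ge \E[1/S]\doteq e^{nF_0(R)}$.

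\emph{Upper bound.} We reuse the multiplicity counts $N_{\zv}$ and the good set $\tilde{\Ecal}_n$ on which every $\zv$ is hit at most $n^2$ times among $\Zv_2,\dots,\Zv_M$. To account for $\Zv_1$ as well, we count multiplicities over all $M$ sequences and bound them by $n^2+1$. On $\tilde{\Ecal}_n$ this gives
\begin{equation*}
Q_u\Bigl(\{\Zv_1\}\cup\bigcup_{m'=2}^{M}\{\Zv_{m'}\}\Bigr)\ge \frac{S}{n^2+1}.
\end{equation*}
We then split the expectation:
\begin{equation*}
\bar{q}_r\le (n^2+1)\,\E[1/S]+\E\bigl[\1_{\tilde{\Ecal}_n^{\comp}}/Q_u(\bigcup)\bigr].
\end{equation*}
For the second term, use $Q_u(\bigcup)\ge Q_u(\Zv_1)\ge e^{nC_0}$, which is exactly where the hypothesis on $C_0$ enters. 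This bounds the term by $e^{-nC_0}\,\Pbb(\tilde{\Ecal}_n^{\comp})$.

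\emph{Main obstacle.} The crux is ensuring that this bad-event contribution is negligible against $e^{nF_0}$. The Chernoff bound from the earlier proof gives $\Pbb(\tilde{\Ecal}_n^{\comp})\le \exp(-n^3(\log|\Acal|-R+o(1)))$, which is super-exponential (this requires $R<\log|\Acal|$; otherwise one can replace $n^2$ by a suitable threshold such as $e^{n(R-\log|\Acal|)}n^2$). It therefore kills the exponential factor $e^{-nC_0}$. Meanwhile $F_0\ge 0$, since $S\le M$ trivially gives $\E[1/S]\ge 1/M$ and hence $F_0\ge -R$; in any case $F_0$ is a finite exponent, so the super-exponentially small term is negligible.

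Combining the two bounds, the polynomial factor $n^2+1$ disappears under $\doteq$, giving $\bar{q}_r\doteq e^{nF_0(R)}$.
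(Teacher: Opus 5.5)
Your proof follows the paper's. The union bound $Q_u(\bigcup)\le S$ gives the lower bound. For the upper bound you split on the event that no sequence occurs more than $n^2$ times among $\Zv_2,\dots,\Zv_M$, hence at most $n^2+1$ times overall, so that $Q_u(\bigcup)\ge S/(n^2+1)$ there. The complement is then controlled by $1/Q_u(\bigcup)\le 1/Q_u(\Zv_1)\le e^{-nC_0}$ times a super-exponentially small Chernoff bound. Your factor $e^{-nC_0}$ is the correct one; the paper's display writes $e^{nC_0}$.

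Two side remarks need correcting.

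First, your parenthetical fix for $R>\log|\Acal|$ does not work. Raising the threshold to $n^2e^{n(R-\log|\Acal|)}$ makes the prefactor in front of $\E[1/S]$ exponential, so it no longer disappears under $\doteq$. It also cannot be repaired, because the statement is false in that range:
\begin{itemize}
\item each $\zv\in\Acal^n$ is then hit about $e^{n(R-\log|\Acal|)}$ times, so with overwhelming probability the union is all of $\Acal^n$ and $\bar{q}_r\doteq 1$;
\item meanwhile $S$ concentrates around $e^{n(R-\log|\Acal|)}$, which gives $F_0=\log|\Acal|-R<0$.
\end{itemize}
The lemma is implicitly restricted to $R\le\log|\Acal|$, which is the only meaningful range anyway. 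The paper's Chernoff step carries the same restriction, since it needs $\log|\Acal|-R+\epsilon(n)>0$.

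Second, $S\le M$ yields $F_0\ge -R$, not $F_0\ge 0$. The stronger claim does hold when $R\le\log|\Acal|$, by Jensen's inequality and $\E[S]\le 2$. The slip is harmless, because you only need $F_0>-\infty$ to absorb the super-exponentially small term.
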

\begin{IEEEproof}
	Since $Q_u\left( \bigcup_{m'=1}^{M} \left\{ \Zv_{m'} \right\} \right) \le \sum_{m'=1}^{M} Q_u(\Zv_{m'})$, we immediately get the lower bound
	\begin{align}
		\bar{q}_r(n,M) \ge \E \left[ \frac{1}{Q_u\left(  \Zv_1 \right) + \sum_{m'=1}^{M} Q_u\left(  \Zv_{m'} \right)} \right]. \label{eq:ch3-union-sum-lower-bound}
	\end{align}
	
	For the upper bound, fix $\Zv_1 = \zv_1$ in~\eqref{eq:average-complexity-randomised} and study the expectation with respect to the other variables. Define, for each $\zv\in\Acal^n$, the random variables $\tilde{N}_{\zv}
	\coloneqq \tilde{N}_{\zv}\left(\Zv_2, \dots, \Zv_M\right)
	\coloneqq \sum_{m'=2}^{M} \1_{\{\zv\}}(\Zv_{m'})$,
	and $N_{\zv}
	\coloneqq N_{\zv} \left(\zv_1, \Zv_2, \dots, \Zv_M\right)
	\coloneqq \1_{\{\zv\}}(\zv_1) + \tilde{N}_{\zv}$, as well as the set
	\begin{equation*}
		\Ecal_n
		\coloneqq \left\{ (\zv_1,\dots,\zv_M) \suchthat \forall {\zv \in \Acal^n},\ N_{\zv}(\zv_1,\dots,\zv_M) \le n^2+1 \right\}.
	\end{equation*}
	For convenience, denote
	\begin{equation*}
		\Psi(\Zv_2,\dots,\Zv_M)
		\coloneqq \frac{1}{Q_u\left( \left\{ \zv_1 \right\} \cup \bigcup_{m'=2}^{M} \left\{ \Zv_{m'} \right\} \right)},
	\end{equation*}
	and split the expectation~\eqref{eq:average-complexity-randomised} into
	\begin{align} \label{eq:split-inner-expectation}
		&\hspace{-1em}\E \left[ \Psi(\Zv_2,\dots,\Zv_M) \right]\nonumber\\
		&= \E \left[ \Psi(\Zv_2,\dots,\Zv_M) \1_{\Ecal_n}(\zv_1,\Zv_2,\dots,\Zv_M) \right] \nonumber\\
		&\quad+ \E \left[ \Psi(\Zv_2,\dots,\Zv_M) \1_{\Ecal_n^{\comp}}(\zv_1,\Zv_2,\dots,\Zv_M) \right].
	\end{align}
	
	The first term of~\eqref{eq:split-inner-expectation} only counts the sequences such that $(\zv_1,\Zv_2,\dots,\Zv_M) \in \Ecal_n$. For those, we have
	\begin{align*}
		\frac{1}{\Psi(\Zv_2,\dots,\Zv_M)}
		&= \sum_{\zv\in\Acal^n} Q_u(\zv) \1_{\{ \zv_1, \Zv_2, \dots, \Zv_M \}}(\zv) \nonumber\\
		&\ge \sum_{\zv\in\Acal^n} Q_u(\zv) \frac{N_{\zv}}{n^2+1} \nonumber\\
		&= \frac{1}{n^2+1} \left( Q_u(\zv_{1}) + \sum_{m'=2}^{M} Q_u(\Zv_{m'}) \right).
	\end{align*}
	And, thus,
	\begin{align} 
		&\hspace{-1em}\E \left[ \Psi(\Zv_2,\dots,\Zv_M) \1_{\Ecal_n}(\zv_1,\Zv_2,\dots,\Zv_M) \right] \nonumber\\
		&\le \big( n^2+1 \big) \E\left[ \frac{1}{Q_u(\zv_1) + \sum_{m'=2}^{M} Q_u(\Zv_{m'}) } \right]. \label{eq:split-inner-expecatation-term-1}
	\end{align}
	
	For the second term of~\eqref{eq:split-inner-expectation}, we have
	\begin{align}
		&\hspace{-2em}\E \left[ \Psi(\Zv_2,\dots,\Zv_M)  \1_{\Ecal_n^{\comp}}(\zv_1,\Zv_2,\dots,\Zv_M) \right]\nonumber\\
		&\le \frac{1}{\min_{\zv\in\Acal} Q_u(\zv)} \, \Pbb\left( (\zv_1,\Zv_{2},\dots,\Zv_M) \notin \Ecal_n \right). \nonumber
	\end{align}
	We apply Chernoff's bound to bound the probability . Specifically, noting that $\tilde{N}_{\zv} \sim \Bin\left( M-1, |\Acal|^{-n} \right)$ for each $\zv\in\Acal^n$, we have
	\begin{align*}
		&\hspace{-1em}\Pbb\left( (\zv_1,\Zv_2,\dots,\Zv_M) \notin \Ecal_n \right)\\
		&= \Pbb\left( \bigcup_{\zv\in\Acal^n} \left\{ N_{\zv} > n^2+1 \right\} \right)\\
		&\le \sum_{\zv\in\Acal^n} \Pbb\left( \1_{\{\zv\}}(\zv_1) + \tilde{N}_{\zv} > n^2+1 \right)\\
		&\le |\Acal|^n \,\max_{\zv\in\Acal^n}\, \Pbb\left( \tilde{N}_{\zv} > n^2 \right)\\
		&\le |\Acal|^n \exp\left( - (M-1)\, d\left( \frac{n^2}{M-1} \middle\| \frac{1}{|\Acal|^n} \right) \right)\\
		&\le \exp\left( - n^3 \left( \log|\Acal| - R + \epsilon(n) \right) \right),
	\end{align*}
	where $\epsilon(n) \coloneqq {(2\log n - 1)}/{n}  - {(\log|\Acal|)}/{n^2}$ and in the last step we used that $d(p\|q) \ge p\log\frac{p}{q}-p$ \cite[pp.~166-167]{merhav2010}. Thus,
	\begin{align}
		&\hspace{-2em}\E \left[ \Psi(\Zv_2,\dots,\Zv_M)  \1_{\Ecal_n^{\comp}}(\zv_1,\Zv_2,\dots,\Zv_M) \right]\nonumber\\
		&\le e^{nC_0} \cdot \exp\left( - n^3 \left( \log|\Acal| - R + \epsilon(n) \right) \right), \label{eq:split-inner-expecatation-term-2}
	\end{align}
	
	Replacing \eqref{eq:split-inner-expecatation-term-1} and \eqref{eq:split-inner-expecatation-term-2} in \eqref{eq:split-inner-expectation} and taking the expectation with respect to $\Zv_1$, we find
	\begin{align}
		\bar{q}_{r}(n,M)
		&\le \big( n^2+1 \big) 
		\E\left[ \frac{1}{Q_u(\Zv_1) + \sum_{m'=2}^{M} Q_u(\Zv_{m'}) } \right] \nonumber \\	
		&\hspace{2em}+  e^{n C_0} e^{-n^3 \left( \log|\Acal|-R+\epsilon(n) \right)}. \label{eq:ch3-union-sum-upper-bound}
	\end{align}
	
	With the hypotheses in the statement, \eqref{eq:ch3-union-sum-upper-bound} is dominated by the expectation term, and $\bar{q}_{r}(n,M) \dotle e^{F_0}$. The lower bound~\eqref{eq:ch3-union-sum-lower-bound} implies that $\bar{q}_{r}(n,M) \dotge e^{F_0}$, which establishes the desired result.
\end{IEEEproof}

\subsection{Rewrite the Expectation as an Integral}

We then study
\begin{align*}
	\bar{q}_r(n,M)
	&\doteq \E\left[ \frac{1}{Q_u(\Zv_1) + \sum_{m'=2}^{M} Q_u(\Zv_{m'}) } \right]\\
	&= \E \left[ \E \left[ \frac{1}{Q_u\left( \Zv_1 \right) + \sum_{m'=2}^{M} Q_u\left(  \Zv_{m'} \right)} \mgiven \Zv_1 \right] \right],
\end{align*}
starting with the inner expectation for fixed $\Zv_1=\zv_1$. We are going to employ the type class enumerator method~\cite{merhav2025}.
Denote, for each $\hat{P}_{Z} \in \Pcal_n(\Acal)$,
\begin{align*}
	N_n(\hat{P}_{Z})
	\coloneqq N_n(\hat{P}_{Z}; \Zv_{2}, \dots, \Zv_{M})
	\coloneqq \sum_{m'=2}^{M} \1_{\Tcal_n(\hat{P}_{Z})} (\Zv_{m'})
\end{align*}
the number of sequences $\Zv_2,\dots,\Zv_{M}$ with type $\hat{P}_Z$.

The collection $\left(N_n(\hat{P}_Z) \suchthat \hat{P}_Z \in \Pcal_n(\Acal) \right)$ follows a multinomial distribution with $M-1$ trials and probabilities of success $\frac{|\Tcal_n(\hat{P}_Z)|}{|\Acal^n|}$.
Here, for convenience, we will denote\footnote{
	Note that $\tilde{g}_n$ and $g$ have opposite signs.
}
\begin{equation} \label{eq:g-tilde-n}
	\tilde{g}_n(\hat{P}_{\zv}) \coloneqq -g(\hat{P}_{\zv}) - \frac{1}{n}\log \left( \sum_{\zv'\in\Acal^n} e^{ng(\hat{P}_{\zv'})} \right),
\end{equation}
so that $Q_u(\zv) = e^{-n\tilde{g}_n(\hat{P}_{\zv})}$.
Recalling Assumption~\ref{ass:g-lim-log}, we have $\lim_{n\to\infty} \tilde{g}_n(\hat{P}_{\zv_1}) = -g(\hat{P}_{\zv_1})$.
With this notation, the integral representation $\E[X] = \int_0^{\infty} \Pbb\left( X\ge x \right) \d x$ for a positive random variable, and the change of variable $\theta = \frac{\log x}{n}$, we have
\begin{align}
	&\E\left[ \frac{1}{Q_u(\zv_1) + \sum_{m'=2}^{M} Q_u\left( \Zv_{m'} \right)} \right] \nonumber\\
	&= \E\left[ \frac{1}{e^{-n\tilde{g}_n(\hat{P}_{\zv_1})} +  \sum_{\hat{P}_Z \in \Pcal_n(\Acal)} N_n(\hat{P}_Z) e^{-n\tilde{g}_n(\hat{P}_Z)} } \right] \nonumber\\
	&= \int_{0}^{1} \Pbb\left( \frac{1}{e^{-n\tilde{g}_n(\hat{P}_{\zv_1})} + \sum_{\hat{P}_Z } N_n(\hat{P}_Z) e^{-n\tilde{g}_n(\hat{P}_Z)} } \ge x \right) \d x \nonumber\\
	&\hspace{0.5em}+ \int_{1}^{\infty} \Pbb\left( \frac{1}{e^{-n\tilde{g}_n(\hat{P}_{\zv_1})} + \sum_{\hat{P}_Z} N_n(\hat{P}_Z) e^{-n\tilde{g}_n(\hat{P}_Z)} } \ge x \right) \d x \nonumber\\
	&= c + n \int_{0}^{\infty} e^{n\theta} \nonumber\\
	&\quad\times \Pbb\left( e^{-n\tilde{g}_n(\hat{P}_{\zv_1})} + \sum_{\hat{P}_Z} N_n(\hat{P}_Z) e^{-n\tilde{g}_n(\hat{P}_Z)} \le e^{-n\theta} \right) \d \theta, \label{eq:expectation-1b}
\end{align}
where $0 \le c \le 1$.
In the following, denote $\delta(n) \coloneqq \frac{1}{n}\log|\Pcal_n(\Acal)|$ and recall that $\delta(n) \le \frac{|\Acal|}{n} \log(n+1)$.

\begin{proposition} \label{prop:bounding-integral}
	For each $n\in\N$, we have
	\begin{align}
		&n \int_{0}^{\tilde{g}_n(\hat{P}_{\zv_1}) - (\log 2)/n} e^{n\theta} \nonumber \\
		&\times \Pbb\left( \bigcap_{\hat{P}_Z \in \Pcal_n(\Acal)} \left\{ N_n(\hat{P}_Z) \le e^{n\left( \tilde{g}_n(\hat{P}_Z) - \theta - \tilde{\delta}(n)\right)} \right\} \right) \d \theta \nonumber \\
		&\le \E\left[ \frac{1}{Q_u(\zv_1) + \sum_{m'=2}^{M} Q_u\left( \Zv_{m'} \right)} \right] \nonumber \\
		&\le 1 + n \int_{0}^{\tilde{g}_n(\hat{P}_{\zv_1})} e^{n\theta} \nonumber\\
		&\quad\times \Pbb\left( \bigcap_{\hat{P}_Z \in \Pcal_n(\Acal)} \left\{ N_n(\hat{P}_Z) \le e^{n\left(\tilde{g}(\hat{P}_Z) - \theta\right)} \right\} \right) \d \theta,
	\end{align}
	where $\tilde{\delta}(n) \coloneqq \delta(n) + \frac{\log 2}{n}$.
\end{proposition}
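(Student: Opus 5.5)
Denote $S \coloneqq e^{-n\tilde{g}_n(\hat{P}_{\zv_1})} + \sum_{\hat{P}_Z} N_n(\hat{P}_Z) e^{-n\tilde{g}_n(\hat{P}_Z)}$. The starting point is the integral representation~\eqref{eq:expectation-1b}:
\begin{equation*}
	\E\left[ \frac{1}{S} \right] = c + n\int_0^\infty e^{n\theta}\,\Pbb\left(S \le e^{-n\theta}\right)\d\theta, \qquad 0\le c\le 1.
\end{equation*}
Both bounds will come from (i) truncating the range of $\theta$ using the deterministic first term of $S$, and (ii) replacing the event $\{S\le e^{-n\theta}\}$ by an intersection over types of events on the individual counts $N_n(\hat{P}_Z)$. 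The approach is a direct sandwich; no exponent computation is needed at this stage.

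\textbf{Upper bound.} We bound $c\le 1$. Since $S \ge e^{-n\tilde{g}_n(\hat{P}_{\zv_1})}$ always, the event $\{S\le e^{-n\theta}\}$ is empty whenever $\theta>\tilde{g}_n(\hat{P}_{\zv_1})$. Hence the integral may be restricted to $[0,\tilde{g}_n(\hat{P}_{\zv_1})]$. On this range, $S\le e^{-n\theta}$ forces every summand to be at most $e^{-n\theta}$. Thus $N_n(\hat{P}_Z)e^{-n\tilde{g}_n(\hat{P}_Z)}\le e^{-n\theta}$ for every type, i.e.\ $\{S\le e^{-n\theta}\}$ is contained in the intersection of the events $\{N_n(\hat{P}_Z)\le e^{n(\tilde{g}_n(\hat{P}_Z)-\theta)}\}$. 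Monotonicity of probability then gives the upper bound. The $\tilde{g}$ in the statement should be read as $\tilde{g}_n$; otherwise one needs the extra remark that it differs by the vanishing normalisation term.

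\textbf{Lower bound.} We drop $c\ge 0$ and restrict the integral to $[0,\tilde{g}_n(\hat{P}_{\zv_1})-(\log 2)/n]$, which is allowed because the integrand is nonnegative. Fix $\theta$ in this range. Then the first term of $S$ satisfies $e^{-n\tilde{g}_n(\hat{P}_{\zv_1})}\le \tfrac12 e^{-n\theta}$. On the intersection event with slack $\tilde{\delta}(n)$, each summand obeys $N_n(\hat{P}_Z)e^{-n\tilde{g}_n(\hat{P}_Z)}\le e^{-n\theta}e^{-n\delta(n)}/2 = e^{-n\theta}/(2|\Pcal_n(\Acal)|)$. Summing over at most $|\Pcal_n(\Acal)|$ types gives a total of at most $\tfrac12 e^{-n\theta}$. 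Hence $S\le e^{-n\theta}$, so the intersection event is contained in $\{S\le e^{-n\theta}\}$, which yields the lower bound.

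The only delicate point is the bookkeeping of constants. The $\log 2/n$ shift splits the budget $e^{-n\theta}$ evenly between the correct-noise term and the incorrect ones, while $\delta(n)$ absorbs the union over polynomially many types. One must also check that, when $\tilde{g}_n(\hat{P}_{\zv_1})<(\log 2)/n$, the empty-range integral is interpreted as zero, so the bound still holds trivially.
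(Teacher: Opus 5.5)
Your proof is correct and follows the paper's route. You truncate the $\theta$-range using the deterministic term $e^{-n\tilde{g}_n(\hat{P}_{\zv_1})}$, and you sandwich $\{S\le e^{-n\theta}\}$ between intersections of per-type count events, where the factor $2$ and the slack $e^{n\delta(n)}=|\Pcal_n(\Acal)|$ absorb the first term and the sum over types. Your summand-by-summand event inclusions are a slightly more direct version of the paper's chain through $\max_{\hat{P}_Z}$ and $a+b\le 2\max\{a,b\}$, and you are right to read the $\tilde{g}$ in the upper bound as $\tilde{g}_n$.
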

\begin{IEEEproof}
	We need to bound the integral in~\eqref{eq:expectation-1b}. The lower bound is shown in~\eqref{eq:lower-bound-1}, and the upper bound in~\eqref{eq:upper-bound-1}.
	
	\begin{figure*}
		\begin{align}
			&\hspace{-1em}\int_{0}^{\infty} e^{n\theta} \cdot \Pbb\left( e^{-n\tilde{g}_n(\hat{P}_{\zv_1})} + \sum_{\hat{P}_{Z} \in \Pcal_n(\Acal)} N_n(\hat{P}_{Z}) e^{-n\tilde{g}_n(\hat{P}_{Z})} \le e^{-n\theta} \right) \d \theta \nonumber \\
			&\ge \int_{0}^{\infty} e^{n\theta} \cdot \Pbb\left( e^{-n\tilde{g}_n(\hat{P}_{\zv_1})} + e^{n\delta(n)} \max_{\hat{P}_{Z} \in \Pcal_n(\Acal)} N_n(\hat{P}_{Z}) e^{-n\tilde{g}_n(\hat{P}_{Z})} \le e^{-n\theta} \right) \d \theta \nonumber\\
			&\ge \int_{0}^{\infty} e^{n\theta} \cdot \Pbb\left( 2 \max\left\{ e^{-n\tilde{g}_n(\hat{P}_{\zv_1})},\  \max_{\hat{P}_{Z} \in \Pcal_n(\Acal)} N_n(\hat{P}_{Z}) e^{-n\left(\tilde{g}_n(\hat{P}_{Z}) - \delta(n) \right)}\right\} \le e^{-n\theta} \right) \d \theta \nonumber\\
			&= \int_{0}^{\infty} e^{n\theta} \cdot \Pbb\left(
			\left\{ e^{-n\tilde{g}_n(\hat{P}_{\zv_1})} \le \frac{e^{-n\theta}}{2} \right\}
			\cap
			\left\{ \max_{\hat{P}_{Z} \in \Pcal_n(\Acal)} N_n(\hat{P}_{Z}) e^{-n\left(\tilde{g}_n(\hat{P}_{Z}) - \delta(n) \right)} \le \frac{e^{-n\theta}}{2} \right\}
			\right) \d \theta \nonumber\\
			&= \int_{0}^{\tilde{g}_n(\hat{P}_{\zv_1}) - (\log 2)/n} e^{n\theta} \cdot \Pbb\left( \max_{\hat{P}_{Z} \in \Pcal_n(\Acal)} N_n(\hat{P}_{Z}) e^{-n \tilde{g}_n(\hat{P}_{Z}) } \le \frac{e^{-n\left(\theta + \delta(n) \right)}}{2}
			\right)
			\d \theta \nonumber\\
			&= \int_{0}^{\tilde{g}_n(\hat{P}_{\zv_1}) - (\log 2)/n} e^{n\theta} \cdot \Pbb\left( \bigcap_{\hat{P}_{Z} \in \Pcal_n(\Acal)} \left\{ N_n(\hat{P}_{Z}) \le e^{n\left( \tilde{g}_n(\hat{P}_{Z}) - \theta - \tilde{\delta}(n)\right)} \right\} \right) \d \theta.
			\label{eq:lower-bound-1}
		\end{align}
	\end{figure*}

	\begin{figure*}	
		\begin{align}
			&\hspace{-2em} \int_{0}^{\infty} e^{n\theta} \cdot \Pbb\left( e^{-n\tilde{g}_n(\hat{P}_{\zv_1})} + \sum_{\hat{P}_{Z} \in \Pcal_n(\Acal)} N_n(\hat{P}_{Z}) e^{-n\tilde{g}_n(\hat{P}_{Z})} \le e^{-n\theta} \right) \d \theta \nonumber\\
			&\le \int_{0}^{\infty} e^{n\theta} \cdot \Pbb\left( e^{-n\tilde{g}_n(\hat{P}_{\zv_1})} + \max_{\hat{P}_{Z} \in \Pcal_n(\Acal)} N_n(\hat{P}_{Z}) e^{-n\tilde{g}_n(\hat{P}_{Z})} \le e^{-n\theta} \right) \d \theta \nonumber\\
			&\le \int_{0}^{\infty} e^{n\theta} \cdot \Pbb\left( \max\left\{ e^{-n\tilde{g}_n(\hat{P}_{\zv_1})},\  \sum_{\hat{P}_{Z} \in \Pcal_n(\Acal)} N_n(\hat{P}_{Z}) e^{-n\tilde{g}_n(\hat{P}_{Z})}  \right\} \le e^{-n\theta} \right) \d \theta \nonumber\\
			&= \int_{0}^{\tilde{g}_n(\hat{P}_{\zv_1})} e^{n\theta} \cdot \Pbb\left( \max_{\hat{P}_{Z} \in \Pcal_n(\Acal)} N_n(\hat{P}_{Z}) e^{-n \tilde{g}_n(\hat{P}_{Z})} \le e^{-n \theta}
			\right)
			\d \theta \nonumber\\
			&= \int_{0}^{\tilde{g}_n(\hat{P}_{\zv_1})} e^{n\theta} \cdot \Pbb\left( \bigcap_{\hat{P}_{Z} \in \Pcal_n(\Acal)} \left\{ N_n(\hat{P}_{Z}) \le e^{n\left(\tilde{g}_n(\hat{P}_{Z}) - \theta\right)} \right\} \right) \d \theta. \label{eq:upper-bound-1}
		\end{align}
		\hrulefill
	\end{figure*}
\end{IEEEproof}

In the light of Proposition~\ref{prop:bounding-integral}, in order to study~\eqref{eq:expectation-1b}, we need to study integrals of the form
\begin{align} \label{eq:ch3-the-integral}
	I(n) \coloneqq &\int_{0}^{\tilde{g}_n(\hat{P}_{\zv_1}) + \eta(n)} e^{n\theta} \nonumber\\
	&\hspace{-1em}\times \Pbb\left( \bigcap_{\hat{P}_{Z} \in \Pcal_n(\Acal)} \left\{ N_n(\hat{P}_{Z}) \le e^{n\left( \tilde{g}_n(\hat{P}_{Z}) - \theta - \zeta(n) \right)} \right\} \right) \d \theta,
\end{align}
where we will take $\zeta(n) = \tilde{\delta}(n)$ or $\zeta(n) = 0$, and $\eta(n) = (\log 2)/n$ or $\eta(n) = 0$.

\subsection{Split the Integral}
Note that, for each $\hat{P}_{Z} \in \Pcal_n(\Acal)$, we have $N_n(\hat{P}_{Z}) \sim \Bin\left( e^{nA_n},\, e^{-nB_n(\hat{P}_{Z})} \right)$, with
\begin{equation} \label{eq:A-n}
	A_n = \frac{1}{n} \log (M-1) = R - \epsilon_1(n)
\end{equation}
and
\begin{align}
	B_n(\hat{P}_{Z})
	&= -\frac{1}{n} \log\left( \frac{|\Tcal_n(\hat{P}_{Z})|}{|\Acal|^n}\right) \nonumber\\
	&= \log|\Acal| - H(\hat{P}_{Z}) + \epsilon_2(n), \label{eq:B-n}
\end{align}
where $\epsilon_1(n) \coloneqq \frac{1}{n}\log\left( \frac{e^{nR}}{e^{nR}-1} \right)$ and $0 \le \epsilon_2(n) \le \delta(n)$, due to the method of types.

For convenience, denote
\begin{equation} \label{eq:lambda-theta}
	\lambda_{n,\theta}(\hat{P}_{Z}) \coloneqq
	\tilde{g}_n(\hat{P}_{Z}) - \theta - \zeta(n).
\end{equation}

We know from~\cite[Theorem~4.3]{merhav2025} that the probability
\begin{equation*}
	\Pbb\left( \bigcap_{\hat{P}_{Z} \in \Pcal_n(\Acal)} \left\{ N_n(\hat{P}_{Z}) \le e^{n \lambda_{n,\theta}(\hat{P}_{Z}) } \right\} \right)\\
\end{equation*}
approaches either $0$ or $1$ (in an exponential sense), according to whether
\begin{equation*}
	\min_{\hat{P}_{Z} \in \Pcal_n(\Acal)} B_n(\hat{P}_{Z}) - A_n + \left[ \lambda_{n,\theta}(\hat{P}_{Z}) \right]_+ \ge 0
\end{equation*}
is verified or not. Based on that, the idea in the rest of this subsection is to split the integral~\eqref{eq:ch3-the-integral} into one term in which that probability is $\approx 1$, and another in which it is $\approx 0$, and show that the exponent of the expression is that of the first term.

First we find the value of $\theta$ that splits the integral.
For each $\hat{P}_{Z} \in \Pcal(\Acal)$ and $n\in\N$, denote $\varphi_n\big(\theta;\hat{P}_{Z}\big)
\coloneqq B_n(\hat{P}_{Z}) - A_n + \big[ \lambda_{n,\theta}(\hat{P}_{Z}) \big]_+$ and
\begin{align} 
	\Phi_n(\theta) \coloneqq \min_{\hat{P}_{Z} \in \Pcal_n(\Acal)}
	\varphi_n(\theta;\hat{P}_Z). \label{eq:Phi-n}
\end{align}
Note that $\theta \mapsto \Phi_n(\theta)$ is continuous and non-increasing.
Define
\begin{align}
	\theta_n^* \coloneqq \sup \left\{ \theta \in \R_+ \colon \Phi_n(\theta) > 0 \right\}, \label{eq:theta-n-star}
\end{align}
with the convention that $\theta_n^* = 0$, if $\Phi_n(\theta) < 0$ for every $\theta\in\R_+$.
Analogously, define $\varphi(\theta;\hat{P}_Z) \coloneqq \lim_{n\to\infty} \varphi_n(\theta;\hat{P}_Z)$ and $\Phi(\theta) \coloneqq \lim_{n\to\infty} \Phi_n(\theta) = \min_{\hat{P}_Z\in\Pcal(\Acal)} \varphi(\theta,\hat{P}_Z)$,
which converge uniformly, and
\begin{equation} \label{eq:theta-star}
	\theta^* \coloneqq \sup \left\{ \theta \in \R_+ \colon \Phi(\theta) > 0 \right\}. 
\end{equation}
We now characterise these values.

\begin{proposition} \label{prop:theta-star}
	The values of $\theta^*_n$ and $\theta^*$ defined in~\eqref{eq:theta-n-star} and \eqref{eq:theta-star} are given by
	\begin{align}
		\theta^*_n
		&= \log|\Acal| - R
		+ \min_{\hat{P}_Z \in \Vcal_n} \left( \tilde{g}_n(\hat{P}_Z) - H(\hat{P}_Z) \right) \nonumber\\
		&\hspace{6.2em}+ \epsilon_1(n) + \epsilon_2(n) - \zeta(n),\\
		\theta^*
		&= \log|\Acal| - R + \min_{\hat{P}_Z \in \Vcal} \left( -{g}(\hat{P}_Z) - H(\hat{P}_Z) \right),
	\end{align}
	where $\Vcal_n \coloneqq \big\{ \hat{P}_Z \in \Pcal_n(\Acal) \suchthat H(\hat{P}_Z) \ge \log|\Acal| - R + \epsilon(n) \big\}$ and $\Vcal \coloneqq \big\{ \hat{P}_Z \in \Pcal(\Acal) \suchthat H(\hat{P}_Z) \ge \log|\Acal|- R \big\}$.
\end{proposition}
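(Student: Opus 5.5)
Since $\Phi_n$ is continuous and non-increasing in $\theta$, $\theta_n^*$ is the largest $\theta$ with $\Phi_n(\theta)\ge 0$, i.e.\ the point where the zero crossing happens. I will therefore characterise $\{\theta\colon \Phi_n(\theta)>0\}$ directly by analysing the sign of each $\varphi_n(\theta;\hat{P}_Z)$ and then taking the minimum over types. Substituting \eqref{eq:A-n}, \eqref{eq:B-n} and \eqref{eq:lambda-theta},
\begin{align*}
\varphi_n(\theta;\hat{P}_Z) = \log|\Acal| - H(\hat{P}_Z) - R + \epsilon_1(n)+\epsilon_2(n) + \big[\tilde{g}_n(\hat{P}_Z) - \theta - \zeta(n)\big]_+ .
\end{align*}
The types then split naturally into two groups according to the sign of $b_n(\hat{P}_Z) \coloneqq \log|\Acal| - H(\hat{P}_Z) - R + \epsilon_1(n)+\epsilon_2(n)$.

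\emph{First group:} if $b_n(\hat{P}_Z)>0$, then $\varphi_n(\theta;\hat{P}_Z)\ge b_n(\hat{P}_Z)>0$ for every $\theta$. These types never make $\Phi_n$ non-positive and can be dropped. \emph{Second group:} if $b_n(\hat{P}_Z)\le 0$, which after absorbing the $\epsilon$ terms into $\epsilon(n)$ is exactly the condition defining $\Vcal_n$, then the $[\cdot]_+$ term must compensate. In that case $\varphi_n(\theta;\hat{P}_Z)>0$ if and only if $\tilde{g}_n(\hat{P}_Z)-\theta-\zeta(n) > -b_n(\hat{P}_Z)$, that is,
\begin{align*}
\theta < \log|\Acal|-R + \tilde{g}_n(\hat{P}_Z) - H(\hat{P}_Z) + \epsilon_1(n)+\epsilon_2(n)-\zeta(n).
\end{align*}
Hence $\Phi_n(\theta)>0$ exactly when $\theta$ lies below the minimum over $\Vcal_n$ of these thresholds, and taking the supremum over $\theta\ge 0$ gives the stated formula for $\theta_n^*$. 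Two conventions need care here: the case $\Vcal_n=\emptyset$, where the minimum is $+\infty$, and the truncation at $0$, where I would simply note that the formula is read together with the convention in \eqref{eq:theta-n-star}.

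For $\theta^*$ I would run the same argument with the limiting quantities. By Assumption~\ref{ass:g-lim-log}, $\tilde{g}_n\to -g$, and we also have $\epsilon_1,\epsilon_2,\zeta\to 0$. This gives $\varphi(\theta;\hat{P}_Z) = \log|\Acal|-H(\hat{P}_Z)-R+[-g(\hat{P}_Z)-\theta]_+$. The same two-group split, now over $\Vcal$ and with the minimum over the compact set $\Pcal(\Acal)$, yields the formula for $\theta^*$.

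The main obstacle is bookkeeping the vanishing terms so that the membership condition for $\Vcal_n$ matches the statement's $\epsilon(n)$, namely $\epsilon(n)=\epsilon_2(n)-\epsilon_1(n)$ with appropriate sign. A second delicate point is the boundary types with $b_n=0$: there $\varphi_n>0$ iff $\lambda>0$, which is consistent with including them in $\Vcal_n$. I would check this explicitly.
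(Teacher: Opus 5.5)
Your argument is correct and is essentially the paper's. The paper reaches the same type-by-type dichotomy through the identity $[x]_+=\max_{\rho\in[0,1]}\rho x$: $\max_{\rho}b_n/\rho$ is $+\infty$ when $b_n>0$ and equals $b_n$ when $b_n\le 0$, so only types in $\Vcal_n$ constrain $\theta$. The one slip is your closing remark: your own expression for $b_n$ shows that $\epsilon(n)=\epsilon_1(n)+\epsilon_2(n)$, which is how the paper's proof defines it, not $\epsilon_2(n)-\epsilon_1(n)$.
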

\begin{IEEEproof}
	This is inspired by ~\cite[p.~6028]{merhav2014b}.	
	Denote $\epsilon(n) \coloneqq \epsilon_1(n) + \epsilon_2(n)$.
	Note, using \eqref{eq:A-n}, \eqref{eq:B-n} and \eqref{eq:lambda-theta} in \eqref{eq:Phi-n}, that we are looking for the largest value of $\theta \in \R_+$ such that
	\begin{align*}
		&\min_{\hat{P}_{Z} \in \Pcal_n(\Acal)} \log|\Acal| - H(\hat{P}_{Z}) - R + \epsilon(n)\\
		&\hspace{8em}+ \left[ \tilde{g}_n(\hat{P}_{Z}) - \theta - \zeta(n) \right]_+ \ge 0.
	\end{align*}
	Using the identity $\left[x\right]_+ = \max_{\rho\in\left[0,1\right]} \rho x$, this is equivalent to $\forall \hat{P}_{Z} \in \Pcal_n(\Acal),\ 
	\exists \rho\in\left[0,1\right]$ such that
	\begin{align*}
		\log|\Acal| - H(\hat{P}_{Z}) - R + \epsilon(n) + \rho \left( \tilde{g}_n(\hat{P}_{Z}) - \theta - \zeta(n) \right) \ge 0,
	\end{align*}
	or, equivalently,
	\begin{align*}
		\theta \le \tilde{g}_n(\hat{P}_{Z}) - \zeta(n) + \frac{\log|\Acal| - H(\hat{P}_{Z}) - R + \epsilon(n)}{\rho}.
	\end{align*}
	This can also be written as
	\begin{align*}
		\theta \le &\min_{\hat{P}_{Z} \in \Pcal_n(\Acal)} \max_{\rho\in\left[0,1\right]} \tilde{g}_n(\hat{P}_{Z}) - \zeta(n)\\
		&\hspace{4em}+ \frac{\log|\Acal| - H(\hat{P}_{Z}) - R + \epsilon(n)}{\rho}.
	\end{align*}
	Note that
	\begin{align*}
		&\hspace{-1em}\max_{\rho\in\left[0,1\right]} \frac{\log|\Acal| - H(\hat{P}_{Z}) - R + \epsilon(n)}{\rho} \\
		&\hspace{1em}=\begin{cases}
			+\infty, &\log|\Acal| - H(\hat{P}_{Z}) - R + \epsilon(n) > 0,\\
			(\clubsuit), &\log|\Acal| - H(\hat{P}_{Z}) - R + \epsilon(n) \le 0,
		\end{cases}
	\end{align*}
	where $(\clubsuit) = \log|\Acal| - H(\hat{P}_{Z}) - R + \epsilon(n)$.
	Therefore, the minimum can only happen in the second case, and the largest value of $\theta$ such that $\Phi_n(\theta) \ge 0$ is
	\begin{align*}
		\theta_n^{*}
		=\min_{\hat{P}_{Z} \in \Vcal_n} \tilde{g}_n(\hat{P}_{Z}) - \zeta(n)
		+ \log|\Acal| - H(\hat{P}_{Z}) - R + \epsilon(n).
	\end{align*}
	The result for $\theta^*$ is analogous.
\end{IEEEproof}

Thus, the definition~\eqref{eq:theta-n-star} suggests splitting the integral in~\eqref{eq:ch3-the-integral} into
\begin{equation} \label{eq:integral-split}
	I(n) = I_1(n) + I_2(n),
\end{equation}
where
\begin{align*}
	I_1(n) \coloneqq
	&\int_{0}^{\bar{\theta}_n}
	e^{n\theta} \\
	&\hspace{-1.7em}\times \Pbb\left( \bigcap_{\hat{P}_{Z} \in \Pcal_n(\Acal)} \left\{ N_n(\hat{P}_{Z}) \le e^{n\left( \tilde{g}_n(\hat{P}_{Z}) - \theta - \zeta(n)\right)} \right\} \right) \d \theta
\end{align*}
and
\begin{align*}
	I_2(n) \coloneqq
	&\int_{\bar{\theta}_n}^{\tilde{g}_n(\hat{P}_{\zv_1}) + \eta(n)}
	e^{n\theta} \\
	&\times \Pbb\left( \bigcap_{\hat{P}_{Z} \in \Pcal_n(\Acal)} \left\{ N_n(\hat{P}_{Z}) \le e^{n\left( \tilde{g}_n(\hat{P}_{Z}) - \theta - \zeta(n)\right)} \right\} \right) \d \theta,
\end{align*}
with $\bar{\theta}_n
\coloneqq \min\left\{\tilde{g}_n(\hat{P}_{\zv_1}) + \eta(n),\ \theta_n^{*} \right\}$. Denote also
\begin{equation} \label{eq:definition-theta-bar-0}
	\bar{\theta} \coloneqq \lim_{n\to\infty} \bar{\theta}_n = \min\left\{-g(\hat{P}_{\zv_1}),\ \theta^{*} \right\}.
\end{equation}

\subsection{Study Each Term}

\begin{proposition}[First term] \label{prop:randomised-first-term}
	Suppose $\bar{\theta}>0$. Then,
	\begin{align}
		I_1(n) \doteq e^{n\bar{\theta}}
	\end{align}
\end{proposition}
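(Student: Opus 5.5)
The plan is to prove matching upper and lower bounds on $I_1(n)$. The upper bound is immediate: the probability inside the integral is at most $1$, so
\begin{equation*}
	I_1(n) \le \int_0^{\bar{\theta}_n} e^{n\theta}\,\d\theta \le \frac{e^{n\bar{\theta}_n}}{n},
\end{equation*}
and since $\bar{\theta}_n \to \bar{\theta}$ (by Assumption~\ref{ass:g-lim-log}, $\epsilon_1(n),\epsilon_2(n),\zeta(n),\eta(n)\to0$, and Proposition~\ref{prop:theta-star}), we get $I_1(n) \dotle e^{n\bar{\theta}}$.

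For the lower bound, fix a small $\epsilon>0$ with $\epsilon<\bar{\theta}$, which is possible because $\bar{\theta}>0$. Restrict the integral to $\theta\in[\bar{\theta}_n-2\epsilon,\ \bar{\theta}_n-\epsilon]$, which lies inside $[0,\bar{\theta}_n]$ for $n$ large. This gives
\begin{equation*}
	I_1(n) \ge \epsilon\, e^{n(\bar{\theta}_n-2\epsilon)} \min_{\theta \le \theta_n^*-\epsilon} \Pbb\Big(\bigcap_{\hat{P}_Z} \big\{N_n(\hat{P}_Z)\le e^{n\lambda_{n,\theta}(\hat{P}_Z)}\big\}\Big),
\end{equation*}
where monotonicity in $\theta$ shows that the worst case is $\theta=\theta^*_n-\epsilon$. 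It then suffices to show that this probability tends to $1$, or at least is not exponentially small. Letting $\epsilon\downarrow0$ afterwards yields $I_1(n)\dotge e^{n\bar{\theta}}$.

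The main step is the probability bound. Bound the complement by a union bound over the polynomially many types:
\begin{equation*}
	\sum_{\hat{P}_Z}\Pbb\big(N_n(\hat{P}_Z) > e^{n\lambda}\big), \qquad N_n(\hat{P}_Z)\sim\Bin\big(e^{nA_n},e^{-nB_n}\big).
\end{equation*}
The standard binomial tail estimates behind \cite[Theorem~4.3]{merhav2025} give the following. If $\lambda\ge0$ and $\lambda > A_n-B_n$, a Chernoff bound with $d(p\|q)\ge p\log(p/q)-p$ makes the tail decay at least like $e^{-n(\lambda-(A_n-B_n)) e^{n\lambda}}$, which is super-exponentially small when $\lambda>0$ and at least exponentially small when $\lambda=0$. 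If $\lambda<0$, the event $N_n > e^{n\lambda}$ means $N_n\ge1$, whose probability is at most $e^{n(A_n-B_n)}$, and this is exponentially small when $B_n>A_n$. For $\theta\le\theta^*_n-\epsilon$, the definition of $\theta^*_n$ together with the fact that $\Phi_n$ is piecewise linear with slope $-1$ or $0$ gives $\Phi_n(\theta)\ge\epsilon'$ for some $\epsilon'>0$ depending on $\epsilon$. I expect this to be the delicate part: the slope-$0$ pieces mean some care is needed to turn $\epsilon$ into a uniform positive margin $\epsilon'$. Once that margin is in hand, every type satisfies $B_n-A_n+[\lambda]_+\ge\epsilon'$, which puts each type in one of the two good cases with a margin. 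Hence each tail is at most $e^{-n\epsilon''}$, and the union over $|\Pcal_n(\Acal)|\doteq1$ types still vanishes. The probability therefore tends to $1$, which completes the lower bound.
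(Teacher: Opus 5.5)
Your structure matches the paper's proof: the trivial upper bound, then a union bound over the polynomially many types with per-type tails controlled by $B_n(\hat{P}_Z)-A_n+[\lambda_{n,\theta}(\hat{P}_Z)]_+$, backing off by $\epsilon$ below $\theta_n^*$, and finally letting $\epsilon\to0$. (The paper uses plain Markov, $\Pbb(N_n(\hat{P}_Z)>e^{n\lambda})\le e^{-n(B_n-A_n+[\lambda]_+)}$, which covers both of your Chernoff cases at once.)

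The step you flag as delicate is the only non-routine point, and the mechanism you propose for it does not work. At fixed $n$, $\Phi_n$ is continuous with $\Phi_n(\theta_n^*)=0$. Piecewise linearity with slopes in $\{-1,0\}$ therefore only says that $\Phi_n(\theta_n^*-\epsilon)$ equals the total length of the slope-$(-1)$ pieces inside $[\theta_n^*-\epsilon,\theta_n^*]$. That is positive, but it has no lower bound in terms of $\epsilon$.

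The dangerous flat pieces come from types with $B_n(\hat{P}_Z)-A_n$ positive but small, i.e.\ entropy just below $\log|\Acal|-R$. For such a type, $\varphi_n(\cdot;\hat{P}_Z)$ is constant at that small height once $\theta\ge\tilde{g}_n(\hat{P}_Z)-\zeta(n)$. Whether such a flat reaches below $\theta_n^*-\epsilon$ depends on how $g$ behaves near the entropy threshold, not on the slope structure.

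Positivity at each fixed $n$ is also not enough here. To beat the $|\Pcal_n(\Acal)|\le e^{n\delta(n)}$ terms of the union bound you need $\Phi_n(\theta)>\delta(n)$ with a margin that survives as $n\to\infty$, e.g.\ a constant $\epsilon'>0$ independent of $n$.

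That margin has to come from the limit function $\Phi$. The paper uses this implicitly, through the uniform convergence $\Phi_n\to\Phi$ it states just before defining $\theta^*$. You only need a one-sided bound, and it requires no regularity of $g$:
\begin{itemize}
\item Since $\Pcal_n(\Acal)\subseteq\Pcal(\Acal)$ and, uniformly in $(\theta,\hat{P}_Z)$, $|\varphi_n(\theta;\hat{P}_Z)-\varphi(\theta;\hat{P}_Z)|\le\epsilon_1(n)+\delta(n)+\zeta(n)+o(1)$ (by $|[x]_+-[y]_+|\le|x-y|$ and Assumption~\ref{ass:g-lim-log}), you get $\Phi_n(\theta)\ge\Phi(\theta)-o(1)$ uniformly in $\theta$.
\item By the definition of $\theta^*$ and monotonicity of $\Phi$, $c_\epsilon\coloneqq\Phi(\theta^*-\epsilon/2)>0$ for $\epsilon<\bar{\theta}\le\theta^*$.
\item Since $\bar{\theta}_n\to\bar{\theta}\le\theta^*$, for $n$ large every $\theta\le\bar{\theta}_n-\epsilon$ satisfies $\theta\le\theta^*-\epsilon/2$. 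Hence $\Phi_n(\theta)\ge c_\epsilon-o(1)\ge c_\epsilon/2$.
\end{itemize}
With $\epsilon'=c_\epsilon/2$, the failure probability is at most of order $e^{-n(c_\epsilon/2-\delta(n))}\to0$, and the rest of your argument goes through.
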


\begin{IEEEproof}
	Since $\bar{\theta}>0$, we have $\bar{\theta}_n > 0$ for $n$ large enough; we work in that regime.
	The upper bound is trivial:
	\begin{align*}
		I_1(n)
		\le \int_{0}^{\bar{\theta}_n} e^{n\theta}~\d\theta
		= \frac{e^{n\bar{\theta}_n}-1}{n}
		\doteq e^{n\bar{\theta}}.
	\end{align*}
	For the lower bound, we use that
	\begin{align}
		&\hspace{-1em}\Pbb\left( \bigcap_{\hat{P}_{Z} \in \Pcal_n(\Acal)} \left\{ N_n(\hat{P}_{Z}) \le e^{n\lambda_\theta(\hat{P}_Z)} \right\} \right) \nonumber\\
		&= 1 - \Pbb\left( \bigcup_{\hat{P}_{Z} \in \Pcal_n(\Acal)} \left\{ N_n(\hat{P}_{Z}) > e^{n\lambda_\theta(\hat{P}_Z)} \right\} \right) \nonumber\\
		&\ge 1 - \sum_{\hat{P}_{Z} \in \Pcal_n(\Acal)} \Pbb\left( N_n(\hat{P}_{Z}) > e^{n\lambda_\theta(\hat{P}_Z)} \right)\nonumber\\
		&\ge 1 - |\Pcal_n(\Acal)| \max_{\hat{P}_{Z} \in \Pcal_n(\Acal)} \Pbb\left( N_n(\hat{P}_{Z}) > e^{n\lambda_\theta(\hat{P}_Z)} \right).
	\label{eq:aux-bound-prob-inter}
	\end{align}
	If $\lambda_{n,\theta}(\hat{P}_{Z}) < 0$, then we have, by Markov's inequality,
	\begin{align*}
		\Pbb\left( N_n(\hat{P}_{Z}) > e^{n \lambda_{n,\theta}(\hat{P}_{Z}) } \right)
		&= \Pbb\left( N_n(\hat{P}_{Z}) \ge 1 \right)\\
		&\le \frac{\E\left[ N_n(\hat{P}_{Z}) \right]}{1}\\
		&= e^{-n\left( B_n(\hat{P}_{Z}) - A_n \right)}.
	\end{align*}
	And, if $\lambda_{n,\theta}(\hat{P}_{Z}) \ge 0$, another application of Markov's inequality yields
	\begin{align*}
		\Pbb\left( N_n(\hat{P}_{Z}) > e^{n \lambda_{n,\theta}(\hat{P}_{Z}) } \right)
		&\le \frac{\E\left[ N_n(\hat{P}_{Z}) \right]}{e^{n \lambda_{n,\theta}(\hat{P}_{Z}) }}\\
		&= e^{-n\left( B_n(\hat{P}_Z) - A_n + \lambda_{n,\theta}(\hat{P}_Z) \right)}.
	\end{align*}
	Together, we have
	\begin{align}
		&\hspace{-1em}\Pbb\left( N_n(\hat{P}_{Z}) > e^{n \lambda_{n,\theta}(\hat{P}_{Z}) } \right) \nonumber\\
		&\hspace{2em}\le e^{-n\left( B_n(\hat{P}_Z) - A_n + \left[\lambda_{n,\theta}(\hat{P}_Z)\right]_+ \right)}. \label{eq:aux-bound-prob-tail-0}
	\end{align}
	Note that $\theta < \bar{\theta}_n$ implies $\theta < \theta_n^{*}$, and, by the definition in~\eqref{eq:theta-n-star}, for $0 \le \theta < \theta_n^*$,
	\begin{align*}
		\min_{\hat{P}_{Z} \in \Pcal_n(\Acal)} B_n(\hat{P}_{Z}) - A_n + \left[ \lambda_{n,\theta}(\hat{P}_{Z}) \right]_+ > 0.
	\end{align*}

	Take $\varepsilon > 0$. We can bound
	\begin{align*}
		&I_1(n)\\
		&\ge
		\int_{0}^{\bar{\theta}_n - \varepsilon}
		e^{n\theta}\\
		&\quad\times \Pbb\left( \bigcap_{\hat{P}_{Z} \in \Pcal_n(\Acal)} \left\{ N_n(\hat{P}_{Z}) \le e^{n\lambda_\theta(\hat{P}_Z)} \right\} \right) \d \theta\\
		&\ge
		\int_{0}^{\bar{\theta}_n - \varepsilon}
		e^{n\theta}\\
		&\hspace{1em}\times \left( 1 - |\Pcal_n(\Acal)|\max_{\hat{P}_{Z} \in \Pcal_n(\Acal)} \Pbb\left( N_n(\hat{P}_{Z}) > e^{n\lambda_\theta(\hat{P}_Z)} \right)\right)
		\d \theta\\
		&\ge
		\int_{0}^{\bar{\theta}_n - \varepsilon}
		e^{n\theta} 
		\left(1 - e^{- n \left( \Phi_n(\theta) - \delta(n) \right) }\right)
		\d \theta\\
		&\ge
		\left(1 - e^{- n \left( \Phi_n(\bar{\theta}_n - \varepsilon) - \delta(n) \right) }\right)
		\frac{e^{n(\bar{\theta}_n-\varepsilon)}-1}{n},
	\end{align*}
	where in the second inequality we used~\eqref{eq:aux-bound-prob-inter}; in the third, \eqref{eq:aux-bound-prob-tail-0}, the notation from~\eqref{eq:Phi-n} and $|\Pcal_n(\Acal)| \le e^{n\delta(n)}$; in the fourth, that $\theta \mapsto \Phi_n(\theta)$ is non-increasing. Noting that ${\Phi_n(\bar{\theta}_n-\epsilon)}>0$, we get $I_1(n) \dotge e^{n(\bar{\theta} - \varepsilon)}$, and the proof is concluded by letting $\varepsilon\to0$.
\end{IEEEproof}

\begin{proposition}[Second term] \label{prop:randomised-second-term}
	We have
	\begin{align}
	0 \le I_2(n) \dotle e^{n\bar{\theta}}.
	\end{align}
\end{proposition}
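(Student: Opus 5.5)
Nonnegativity of $I_2(n)$ is immediate, since the integrand is a product of $e^{n\theta}$ and a probability. For the upper bound, the case $\bar{\theta}_n = \tilde{g}_n(\hat{P}_{\zv_1})+\eta(n)$ gives an empty integration range. So the only case to treat is $\bar{\theta}_n = \theta_n^*$, where the integration runs over $\theta \in [\theta_n^*, \tilde{g}_n(\hat{P}_{\zv_1})+\eta(n)]$.

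On this range I will show that the probability of the intersection decays at least as fast as $e^{-n(\theta-\theta_n^*)}$, up to subexponential factors, so that the integrand is $\dotle e^{n\theta_n^*}$ uniformly in $\theta$.

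First, bound the probability of the intersection by the probability of a single event. Let $\hat{P}^\star\in\Vcal_n$ be the minimiser in the expression for $\theta_n^*$ in Proposition~\ref{prop:theta-star}. Then
\begin{equation*}
\Pbb\Big(\bigcap_{\hat{P}_Z}\{N_n(\hat{P}_Z)\le e^{n\lambda_{n,\theta}(\hat{P}_Z)}\}\Big) \le \Pbb\big(N_n(\hat{P}^\star)\le e^{n\lambda_{n,\theta}(\hat{P}^\star)}\big).
\end{equation*}
Since $H(\hat{P}^\star)\ge \log|\Acal|-R+\epsilon(n)$, we have $A_n-B_n(\hat{P}^\star)\ge 0$, so $N_n(\hat{P}^\star)$ is binomial with mean $\mu_n\coloneqq e^{n(A_n-B_n(\hat{P}^\star))}\ge 1$.

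By the identity defining $\theta_n^*$, $\lambda_{n,\theta}(\hat{P}^\star) = A_n-B_n(\hat{P}^\star) - (\theta-\theta_n^*)$. Hence for $\theta>\theta_n^*$ the threshold $e^{n\lambda_{n,\theta}(\hat{P}^\star)}$ lies a factor $e^{-n(\theta-\theta_n^*)}$ below the mean $\mu_n$.

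Next, apply the lower-tail Chernoff bound $\Pbb(N\le a)\le \exp(-\mu\, d(a/\mu))$ with $d(x)=x\log x - x+1$, or equivalently the binary-divergence bound used earlier. This gives
\begin{equation*}
\Pbb\big(N_n(\hat{P}^\star)\le \mu_n e^{-n t}\big)\le \exp\big(-\mu_n(1-e^{-nt}(1+nt))\big), \qquad t=\theta-\theta_n^*.
\end{equation*}
This bound is doubly exponentially small once $\mu_n$ is exponentially large and $nt$ is bounded away from $0$, but it is not useful when $\mu_n\doteq 1$, i.e.\ when $H(\hat{P}^\star)$ is near $\log|\Acal|-R$.

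I expect this boundary case to be the main obstacle. My plan is to split the $\theta$-range. For $\theta\le \theta_n^*+\varepsilon$, bound the probability by $1$, which contributes at most $e^{n(\theta_n^*+\varepsilon)}$. For $\theta>\theta_n^*+\varepsilon$, handle the boundary case as follows. If the threshold $e^{n\lambda_{n,\theta}(\hat{P}^\star)}$ is below $1$, the event is $\{N_n(\hat{P}^\star)=0\}$, whose probability is $\approx e^{-\mu_n}$. This does not decay in $\theta$ when $\mu_n$ is of order one, so a single type $\hat{P}^\star$ is not enough there.

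I would then use a different type. Perturb to a type $\hat{P}'$ with $H(\hat{P}')$ slightly larger than $H(\hat{P}^\star)$, so that its mean $e^{n(A_n-B_n(\hat{P}'))}$ is exponentially large. By the continuity of $g$ and $H$ (uniform convergence of $\Phi_n$ to $\Phi$), the corresponding crossing value $\theta'$ satisfies $\theta'\le\theta_n^*+\varepsilon$ for large $n$. Applying the Chernoff bound to $\hat{P}'$ then yields a doubly exponentially small probability for all $\theta\ge\theta_n^*+\varepsilon$.

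Finally, combine the two parts. Since the upper limit $\tilde{g}_n(\hat{P}_{\zv_1})+\eta(n)$ is bounded, the $\theta>\theta_n^*+\varepsilon$ part contributes a bounded factor times a doubly exponentially small quantity, which is negligible. Therefore $I_2(n)\dotle e^{n(\theta_n^*+\varepsilon)}$. Because $\bar{\theta}_n=\theta_n^*$ in this case, $\theta_n^*\to\bar{\theta}$, and letting $\varepsilon\to 0$ gives $I_2(n)\dotle e^{n\bar{\theta}}$.
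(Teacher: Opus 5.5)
Your proof is correct and follows the same route as the paper. Both arguments dispose of the empty-range case, bound the integrand trivially on $[\theta_n^*,\theta_n^*+\varepsilon]$, and beyond that point bound the intersection by a single-type lower-tail event, then apply the Chernoff bound with $d(p\|q)\ge p\log\frac{p}{q}+q-p$ to obtain super-exponential decay. Your extra step of switching to a type $\hat{P}'$ whose entropy is a fixed amount above $\log|\Acal|-R$ (using continuity of $g$, which the paper also tacitly assumes) usefully makes explicit why the witnessing type has an exponentially large mean count. The paper's proof glosses over this, since for each $\theta>\theta_n^*$ it only takes some type with $A_n>B_n(\hat{P}'_Z)$. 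The point matters because for the matched metric below capacity the minimiser lies exactly on the boundary $H=\log|\Acal|-R$.
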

\begin{IEEEproof}	
	The lower bound $I_2(n) \ge 0$ is immediate.
	If $\tilde{g}_n(\hat{P}_{\zv_1}) + \eta(n) \le \theta_n^{*}$, then $I_2(n) = 0$, and we are done. So, to complete the proof, in the following we suppose $\tilde{g}_n(\hat{P}_{\zv_1}) + \eta(n) > \theta_n^{*}$. 
	For $\theta > \theta_n^{*}$, the definition of $\theta_n^*$ in~\eqref{eq:theta-n-star} implies that 
	\begin{align*}
		\min_{\hat{P}_{Z} \in \Pcal_n(\Acal)} B_n(\hat{P}_{Z}) - A_n + \left[ \lambda_{n,\theta}(\hat{P}_{Z}) \right]_+ < 0,
	\end{align*}
	or, equivalently, there exists $\hat{P}'_{Z} \in \Pcal_n(\Acal)$ such that
	\begin{align*}
		\max\left\{ B_n(\hat{P}'_{Z}) - A_n,\ 
		B_n(\hat{P}'_{Z}) - A_n + \lambda_{n,\theta}(\hat{P}'_{Z}) \right\} < 0,
	\end{align*}
	that is, such that $A_n >  B_n(\hat{P}'_{Z})$ and $A_n > B_n(\hat{P}'_{Z}) + \lambda_{n,\theta}(\hat{P}'_{Z})$.
	Take $\varepsilon > 0$. We can bound
	\begin{align*}
		I_2(n)
		&= \int_{\theta_n^{*}}^{\tilde{g}_n(\hat{P}_{\zv_1}) + \eta(n)} e^{n\theta}\\
		&\hspace{2em}\times \Pbb\left( \bigcap_{\hat{P}_{Z} \in \Pcal_n(\Acal)} \left\{ N_n(\hat{P}_{Z}) \le e^{n\lambda_{n,\theta}(\hat{P}_Z)} \right\} \right) \d \theta\\
		&\le \int_{\theta_n^{*}}^{\tilde{g}_n(\hat{P}_{\zv_1}) + \eta(n)} e^{n\theta} \cdot
		\Pbb\left( N_n(\hat{P}'_{Z}) \le e^{n \lambda_{n,\theta}(\hat{P}'_{Z})} \right) \d \theta\\
		&\le
		\underbrace{\int_{\theta_n^{*}}^{\theta_n^{*}+\varepsilon} e^{n\theta} \d \theta}_{I_{2,a}(n)}\\
		&\quad+
		\underbrace{\int_{\theta_n^{*}+\varepsilon}^{\tilde{g}_n(\hat{P}_{\zv_1}) + \eta(n)} e^{n\theta} \cdot \Pbb\left( N_n(\hat{P}'_{Z}) \le e^{n \lambda_{n,\theta}(\hat{P}'_{Z})} \right) \d \theta}_{I_{2,b}(n)}.
	\end{align*}
	
	Applying Chernoff's bound and using that $d(p\|q) \ge p \log\frac{p}{q}+q-p$ \cite[pp.~166-167]{merhav2010}, we get	
	\begin{align*}
		&\hspace{-0em}\Pbb\left( {N}_n(\hat{P}'_Z) \le e^{n \lambda_\theta(\hat{P}'_Z)} \right)\\
		&\le \exp \left( -e^{n A_n} d\left( e^{-n({A}_n- \lambda_\theta(\hat{P}'_Z))} \middle\| e^{-n{B}_n(\hat{P}'_Z)} \right) \right)\\
		&\le \exp \left( -e^{n\left({A}_n - {B}_n(\hat{P}'_Z)\right)}  \right)\\
		&\hspace{0.5em}\times\exp \left( n e^{n \lambda_{n,\theta}(\hat{P}'_Z) } \left( {A}_n - {B}_n(\hat{P}'_Z) - \lambda_{n,\theta}(\hat{P}'_Z) + \frac{1}{n} \right)  \right),
	\end{align*}
	which decays super-exponentially in the region $\theta > \theta_n^*$, and is thus responsible for the super exponential decay of ${I_{2,b}(n) \to 0}$ (note that the length of the integration interval goes to a constant).
	The first term is $I_{2,a}(n) \doteq e^{n(\bar{\theta} + \varepsilon)}$ and dominates the expression of $I_2(n)$. To conclude, take $\varepsilon\to0$.
\end{IEEEproof}

\subsection{Conclude}

\begin{IEEEproof}[Proof of Theorem~\ref{thm:complexity-exponent-randomised-mismatched}]
	From \eqref{eq:integral-split}, Propositions~\ref{prop:theta-star}, \ref{prop:randomised-first-term} and \ref{prop:randomised-second-term}, we have, for each fixed $\zv_1\in\Acal^n$ such that $\bar{\theta}>0$,
	\begin{align*}
		I(n)
		&= I_1(n) + I_2(n)\\
		&\doteq e^{n \min\left\{ -g(\hat{P}_{\zv_1}),\ \log|\Acal|-R+ F_3(R;g)  \right\}}.
	\end{align*}
	If $\bar{\theta}=0$, then $I_1(n) \doteq 0$ and $I_2(n) \doteq 1$, so the above asymptotic equality holds as well.
	From Proposition~\ref{prop:bounding-integral} and using the method of types to average over $\Zv_1 \sim P_{\Zv}$, we obtain
	\begin{align*}
		&\E\left[ \frac{1}{Q_u(\Zv_1) + \sum_{m'=2}^{M} Q_u(\Zv_{m'})} \right]\\
		&\doteq \sum_{\zv_1\in\Acal^n} P_{\Zv}(\zv_1) e^{n \min\left\{ -{g}(\hat{P}_{\zv_1}),\ \log|\Acal|-R+ F_3(R;{g})  \right\}}\\
		&\doteq \sum_{\hat{P}_Z \in \Pcal_n(\Acal)} e^{-nD(\hat{P}_Z\|P_Z)} e^{n \min\left\{ -{g}(\hat{P}_{Z}),\ \log|\Acal|-R+ F_3(R;{g})  \right\}}\\
		&\doteq  e^{n  \min_{\hat{P}_Z \in \Pcal(\Acal)} \left( \min\left\{ -{g}(\hat{P}_{Z}),\ \log|\Acal|-R+ F_3(R;{g}) \right\} - D(\hat{P}_Z\|P_Z) \right)}.
	\end{align*}
	Finally, we invoke Lemma~\ref{prop:ch3-replace-union-sum} with the above result to conclude.
\end{IEEEproof}

\section{Proof of Theorem~\ref{thm:complexity-exponent-randomised-alpha-tilted}}
\label{app:proof-complexity-exponent-randomised-alpha-tilted}

\begin{IEEEproof}[Proof of Theorem~\ref{thm:complexity-exponent-randomised-alpha-tilted}]
	The expression \eqref{eq:complexity-exponent-randomised-alpha-tilted} is directly obtained by using~\eqref{eq:decoding-metric-alpha-tilted} in Theorem~\ref{thm:complexity-exponent-randomised-mismatched}. We now prove~\eqref{eq:complexity-exponent-randomised-alpha-tilted-bis}.
	An application of Lemma~\ref{lem:projection-reduction} to~\eqref{eq:complexity-exponent-randomised-alpha-tilted} reveals that
	\begin{align*}
		F_r(R;{g}_{\alpha})
		= \sup_{\beta\in\R}
		&\min\left\{
		H\big( \tilde{P}_{\beta} \| \tilde{P}_{\alpha}\big), \log|\Acal| - R + F_3(R;g_{\alpha})
		\right\}\\
		&- D\big( \tilde{P}_{\beta} \| P_Z \big).
	\end{align*}
	Using the notation from Appendix~\ref{app:properties-tilted-distributions}, and introducing $\nu_\alpha(R) \coloneqq \log|\Acal| - R + F_3(R;g_{\alpha})$, we have $F_r(R;{g}_{\alpha}) = \max_{\beta\in\R} f(\beta)$, where
	\begin{align*}
		f(\beta)
		&\coloneqq
		\min\left\{
		-\alpha\psi_P'(\beta)+\psi_P(\alpha),\ \nu_{\alpha}(R)
		\right\}\\
		&\quad- (\beta-1)\psi_P'(\beta) + \psi_P(\beta)\\
		&=\begin{cases}
			f_1(\beta), &\beta\in\Rcal_1,\\
			f_2(\beta), &\beta\in\Rcal_2,
		\end{cases}
	\end{align*}
	with
	\begin{align*}
		f_1(\beta) &\coloneqq (1-\alpha-\beta) \psi_P'(\beta)+\psi_P(\alpha) + \psi_P(\beta),\\
		f_2(\beta) &\coloneqq \nu_{\alpha}(R) - (\beta-1)\psi_P'(\beta) + \psi_P(\beta),\\
		\Rcal_1 &\coloneqq \left\{ \beta\in\R \suchthat 
		\alpha\psi_P'(\beta) \ge \psi_P(\alpha) - \nu_{\alpha}(R) \right\},
	\end{align*}
	and $\Rcal_2 \coloneqq \R \setminus \Rcal_1$.
	Since $\beta \mapsto \psi_P'(\beta)$ is increasing (Proposition~\ref{prop:properties-psi}, item~2), the regions take the form of intervals $\Rcal_1 = \left[\beta_{\alpha,R},\  +\infty \right[$ and $\Rcal_2 = \left]-\infty,\  \beta_{\alpha,R}\right[$, where $\beta_{\alpha,R}\coloneqq \inf_{\beta\in\R} \Rcal_1$ is such that
	\begin{equation} \label{eq:def-beta-alpha-r}
		\alpha\psi'_P(\beta_{\alpha,R}) = \psi_P(\alpha) - \nu_{\alpha}(R).
	\end{equation}
	Studying the derivatives, we deduce that $f_1$ is increasing for $\beta\le1-\alpha$ and decreasing elsewhere; $f_2$ is increasing for $\beta\le1$ and decreasing elsewhere.
	We can identify the maximiser~$\beta^{\star}$ of $f$, by considering three cases.
	
	\begin{itemize}
		\item Regime I:  $\beta_{a,R} \le 1 - \alpha$. For $\beta\le\beta_{\alpha,R}$, $f=f_2$ and is increasing. For $\beta\ge\beta_{\alpha,R}$, $f=f_1$, which increases until attaining the maximum at $\beta=1-\alpha$, and then decreases. Thus, in this regime, $\beta^{\star}=1-\alpha$, and
		\begin{equation*}
			f(\beta^{\star}) = f_1(1-\alpha)
			= \psi_P(\alpha) + \psi_P(1-\alpha).
		\end{equation*}
		
		\item Regime II: $1-\alpha \le \beta_{\alpha,R} \le 1$. For $\beta\le\beta_{\alpha,R}$, $f=f_2$ and is increasing. For $\beta\ge\beta_{\alpha,R}$, $f=f_1$ and is decreasing. Thus, the maximum is attained at $\beta^{\star} = \beta_{\alpha,R}$, with
		\begin{align*}
			f(\beta^{\star})
			&= f_1(\beta_{\alpha,R})\\
			&= (1-\alpha-\beta_{\alpha,R})\psi_P'(\beta_{\alpha,R})
			+ \psi_P(\alpha) + \psi_P(\beta_{\alpha,R}).
		\end{align*}
		
		\item Regime III: $\beta_{\alpha,R} \ge 1$. For $\beta\le\beta_{\alpha,R}$, $f=f_2$ and attains the maximum at $\beta=1$. For $\beta\ge\beta_{\alpha,R}$, $f=f_1$ and is decreasing. The maximum is thus at $\beta^{\star}=1$, and
		\begin{equation*}
			f(\beta^{\star})
			= f_2(1)
			= \nu_{\alpha}(R).
		\end{equation*}
	\end{itemize}
	
	The three equations above are equivalent to~\eqref{eq:complexity-exponent-randomised-alpha-tilted-bis}. To characterise the regimes, we note that $\beta \le \beta_{\alpha,R} \iff \psi'_{P}(\beta) \le \psi'_P(\beta_{\alpha,R})$, as $\beta \mapsto \psi_P'(\beta)$ is increasing. So we have the following.
	
	\begin{itemize}
		\item Regime I: $\beta_{\alpha,R} \le 1-\alpha \iff \alpha \psi_P'(1-\alpha) \ge \alpha \psi_P'(\beta_{\alpha,R}) = \psi_P(\alpha) - \nu_\alpha(R)$.
		
		\item Regime II: $1-\alpha \le \beta_{\alpha,R} \le 1
		\iff \alpha \psi_P'(1-\alpha) \le \psi_P(\alpha) - \nu_\alpha(R) \le \alpha \psi_P'(1)$.
		
		\item Regime III: $\beta_{\alpha,R} \ge 1
		\iff\psi_P(\alpha) - \nu_\alpha(R) \ge \alpha \psi_P'(1)$.
	\end{itemize}
	These are equivalent to~\eqref{eq:complexity-exponent-randomised-alpha-tilted-bis-regimes}, and the proof is concluded.
\end{IEEEproof}

\section{Proof of Theorem~\ref{thm:randomised-complexity-alpha-1}}
\label{app:proof-alpha-1-is-not-optimal}

\begin{lemma}[{\hspace{0.0001em}\cite[Exercise~2.14]{csiszar2011}}] \label{lem:minimiser-Esp}
	Let $\alpha>0$. The minimiser~$Q^\star$ of
	\begin{equation}
		\min_{Q \in\Pcal(\Acal) \colon H(Q) = \log|\Acal|-R }D(Q \| \tilde{P}_{\alpha})
	\end{equation}
	is unique and of the form $Q^\star = \tilde{P}_{\beta}$, with $\beta>0$ satisfying $H(\tilde{P}_{\beta}) = \log|\Acal|-R$.
\end{lemma}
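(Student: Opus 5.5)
The plan is a Lagrangian (I-projection) argument. It is convenient to use the identity $D(Q\|\tilde{P}_{\alpha}) = -H(Q) + \alpha H(Q\|P_Z) + \psi_P(\alpha)$, which follows from item~5 of Proposition~\ref{prop:properties-psi}. On the constraint set $\{Q \colon H(Q) = \log|\Acal|-R\}$ the term $-H(Q)$ is constant. Since $\alpha>0$, minimising $D(Q\|\tilde{P}_{\alpha})$ over this set is therefore equivalent to minimising the linear functional $H(Q\|P_Z) = -\sum_z Q(z)\log P_Z(z)$ subject to $H(Q) = \log|\Acal|-R$.

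This is a maximum-entropy-type problem with a linear objective. It is cleaner to relax the equality to $H(Q)\ge \log|\Acal|-R$, which makes the feasible set convex since $H$ is concave. Minimising a linear functional over this convex set, the optimum lies on the boundary $H(Q)=\log|\Acal|-R$ whenever the unconstrained minimiser is infeasible. The unconstrained minimiser is a point mass on $\argmax_z P_Z(z)$, with entropy $0 < \log|\Acal|-R$ for $R<\log|\Acal|$, so it is infeasible in the relevant regime.

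The KKT conditions for minimising $\sum_z Q(z)(-\log P_Z(z))$ subject to $H(Q)\ge h$ and $\sum_z Q(z)=1$ read $-\log P_Z(z) + \mu(\log Q(z)+1) + \lambda = 0$ with multiplier $\mu\ge 0$. This gives $Q(z)\propto P_Z(z)^{1/\mu}$, that is, $Q=\tilde{P}_{\beta}$ with $\beta = 1/\mu>0$. The boundary cases are handled separately: $\mu=0$ is excluded by infeasibility of the point mass, and $\mu=\infty$ corresponds to the uniform distribution $\beta=0$, which occurs only if $R=0$. Because the objective is linear and the feasible set is strictly convex (entropy is strictly concave), the minimiser is unique. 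The condition $H(\tilde{P}_{\beta})=\log|\Acal|-R$ then pins down $\beta$: by item~4 of Proposition~\ref{prop:properties-psi}, $\frac{\d}{\d\beta}H(\tilde{P}_\beta) = -\beta\psi_P''(\beta)<0$ for $\beta>0$ under Assumption~\ref{ass:Pz-non-extreme}. Hence $H(\tilde{P}_\beta)$ decreases strictly from $\log|\Acal|$ at $\beta=0$ to $H_\infty$-type values as $\beta\to\infty$, and the positive solution is unique.

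An alternative route avoids the KKT analysis. By Lemma~\ref{lem:results-projection}, item~1, $D(Q\|\tilde{P}_{\alpha}) = D(Q\|\Pi) + D(\Pi\|\tilde{P}_\alpha)$, where $\Pi$ is the projection of $Q$ on $\Ttilt(P_Z)$ (note that $\Ttilt(\tilde{P}_\alpha)=\Ttilt(P_Z)$). By item~3, $H(\Pi)\ge H(Q)$. If $H(\Pi)$ strictly exceeds $\log|\Acal|-R$, one needs to slide along the tilted family back to the constraint while decreasing the divergence, and this sliding step is the main obstacle. I would therefore rely primarily on the convex/KKT argument, with the positivity and uniqueness of $\beta$ being the delicate point.
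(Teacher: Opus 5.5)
The paper gives no proof of its own for this lemma (it is quoted from Csisz\'ar--K\"orner, Exercise~2.14), so your argument has to stand alone, and in substance it does. The decisive step is right. On $\{H(Q)=\log|\Acal|-R\}$ the term $-H(Q)$ is constant, so for every $\alpha>0$ you are minimising the linear functional $H(Q\|P_Z)$. You also correctly relax to $H(Q)\ge\log|\Acal|-R$ only \emph{after} this reduction, so the minimiser is independent of $\alpha$.

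Two details need repair.

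First, the unconstrained minimisers of $H(Q\|P_Z)$ are all distributions supported on $\argmax_z P_Z(z)$, and their entropies reach $\log|\argmax_z P_Z(z)|$. This quantity, not $H_\infty(P_Z)$, is the limit of $H(\tilde{P}_\beta)$ as $\beta\to\infty$. Hence both your infeasibility step and the existence of $\beta>0$ require $\log|\argmax_z P_Z(z)|<\log|\Acal|-R<\log|\Acal|$. Outside this range the statement is false: with two tied maximisers and $0<\log|\Acal|-R<\log 2$, the minimisers are neither unique nor tilted. The condition does hold wherever the paper uses the lemma, since there $\log|\Acal|-R>H(\tilde{P}_\alpha)$ and the intermediate value theorem gives $\beta\in\left]0,\alpha\right[$. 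You should state it as a hypothesis.

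Second, reading $Q\propto P_Z^{1/\mu}$ off the stationarity equation uses KKT \emph{necessity}. That needs Slater's condition (the uniform distribution is strictly feasible for $R>0$) and a check that the minimiser has full support. Also, $\{H\ge h\}$ is not strictly convex where it meets faces of the simplex. Uniqueness really comes from strict concavity of $H$ together with the fact that every minimiser lies on $\{H=h\}$.

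A shorter route avoids all of this. By items~1, 4 and~5 of Proposition~\ref{prop:properties-psi}, $D(Q\|\tilde{P}_\alpha)=-H(Q)+\alpha H(Q\|P_Z)+\psi_P(\alpha)$ and $H(\tilde{P}_\beta)=\psi_P(\beta)+\beta H(\tilde{P}_\beta\|P_Z)$. So for every $Q$ with $H(Q)=H(\tilde{P}_\beta)$,
\[
D(Q\|\tilde{P}_\alpha)=D(\tilde{P}_\beta\|\tilde{P}_\alpha)+\frac{\alpha}{\beta}\,D(Q\|\tilde{P}_\beta).
\]
For $\alpha,\beta>0$ this gives optimality and uniqueness at once. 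It is simply your Lagrangian with $\mu=1/\beta$ written in closed form, so KKT sufficiency replaces necessity.

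The same computation dissolves the obstacle in your projection route. The objective on the constraint set is $H(Q\|P_Z)$, which the projection $\tilde{P}_\gamma=\Pi_{\Ttilt(P_Z)}(Q)$ preserves, so nothing has to slide. Since $H(\tilde{P}_\gamma)\ge H(Q)=H(\tilde{P}_\beta)$ and $\gamma\mapsto H(\tilde{P}_\gamma)$ is unimodal with peak at $\gamma=0$, you get $\gamma\le\beta$. Hence $H(Q\|P_Z)=-\psi_P'(\gamma)\ge-\psi_P'(\beta)=H(\tilde{P}_\beta\|P_Z)$, with equality only if $\gamma=\beta$. In that case items~2--3 of Lemma~\ref{lem:results-projection} force $Q=\tilde{P}_\beta$.
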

\begin{IEEEproof}[Proof of Theorem~\ref{thm:randomised-complexity-alpha-1}]
	We specialise the result and derivation of Theorem~\ref{thm:complexity-exponent-randomised-alpha-tilted} (see Appendix~\ref{app:proof-complexity-exponent-randomised-alpha-tilted}) to $\alpha=1$. We have $\nu_{1}(R) = \log|\Acal| - R + F_3(R;g_1)$, with $F_3(R;g_1) = E_{\sp}(R)$. Note that $\beta_{1,R}$ satisfies $\psi_P'(\beta_{1,R}) = - \log|\Acal| + R - E_{\sp}(R)$.
	
	\begin{itemize}
		\item Regime~I: $\beta_{1,R}\le0$ and $F_r(R;g_1) = \log|\Acal|$.
		
		\item Regime~II: $0\le \beta_{1,R} \le 1$ and $F_r(R;g_1) = \psi_P(\beta_{1,R}) - \beta_{1,R}\psi_P'(\beta_{1,R})$.
		
		\item Regime~III: $\beta_{1,R} \ge 1$ and $F_r(R;g_1) = \log|\Acal| - R + E_{\sp}(R)$.
	\end{itemize}
	
	First, consider $R \ge C = \log|\Acal| - H(P_Z)$. In this case, $E_{\sp}(R) = 0$, see \eqref{eq:sphere-packing-exponent-property}, and, due to the monotonicity of $\beta\mapsto\psi_P'(\beta)$, we have
	\begin{align*}
		\beta_{1,R} \ge 1
		&\iff \psi_P'(\beta_{1,R}) \ge \psi_P'(1)\\
		&\iff R \ge \log|\Acal| - H(P_Z).
	\end{align*}
	This is always satisfied, so we are in regime~III, in which,
	\begin{equation*}
		F_r(R;g_1) = \log|\Acal| - R.
	\end{equation*}

	Now, consider $0 \le R < C = \log|\Acal| - H(P_Z)$. In this case, $E_{\sp}(R) = \min_{Q \colon H(Q)=\log|\Acal|-R} D(Q \| P_Z)$, and by Lemma~\ref{lem:minimiser-Esp}, $E_{\sp}(R) = D(\tilde{P}_{\alpha_R} \| P_Z)$, with $\alpha_R>0$ such that $H(\tilde{P}_{\alpha_R}) = \log|\Acal| - R$. Note that
	\begin{align*}
		\psi_P'(\beta_{1,R})
		&= - \log|\Acal| + R - D(\tilde{P}_{\alpha_R}\|P_Z)\\
		&= -H(\tilde{P}_{\alpha_R}) - D(\tilde{P}_{\alpha_R}\|P_Z) = \psi_P'(\alpha_R).
	\end{align*}
	By the monotonicity of $\alpha\mapsto\psi_P'(\alpha)$, we deduce that $\beta_{1,R} = \alpha_R$. We show that in this case we are in regime~II: on the one hand, $\alpha_R\ge0$ by Lemma~\ref{lem:minimiser-Esp}; on the other hand, since $R<C$, we have $H(\tilde{P}_{\alpha_R}) = \log|\Acal| - R \ge H(P_Z) = H(\tilde{P}_1)$, so $\alpha_R \le 1$. In that regime,
	\begin{align*}
		F_r(R;g_1)
		&= \psi_P(\beta_{1,R}) - \beta_{1,R}\psi_P'(\beta_{1,R})\\
		&= H(\tilde{P}_{\beta_{1,R}})
		= H(\tilde{P}_{\alpha_R})\\
		&= \log|\Acal|-R.
	\end{align*}
	
\end{IEEEproof}

\section{Proof of Theorem~\ref{thm:optimal-alpha}}
\label{app:proof-optimal-alpha}

\begin{IEEEproof}[Proof of Theorem~\ref{thm:optimal-alpha}]
	Fix a rate $R>0$. We will study the function $\alpha \mapsto F_{r}(R,{g}_{\alpha})$. Call region~A the set $\big\{ {\alpha>0} \suchthat R \ge \log|\Acal| - H(\tilde{P}_{\alpha}) \big\}$ and region~B its complementary. We adopt the notations of Appendix~\ref{app:properties-tilted-distributions} and $\nu_{\alpha}(R) \coloneqq \log|\Acal| - R + F_3(R;{g}_{\alpha})$.
	
	\subsection{Study of Region A} First, we show that $\alpha \mapsto F_{r}(R,{g}_{\alpha})$ is non-increasing in region A. Note that, in this region, $F_3(R;g_{\alpha}) = 0$ (see~\eqref{eq:sphere-packing-exponent-property}). We study the three regimes described in~\eqref{eq:complexity-exponent-randomised-alpha-tilted-bis-regimes}.
	\begin{itemize}
		\item Regime I: In this case, we have
		\begin{equation}
			F_r(R;{g}_{\alpha})
			= \psi_P(\alpha) + \psi_P(1-\alpha).
		\end{equation}
		In region~A, we have $R \ge \log|\Acal| + \alpha\psi'_P(\alpha) - \psi_P(\alpha)$ (Proposition~\ref{prop:properties-psi}, item 4); in Regime I, we have $\log|\Acal|-R \ge \psi_P(\alpha) - \alpha\psi_P'(1-\alpha)$. Summing the two inequalities, we get $\psi_P'(1-\alpha) \ge \psi_P'(\alpha)$.
		So, taking the derivative, we get
		\begin{equation*}
			\frac{\d}{\d\alpha}F_r(R;{g}_{\alpha})
			= \psi'_P(\alpha) - \psi'_P(1-\alpha) \le 0.
		\end{equation*}
		Note that, since $\alpha\mapsto\psi_P'(\alpha)$ is increasing, this also implies $\alpha \le 1/2$.
		
		\item Regime II: In this case, we have
		\begin{align}
			F_r(R;{g}_{\alpha})
			&= \psi_P(\alpha) + \psi_P(\beta_{\alpha,R}) \nonumber\\
			&\quad+ (1-\alpha-\beta_{\alpha,R})\psi_P'(\beta_{\alpha,R}).
		\end{align}
		Recall that, in this region, $\beta_{\alpha,R}$ satisfies $\alpha\psi_P'(\beta_{\alpha,R}) = \psi_P(\alpha) - \log|\Acal| + R$; differentiating that expression with respect to~$\alpha$, we get
		\begin{align*}
			\beta_{\alpha,R}' = \frac{\psi_P'(\alpha) - \psi_P'(\beta_{\alpha,R})}{\alpha \psi_P''(\beta_{\alpha,R})}.
		\end{align*}
		We can now differentiate and use the equation above to get
		\begin{align*}
			\frac{\d}{\d\alpha}F_r(R;{g}_{\alpha})
			&= \frac{1-\beta_{\alpha,R}}{\alpha} \left( \psi'_P(\alpha) - \psi_P'(\beta_{\alpha,R}) \right).
		\end{align*}
		From the condition of region A, we have
		\begin{align*}
			&R
			\ge \log|\Acal| + \alpha \psi_P'(\alpha) - \psi_P(\alpha)\\
			&\iff \alpha \psi_P'(\alpha) \le \psi_P(\alpha) - \log|\Acal| + R = \alpha \psi'_P(\beta_{\alpha,R}),
		\end{align*}
		where we used the property of $\beta_{\alpha,R}$ used above. This shows that $ \psi'_P(\alpha) - \psi_P'(\beta_{\alpha,R}) \le 0$. Moreover, $\alpha>0$, and, in regime II, $\beta_{\alpha,R} \le 1$, so we conclude that $\frac{\d}{\d\alpha}F_r(R;g_{\alpha}) \le 0$.
		
		\item Regime III: In this case,
		\begin{equation}
			F_r(R;{g}_{\alpha})
			= \log|\Acal|-R,
		\end{equation}
		which is constant.
	\end{itemize}
	
	\subsection{Study of Region B} We now study what happens in region B. In this case, $R < \log|\Acal| - H(\tilde{P}_{\alpha})$ and $F_3(R;{g}_\alpha) = \min_{Q \colon H(Q) = \log|\Acal|-R }D(Q \| \tilde{P}_{\alpha})$ (see~\eqref{eq:sphere-packing-exponent-property}).
	From Lemma~\ref{lem:minimiser-Esp} we know that $F_3(R;\tilde{g}_{\alpha}) = D(\tilde{P}_{\alpha_R} \| \tilde{P}_{\alpha})$, with $\alpha_R>0$ satisfying $H(\tilde{P}_{\alpha_R}) = \log|\Acal|-R$. Since $\alpha\mapsto H(\tilde{P}_{\alpha})$ is decreasing for $\alpha>0$, region B can be equivalently characterised by $H(\tilde{P}_{\alpha}) < \log|\Acal| - R \iff \alpha > \alpha_R$.
	
	\begin{itemize}
		\item Regime I: In this case, we have
		\begin{equation} \label{eq:ch3-regionB-regimeI}
			F_r(R;{g}_{\alpha})
			= \psi_P(\alpha) + \psi_P(1-\alpha).
		\end{equation}
		In regime I we have
		\begin{align*}
			\alpha\psi_P'(1-\alpha)
			&\ge \psi_P(\alpha) - \log|\Acal| + R - D(\tilde{P}_{\alpha_R} \| \tilde{P}_{\alpha})\\
			&= \psi_P(\alpha) - H(\tilde{P}_{\alpha_R}) - D(\tilde{P}_{\alpha_R} \| \tilde{P}_{\alpha})\\
			&= \alpha\psi_P'(\alpha_R).
		\end{align*}
		Since $\alpha\mapsto\psi_P(\alpha)$ is increasing and $\alpha>0$, this implies $1-\alpha \ge \alpha_R$. Together with the condition for region~B, we find $\alpha_R < \alpha \le 1-\alpha_R$, that is, this regime corresponds to the interval $\left]\alpha_R,1-\alpha_R\right]$, and it only exists if $\alpha_R\le1/2$ (otherwise regime I is absent from region B).
		If this regime exists, $F_r(R;{g}_{\alpha})$ is minimised at $\alpha=1/2$, which indeed belongs to the interval $\left]\alpha_R,1-\alpha_R\right]$. 
		
		\item Regime II: In this case, we have $F_r(R;{g}_{\alpha})
		= \psi_P(\alpha) + \psi_P(\beta_{\alpha,R}) + (1-\alpha-\beta_{\alpha,R}) \psi_P'(\beta_{\alpha,R})$.
		From the definition of $\beta_{\alpha,R}$, we have, in region B,
		\begin{align*}
			\alpha \psi_P'(\beta_{\alpha,R})
			&= \psi_P(\alpha) - \log|\Acal| + R - D(\tilde{P}_{\alpha_R}\|\tilde{P}_{\alpha})\\
			&= \alpha\psi_P'(\alpha_R).
		\end{align*}
		Therefore, $\beta_{\alpha,R}=\alpha_R$, and
		\begin{equation} \label{eq:ch3-regionB-regimeII}
			F_r(R;{g}_{\alpha})
			= \psi_P(\alpha) + \psi_P(\alpha_R) + (1-\alpha-\alpha_R) \psi_P'(\alpha_R).
		\end{equation}
		Taking the derivative, we get
		\begin{equation*}
			\frac{\d}{\d\alpha} F_r(R;{g}_{\alpha})
			= \psi_P'(\alpha) - \psi_P'(\alpha_R) \ge 0,
		\end{equation*}
		for $\alpha \mapsto \psi_P'(\alpha)$ is increasing and $\alpha>\alpha_R$ in region B.
		
		\item Region III: In this case, we have
		\begin{align}
			F_r(R;{g}_{\alpha})
			&= \log|\Acal|-R + D(\tilde{P}_{\alpha_R}\|\tilde{P}_{\alpha}) \nonumber\\
			&= \log|\Acal|-R+(\alpha_R-\alpha)\psi_P'(\alpha_R) \nonumber\\
			&\quad- \psi_P(\alpha_R) + \psi_P(\alpha). \label{eq:ch3-regionB-regimeIII}
		\end{align}
		Taking the derivative, we have, as in regime~II,
		\begin{equation*}
			\frac{\d}{\d\alpha} F_r(R;\tilde{g}_{\alpha})
			= \psi_P'(\alpha) - \psi_P'(\alpha_R) \ge 0
		\end{equation*}
	\end{itemize}
	
	\subsection{Optimal Complexity Exponent}
	
	We want to minimise $F_r(R;g_{\alpha})$ on $\alpha$. For that, consider two cases. Since $\alpha \mapsto F_r(R;{g}_{\alpha})$ is non-increasing in region~A, the minimum in that region occurs at $\alpha=\alpha_R$.
	\begin{enumerate}
		\item Case $\alpha_R\le 1/2$: regime I in region B exists. The function $\alpha \mapsto F_r(R;{g}_{\alpha})$ is non-increasing in region~A, that is, in $\left[0,\alpha_R\right]$. It attains its minimum at $\alpha=1/2$ in regime I of region B, that is, in $\left]\alpha_R,1-\alpha_R\right]$. After that, in regimes II and III of region B, the function is non-decreasing. Thus, the minimum is $\alpha^\star(R) = 1/2$.
		
		\item Case $\alpha_R > 1/2$: regime I in region B is absent. The function $\alpha \mapsto F_r(R;{g}_{\alpha})$ is non-increasing in $\left[0,\alpha_R\right]$ (region~A), and non-decreasing in $\left]\alpha_R,\infty\right[$ (region~B). The minimum is thus attained at the boundary $\alpha^\star(R) = \alpha_R$.
	\end{enumerate}
	
	Finally, using the monotonicity of $\alpha \mapsto H(\tilde{P}_{\alpha})$ for $\alpha>0$ and the definition of $\alpha_R$, we have that
	\begin{align*}
		\alpha_R \le \frac{1}{2}
		&\iff H(\tilde{P}_{\alpha_R}) \ge H(\tilde{P}_{1/2})\\
		&\iff R \le \log|\Acal|-H(\tilde{P}_{1/2}),
	\end{align*}
	where we recognise the critical rate~\eqref{eq:critical-rate}.
	
	We now deduce the complexity exponent obtained with $\alpha = \alpha^{\star}(R)$. If $\alpha_R \le 1/2$, then $\alpha^{\star}(R) = 1/2$, and, from~\eqref{eq:ch3-regionB-regimeI},
	\begin{equation*}
		F_r(R;g_{\alpha^{\star}(R)})
		= 2 \psi_P(1/2).
	\end{equation*}
	If $\alpha > 1/2$, then $\alpha^{\star}(R) = \alpha_R$, and we are either in regime II or III of region B. From the definition of the regions, region III occurs whenever $\beta_{\alpha,R} \ge 1$. Using~\eqref{eq:def-beta-alpha-r}, we equivalently have
	\begin{align*}
		&\hspace{-2em}\psi_P(\alpha_R) - \nu_{\alpha_R}(R) \ge \alpha_R \psi_P'(1)\\
		&\iff \psi_P(\alpha_R) - H(\tilde{P}_{\alpha_R}) \ge \alpha_R \psi_P'(1)\\
		&\iff \alpha_R \psi_P'(\alpha_R) \ge \alpha_R \psi_P'(1)\\
		&\iff \alpha_R \ge 1.
	\end{align*}
	Thus, from~\eqref{eq:ch3-regionB-regimeII} and \eqref{eq:ch3-regionB-regimeIII}, we have
	\begin{equation*}
		F_r(R;g_{\alpha^{\star}(R)})
		=
		\begin{cases}
			&\hspace{-1em}2\psi_P(1/2), \qquad
			0 \le \alpha_R \le \frac{1}{2},\\
			&\hspace{-1em}2\psi_P(\alpha_R) + (1-2\alpha_R)\psi_P'(\alpha_R),\\ &\hspace{8em}\frac{1}{2} \le \alpha_R \le 1,\\
			&\hspace{-1em}\log|\Acal|-R, \qquad \alpha_R \ge 1.
		\end{cases}
	\end{equation*}
	This coincides with the result of Theorem~\ref{thm:complexity-exponent-deterministic-alpha}. Furthermore, note that, for $\alpha_R \ge 1$, this is the same exponent as if we took $\alpha=1$ (Theorem~\ref{thm:randomised-complexity-alpha-1}), so in that case it is enough to take $\alpha^{\star}(R) = 1$.
	
	\subsection{Optimal Error Exponent}
	
	The optimality with the choice $\alpha = \alpha^{\star}(R)$ follows from Theorem~\ref{thm:error-exponent-alpha-properties} in three cases:
	for $0 \le \alpha_R\le1/2$, $\alpha^{\star}(R) = 1/2$, from item~3;
	for $1/2 \le \alpha_R \le 1$, $\alpha^{\star}(R) = \alpha_R$, from item~4;
	and, for $\alpha_R \ge 1$, $\alpha^{\star}(R) = 1$, from item~2.
	In all cases, we have
	\begin{equation*}
		E_r(R;g_{\alpha^{\star}(R)})
		= E_d(R;g_1).
	\end{equation*}
\end{IEEEproof}

\bibliographystyle{IEEEtran}
\bibliography{references}

\end{document}